\providecommand{\B}{\mathbf}
\providecommand{\BS}[1]{\boldsymbol{#1}}
\providecommand{\R}{\mathrm}
\renewcommand{\Re}{\operatorname{Re}}
\newtheorem{teor}{Theorem}[section]
\newtheorem{lem}{Lemma}[section]
\newtheorem{com}{Comment}[section]
\newcommand{\e}{\mathrm{e}}
\renewcommand{\i}{\textit{i}}
\newcommand{\ii}{\textit{i}}
\newcommand{\w}{\mathrm{w}}
\renewcommand{\d}{\mathrm{d}}
\renewcommand{\c}{\hat c}
\newcommand{\ds}{\displaystyle}
\newcommand{\dsfrac}{\ds\frac}
\renewcommand{\(}{\left(}
\renewcommand{\)}{\right)}
\newcommand{\ol}{\overline}
\renewcommand{\i}{\textit{i}}
\renewcommand{\d}{\mathrm{d}}
\renewcommand{\(}{\left(}
\renewcommand{\)}{\right)}
\renewcommand{\r}{\mathfrak{r}}
\renewcommand{\R}{\mathbb{R}}
\renewcommand{\l}{\mathfrak{l}}
\renewcommand{\Im}{\mbox{Im}}
\renewcommand{\Im}{\mathrm{Im}}
\newenvironment{Proof}
    {\par\noindent{\bf Proof }}
    {\hfill$\scriptstyle\blacksquare$\vskip1cm}
\newenvironment{SketchProof}
    {\par\noindent{\it Sketch of the proof. }}
    {\hfill$\scriptstyle\square$\vskip1cm}
\begin{document}

\begin{frontmatter}



\title{Asymptotics of step-like solutions for the Camassa-Holm equation }


\author{Alexander Minakov}
\address{Mathematical Division, B. Verkin Institute for
Low Temperature Physics\\ and Engineering
of the National Academy of Science of Ukraine\\
47 Lenin Avenue, Kharkiv, 61103, Ukraine
\\
\vskip3mm
Doppler Institute for Mathematical Physics and Applied Mathematics,\\ Czech Technical University in Prague\\
Brehova 7, 11519 Prague, Czech Republic
\\
\vskip3mm Department of
Physics, Faculty of Nuclear Science and Physical
Engineering, \\Czech Technical University in Prague\\
Pohranicni 1288/1, D$\check{e}\check{c}\acute{i}$n, Czech
Republic
\\\vskip3mm
SISSA (International School for Advanced Studies) \\
via Bonomea 265, 34136 Trieste, Italy
\\
\vskip3mm E-mail: minakov.ilt@gmail.com 
}

\begin{abstract}

We study the long-time asymptotics of solution of the Cauchy problem 
for the Camassa-Holm equation with a step-like initial datum.

By using the nonlinear steepest descent method and the so-called $g$-function approach,
we show that the Camassa-Holm equation exhibits a rich
structure of sharply separated regions in the $x,t$-half-plane with qualitatively
different asymptotics, which can be described in terms of a sum of modulated
finite-gap hyperelliptic or elliptic functions and a finite number of solitons.
\end{abstract}

\begin{keyword}
Camassa-Holm equation \sep Riemann-Hilbert problem \sep step-like
initial data \sep $g$-function approach
%

\end{keyword}

\end{frontmatter}

\section{Introduction}\label{sect: Introduction}
We consider the Cauchy problem for the Camassa -- Holm (CH) equation on the
line with a step-like initial datum
\begin{equation}\label{CH}u_t-u_{txx}+2\omega
u_x+3uu_x=2u_xu_{xx}+uu_{xxx},\quad -\infty<x<\infty, \ t>0,
\end{equation}
\vskip-10mm
\begin{equation}\label{init_cond}u(x,0)=u_0(x), \textrm{where}
\end{equation}
\vskip-10mm
\begin{equation}\label{initi_cond_2}
\begin{cases} u_0(x)\to c>0,\quad
x\to-\infty, \qquad \omega>0,\\ u_0(x)\to 0,\quad  x\to+\infty.
\end{cases}
\end{equation}
We consider real-valued solutions $u(x,t)$.
Based on the Riemann-Hilbert problem formalism \cite{M15}, we study here the long-time asymptotics for the solution of the initial value problem.

The Camassa-Holm equation describes the
unidirectional propagation of shallow water waves over a flat
bottom (\cite {Camassa Holm 1993}, \cite{Camassa Holm Hyman 1994})
as well as axially symmetric waves in a hyperelastic rod
(\cite{Dai}). First, it was found using the method of recursion
operators as a bi-Hamiltonian equation with an infinite number of
conserved functionals (\cite{Fokas Fuchssteiner 1981}).

Study of initial-value problems for integrable equations with step-like initial data originates from the seminal papers \cite{GP}, \cite{Kh2}, 
see also \cite{KK2}, \cite{BikN2}, \cite{Bikb1}, \cite{Bikb2}. However, an implementation of the rigorous 
Riemann-Hilbert problem scheme to step-like initial value problems was done only recently in \cite{BV}, \cite{BIK07}, 
and now this area is actively being developed (see  \cite{Kotlyarov_Minakov_2010}, \cite{KM12}, \cite{EGKT},
\cite{Egorova_Gladka_Kotlyarov_Teschl}, \cite{KM15} and references therein). The importance of such studies is that 
similar methods have been
applied in studies of semiclassical problems \cite{BTVZ07}, Painleve functions \cite{BM14}, random matrix models 
\cite{CV07}, asymptotics of the determinant of Fredholm operator with sine kernel \cite{BDIK15}.

We must mention that an alternative way of asymptotic analysis, which does not employ the fact of fully integrability of 
equation,
can be made by using dispersive methods based on Duhamel's formula and Strichartz estimates (cf. \cite{Pego-Weinstein}, \cite{Tzvetkov},
\cite{Mizumachi-Tzvetkov}, \cite{Germain Pusateri Rousset}). 

However, for the CH equation the only asymptotic analysis was done by using Riemann-Hilbert method 
and nonlinear steepest descent method \cite{Shep2007},
\cite{Shep_2008_halh-line}, \cite{Shep_2008}, \cite{Kostenko_Shepelsky_Teschl_2009}, \cite{Shep_Its_2010}. 
This analysis was restricted to the case of vanishing initial data, and the aim of the present paper is to 
generalize those results to the case of step-like initial data.

In the sequel we made the following assumptions on the initial datum and solution of the Cauchy problem
(\ref{CH}), (\ref{init_cond}).

\

\noindent\textit{\textbf{Assumption 1. }We restrict ourselves to the
following class of initial conditions:
}
\begin{equation}\label{m+omega} c>0, \omega>0, \quad m_0(x)+\omega>0,\end{equation} where
$m_0(x):=u_0(x)-u_{0,xx}(x)$. 
\begin{com}In \cite[Lemma 2.1]{M15} it is shown that  Assumption 1 provides 
\begin{equation}\label{m+omega>0}
m(x,t)+\omega>0 
\end{equation}
 for all values of time $t\geq0$, where the function $$m(x,t):=u(x,t)-u_{xx}(x,t)$$ is the so-called 
``momentum'' variable. Our method handles only solutions satisfying this property (\ref{m+omega>0}).

\end{com}

\begin{com}The class of initial data (\ref{initi_cond_2}) is equivalent to the following class of initial data:
$$u_0(x)\to \begin{cases}c_{\l},\quad \rm{ as }\quad  x\to-\infty,\qquad \omega\in\R, \\
c_{\r}, \quad\quad\ \ \rm{ as } \quad  x\to+\infty,
\end{cases}$$ 
where
\begin{equation}\nonumber \frac{c_{\l}+\omega}{c_{\r}+\omega}>0,\quad \dsfrac{m_0(x)+\omega}{c_{\r}+\omega}>0.\end{equation}
This can be seen by using the following change of variables, which transforms a solution of the CH equation into solution of the CH equation (\cite{Grunert_Holden_Raynaud_2011}, p.4): $$(\omega,\
u(x,t))\mapsto(\alpha\omega-\beta,\ v(x,t)=\alpha\, u(x-\beta
t,\alpha t)+\beta).$$ This transformation preserves the quantity
$\dsfrac{c_{\l}+\omega}{c_{\r}+\omega}$ and the function
$\dsfrac{m(x,t)+\omega}{c_{\r}+\omega}.$ 
\end{com}

\noindent\textit{
\textbf{Assumption 2. }
We assume that \begin{equation}\label{cond_m<c}\forall x\in\R:\quad c\geq m_0(x)\end{equation}}
with $c$ from (\ref{initi_cond_2}).
\begin{com}
Lemma \ref{Lemma_properties_a,b,r} (cf.~\cite[Lemma 2.6]{M15}) below provides that in this case the corresponding Wronskian
of the associated Jost solutions does not vanish at the edge of the single spectrum: $W\(\i\c\)\neq0$ (see Section \ref{sect: Preliminaries} for details) 
and hence, the (right) transmission coefficient takes continuous boundary values at the edge of one-fold spectrum $k=\i\c.$
\end{com}

\noindent
\textit{\textbf{Assumption 3. } Moreover, we assume that the initial condition tends to its limits 
as $x\to\pm\infty$ exponentially fast,
\begin{equation}\hskip-7mm\hskip-1mm\int\limits_{\R}\hskip-1mm\e^{C_0|x|}
\hskip-1mm\(\hskip-0.5mm|m(\hskip-0.5mmx,0\hskip-0.5mm)\hskip-0.5mm-\hskip-0.5mmc
H\hskip-0.5mm(\hskip-0.5mm-x\hskip-0.5mm) \hskip-0mm
|\hskip-1mm+\hskip-1mm|m_x(\hskip-0.5mmx,0\hskip-0.5mm)|\hskip-1mm+\hskip-1mm|m_{xx}(\hskip-0.5mmx,0\hskip-0.5mm)|\)\mathrm{d}
x\hskip-1mm<\hskip-1mm\infty, \label{assum3_integr_ineq}\end{equation}
where $C_0>\dsfrac{c}{4(c+\omega)}$ with $c$ from (\ref{initi_cond_2}).
}

\begin{com}
This assumption provides analytic continuation of the corresponding spectral functions $a(k)$, $b(k)$ and reflection coefficient $r(k)=\frac{b(k)}{a(k)}$ into some neighborhood of the contour 
of Riemann-Hilbert problem $\Sigma=\R\cup[\i\c,-\i\c]$ (with $\c$ from (\ref{chat})).
\end{com}

The existence of a \textit{weak} solution to step-like initial
value problems for the Camassa-Holm equation as well as for other
nonlinear systems can be established by using PDE techniques
\cite{Grunert_Holden_Raynaud_2011}.
However, our method is suitable to work only with classical solutions, and hence, we have to assume 
the existence of a global classical solution (see Assumption 4 below). Although we must assume the 
existence of a \textit{classical}
solution to this problem, the advantage of our approach is that it
yields a rigorous asymptotic analysis by using an extension of
the nonlinear steepest-descent method \cite{DZ93}.

\

\noindent
\textit{\textbf{Assumption 4.} Suppose that there exists a global real-valued classical
solution $u(x,t)$ of the CH equation (\ref{CH}), which tends rapidly
 to its limits as $x\rightarrow\pm\infty$, that is, for any
$T\geq 0$
\vskip-4mm\begin{equation}\label{conditions u(x,t)}
\max\limits_{0\leq t\leq
T}\int\limits_{-\infty}^{+\infty}\(1+|x|\)\times\left(|m(x,t)-cH(-x)|+|m_x(x,t)|+|m_{xx}(x,t)|\right)\d x<\infty,\end{equation}
\vskip-10mm \noindent where $H(x)=\left\{\begin{array}{l}1,\quad x\geq0\\0,\quad
x<0\end{array}\right.$ is the Heaviside function.
}
\begin{com}
This assumption establishes the existence of the corresponding Jost solutions for all values of time $t\geq0$ (see Section \ref{sect: Preliminaries} for details).
\end{com}

The existence of solutions of the Cauchy problem (\ref{CH}), (\ref{init_cond}) satisfying Assumptions 1-4 can 
be established by extensions of the methods, used for the case of vanishing initial data 
(see for example \cite{Constantin_2001}). Since our main goal is the asymptotics, we will not consider this issue 
here. On the other hand, we know that the class of solutions satisfying the Assumptions 1-4 is not empty. Indeed, 
by standard method one can check that each function which is obtained from the RH problem in 
Section \ref{sect: Preliminaries} by formulas (\ref{M1*M2}), is a solution of the CH equation (see for example
\cite{Zakharov_Shabat_74_79}, \cite[Theorem 5.2]{Shep_2008_halh-line}).

\textbf{\textcolor{black}{$\bullet$ Main result.}}

The main result is that the $x,t\geq0$ half-plane is divided into several domains with qualitatively different 
asymptotics of solution. Namely, if we look at asymptotics with accuracy up to a decaying term, then we have 3 
different sectors, and if we look for asymptotics with accuracy up to $t^{-1/2}$, then we have 5 different 
sectors (see Figures \ref{Regions_xt_3_5}, \ref{Regions_xt_13_5}, \ref{Regions_xt_01_5}).

This subdivision considerably depends on whether $\dsfrac{c}{\omega}>3$, $1<\dsfrac{c}{\omega}<3$, 
or $0<\dsfrac{c}{\omega}<1.$

Parameters $\zeta_j, j=1,...,4$ depend on the value of $\dsfrac{c}{\omega}$ and are defined in 
(\ref{xi_j_def_beg})-(\ref{xi_j_def_end}).

The asymptotics in the domains marked in yellow in Figures \ref{Regions_xt_3_5}--\ref{Regions_xt_01_5} is 
described by formulas (\ref{elliptic_asymp_x}), (\ref{elliptic_asymp_u}) in Theorem \ref{Teor_Elliptic}, except for the domain $D_{2a}$ in the case $\dsfrac{c}{\omega}>3;$ 
in the latter case the asymptotics is described by formulas (\ref{hyperelliptic_asymp_x}), 
(\ref{hyperelliptic_asymp_u}) in Theorem \ref{Teor_HyperElliptic}.
The asymptotics in the domains marked in purple is described by Theorem \ref{Teor_constant}. Finally, 
the asymptotics in the domains marked in green is described by Theorem \ref{Teor: soliton_ asympt}.

Briefly, asymptotics in "yellow" domains is described by modulated elliptic (genus 1) or hyperelliptic (genus 2)
functions, 
in "purple" domains the leading asymptotic term is the constant $c$ as in (\ref{initi_cond_2}), and in "green"
domains we have solitonic asymptotics 
on a vanishing background.

\begin{figure}
\vskip-0cm\hskip-.cm\includegraphics
[scale=0.6]
{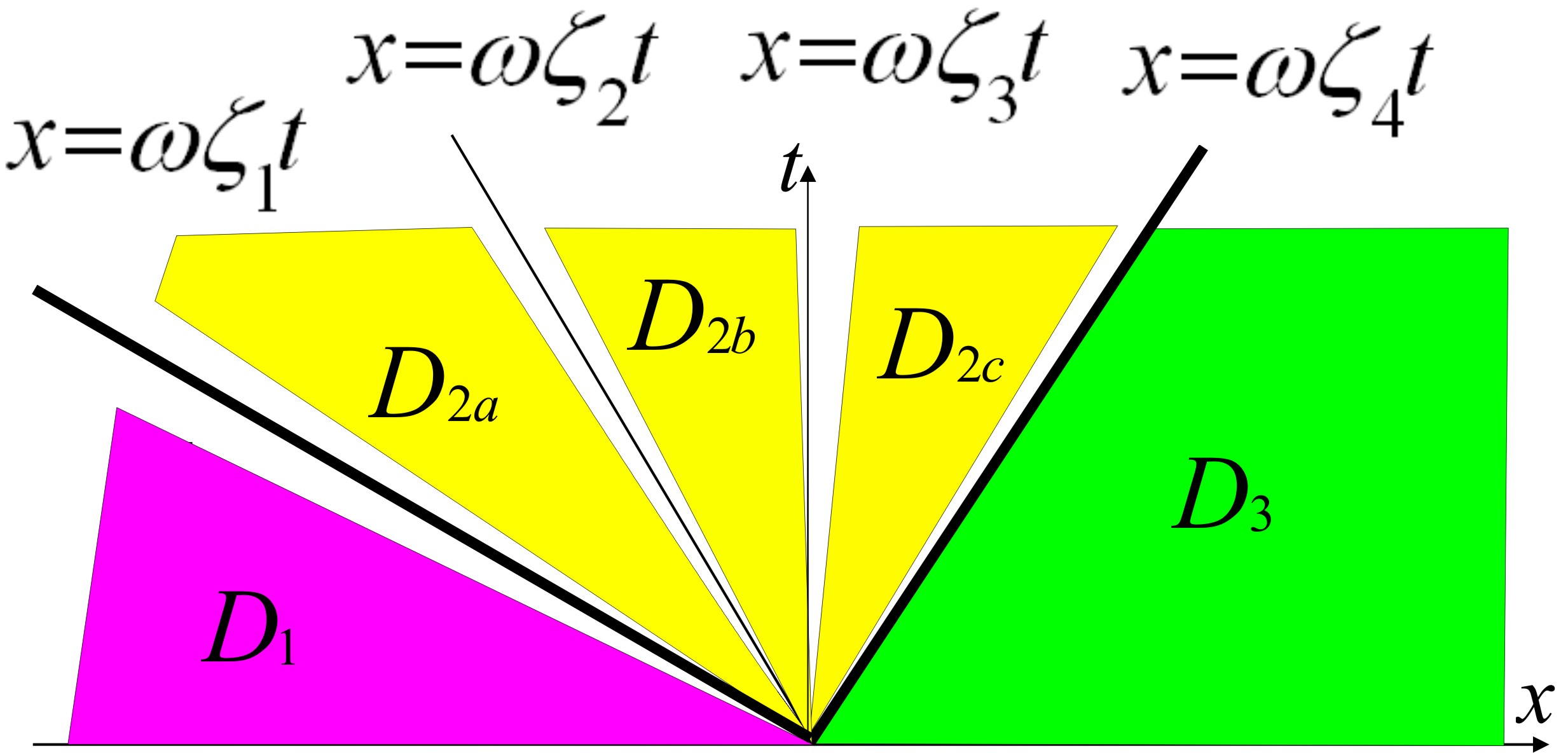}
\caption{Case $\dsfrac{c}{\omega}>3$. Regions in $x,t$-halfplane with qualitatively different asymptotics.}
\label{Regions_xt_3_5}
\end{figure}

\begin{figure}
\vskip-0cm\hskip-.cm\includegraphics
[scale=0.6]
{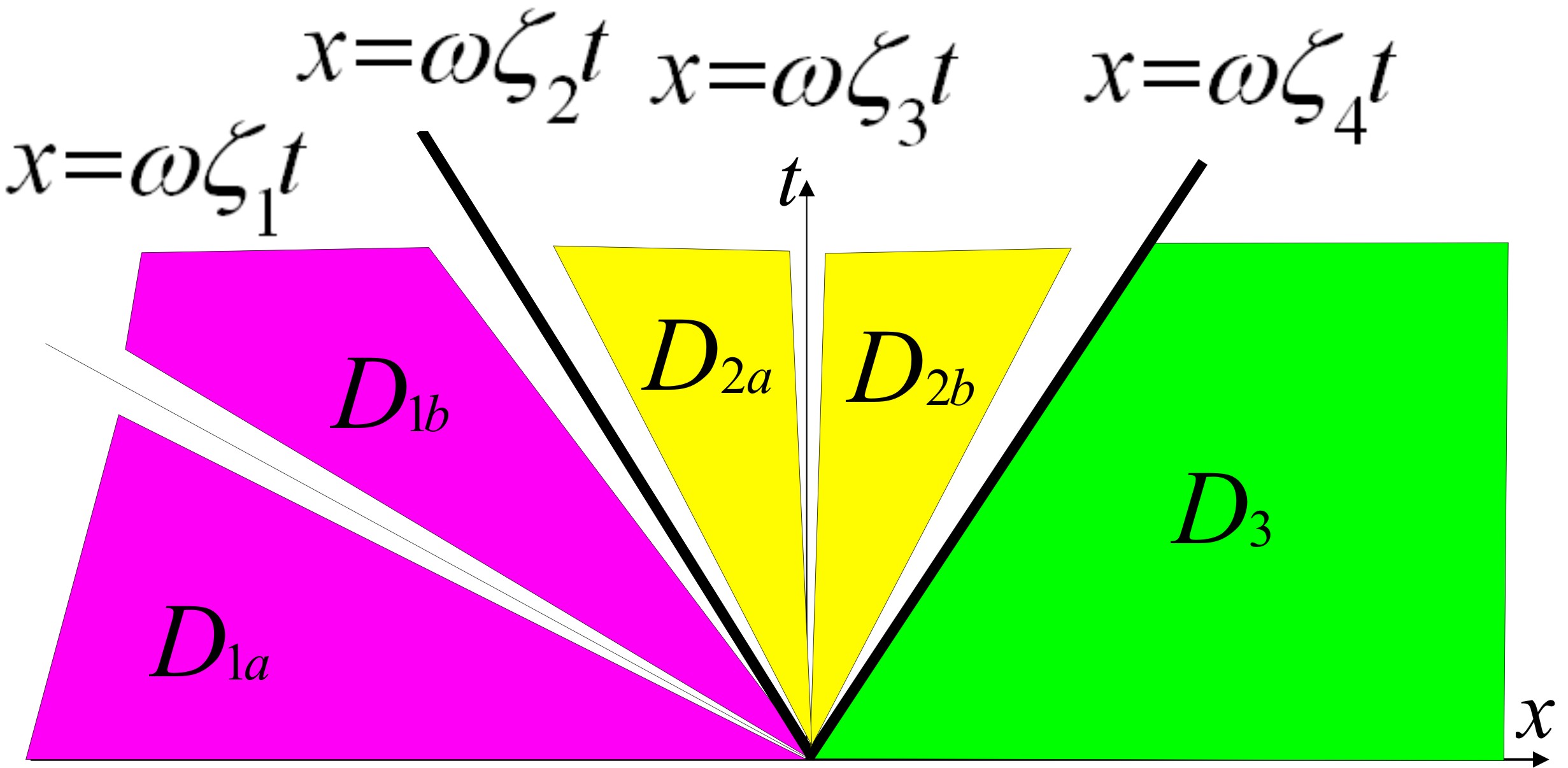}
\caption{Case $1<\dsfrac{c}{\omega}<3$. Regions in $x,t$-halfplane with qualitatively different asymptotics.}
\label{Regions_xt_13_5}
\end{figure}

\begin{figure}
\vskip-0cm\hskip-.cm\includegraphics
[scale=0.6]
{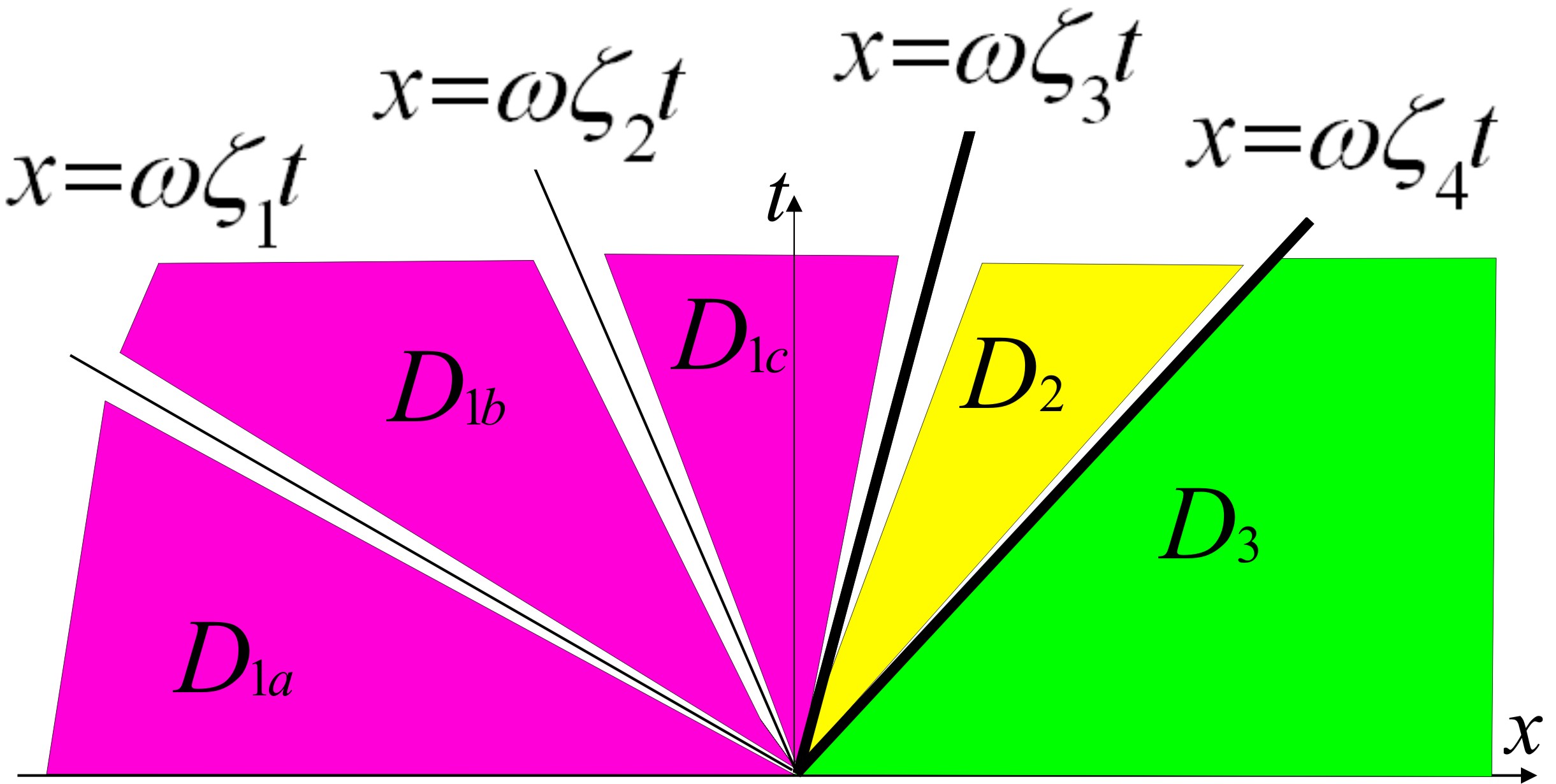}
\caption{Case $0<\dsfrac{c}{\omega}<1$. Regions in $x,t$-halfplane with qualitatively different asymptotics.}
\label{Regions_xt_01_5}
\end{figure}

The paper is organized as follows: in Section \ref{sect: Preliminaries} we list some facts about the Camassa-Holm equation 
and formulate the vector Riemann-Hilbert problem.
In Section \ref{sect_phase function} we describe the corresponding $g$-function approach and the suitable phase functions 
that we use in asymptotic analysis in different regions of the $x,t\geq0$-half-plane. In Sections 
\ref{sect_RH problem transformations}, \ref{sect: Asymptotics} we give a chain of Riemann-Hilbert problem 
transformations, which leads to model problems explicitly solvable in terms of elliptic (genus 1) or 
hyperelliptic (genus 2) functions. This in turn gives us explicit expressions for the asymptotics.

\textbf{Acknowledgements.} 
We thank Iryna Egorova, Dmitry Shepelsky, Robert Buckingham and Svetlana Roudenko 
for useful discussions.

The research has been supported by the project "Support of inter-sectoral mobility and quality enhancement 
of research teams at Czech Technical University in Prague", CZ.1.07/2.3.00/30.0034, sponsored by European Social 
Fund in the Czech Republic."

\section{Preliminaries}\label{sect: Preliminaries}
\subsection{Lax pair, Jost solutions, spectral functions}\label{SubSect: Lax pair, Jost solutions, spectral functions} We begin by recalling
some important results for the Camassa -- Holm equation
\cite{Shep2007}, \cite{Kostenko_Shepelsky_Teschl_2009}, \cite{Constantin_2001}, \cite{Constantin_Gerdikov_Ivanov},
 \cite{M15}.

The starting point for our considerations is the Lax pair
representation: the CH equation is the compatibility condition of
two linear equations
\begin{subequations}\label{Lax_representation_CH}
\begin{eqnarray}\label{Shturm_x-eq}
-\,\varphi''_{xx}+\dsfrac{1}{4}\ \varphi=\lambda\,
\frac{m+\omega}{\omega}\ \varphi\ ,
\\\label{Shturm_t-eq}
\varphi_t=-\(\dsfrac{\omega}{2\lambda}+u\)\varphi'_x+\frac{u_x}{2}\varphi\
,
\end{eqnarray}
\end{subequations}
%
\begin{equation}\label{lambda_k_z}\textstyle\hskip-10.mm\textrm{where}
\qquad\qquad\qquad
\lambda=:k^2+\frac{1}{4}=:z^2+\frac{\omega}{4(c+\omega)}
\end{equation}
are the spectral parameters. This means that 
$\varphi_{xxt}=\varphi_{txx}$ iff $u$ satisfies the CH equation.

The $x-$ equation is closely related to the spectral problem for the Schrodin\-ger operator with 
a step-like potential, which has been studied in \textit{Buslaev, Fomin} \cite{Buslaev_Fomin},
\textit{Cohen, Kappeler} \cite{Cohen_Kappeler}.

%
\begin{equation}\label{chat}\hskip-10.5mm\textrm{Denote}
\qquad\qquad\qquad
\c=\frac{1}{2}\sqrt{\frac{c}{c+\omega}}\in\(0,\frac12\).
\end{equation}


\noindent 
The next Lemma \ref{lem_m+omega} (cf. \cite[Lemma 2.1]{M15}) distinguishes an important class of solutions of the
Camassa-Holm equation (see Assumption 1 in Section \ref{sect: Introduction}).

\

\begin{lem}{\label{lem_m+omega}} (cf.  \cite{Constantin_2001}, \cite[Lemma 2.1]{M15}).\textit{
\\Suppose there exists a global classical solution to the problem
(\ref{CH}), (\ref{init_cond}). Given Assumption 1, we have 
\begin{equation}\label{m+omega_all_time}m(x,t)+\omega>0\end{equation} for all values of time $t\geq0.$}
\end{lem}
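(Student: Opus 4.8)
The plan is to pass to the momentum variable, derive a linear transport equation for $m+\omega$, and integrate it along the characteristics of the velocity field $u$. First I would rewrite \eqref{CH} in terms of $m=u-u_{xx}$. Using $u_{xx}=u-m$ and $u_{xxx}=u_x-m_x$ in the right-hand side of \eqref{CH}, and noting $m_t=u_t-u_{txx}$, the term $3uu_x$ cancels and one is left with
\[
m_t+um_x+2u_x\,(m+\omega)=0 .
\]
Setting $M:=m+\omega$ (so $M_t=m_t$, $M_x=m_x$) this is the linear transport equation $M_t+uM_x+2u_xM=0$.

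Next I would introduce the characteristic flow $q=q(t;x)$ defined by $\dfrac{\d q}{\d t}=u\bigl(q(t;x),t\bigr)$ with $q(0;x)=x$. Under Assumption 4, $u$ and $u_x$ are bounded on $\R\times[0,T]$ for every $T\geq0$ (recover $u=\tfrac12\e^{-|x|}\ast m$ and differentiate), so this ODE has a unique solution for all $t\geq0$; moreover $\partial_xq(t;x)$ solves the variational equation $\tfrac{\d}{\d t}\partial_xq=u_x(q,t)\,\partial_xq$, hence is strictly positive, and since $|q(t;x)-x|\le\int_0^t|u|\,\d s$ is bounded, $x\mapsto q(t;x)$ is an increasing $C^1$-diffeomorphism of $\R$ onto itself. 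Differentiating $M$ along a characteristic gives
\[
\frac{\d}{\d t}\,M\bigl(q(t;x),t\bigr)=M_t+uM_x=-2u_x\bigl(q(t;x),t\bigr)\,M\bigl(q(t;x),t\bigr),
\]
so that
\[
M\bigl(q(t;x),t\bigr)=\bigl(m_0(x)+\omega\bigr)\exp\!\left(-2\int_0^{t}u_x\bigl(q(s;x),s\bigr)\,\d s\right).
\]
The exponential factor is strictly positive and finite (the integral is finite since $u_x$ is bounded on $\R\times[0,t]$), and $m_0(x)+\omega>0$ by Assumption 1, hence $M\bigl(q(t;x),t\bigr)>0$ for all $x\in\R$ and all $t\geq0$.

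Finally, since for each fixed $t\geq0$ the map $x\mapsto q(t;x)$ is onto $\R$, any $(x_0,t_0)$ with $t_0\geq0$ lies on exactly one characteristic, whence $m(x_0,t_0)+\omega=M(x_0,t_0)>0$, which is \eqref{m+omega_all_time}. I expect the only genuinely nontrivial point to be the claim that the characteristic flow is globally defined and a diffeomorphism of $\R$ for each $t$: this is exactly where the hypothesis of a \emph{classical} solution with the decay of Assumption 4 is used, and it follows from the standard existence/uniqueness theorem for ODEs together with differentiation of the flow in the initial value. Everything else is an elementary computation; alternatively one could simply invoke the known identity of \cite{Constantin_2001} (cf.~\cite[Lemma 2.1]{M15}), of which the above is the short self-contained derivation.
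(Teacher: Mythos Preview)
Your argument is correct and is the standard proof: rewrite the CH equation as the transport equation $(m+\omega)_t+u(m+\omega)_x+2u_x(m+\omega)=0$, integrate along the characteristics of $u$, and use that the flow map is a diffeomorphism of $\R$ for each $t$. The paper itself does not prove this lemma; it simply records it with a reference to \cite{Constantin_2001} and \cite[Lemma~2.1]{M15}, where precisely this characteristic argument is carried out.

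One small remark: the lemma as stated assumes only the existence of a global classical solution together with Assumption~1, not Assumption~4, so strictly speaking you should not invoke Assumption~4 to obtain the boundedness of $u$ and $u_x$ on $\R\times[0,T]$. This boundedness follows already from the hypothesis ``global classical solution of \eqref{CH}--\eqref{init_cond}'' with the step-like behaviour \eqref{initi_cond_2} (write $u=c+\tfrac12\e^{-|\cdot|}\ast(m-cH(-\cdot))+$ a bounded explicit term coming from the background $cH(-x)$, and likewise for $u_x$). With that cosmetic adjustment your proof is complete and matches the cited references.
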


\

This lemma justifies consideration of solutions of the Camassa-Holm equation ($\ref{CH}$) satisfying
$m(x,t)+\omega>0$ for all $x$ and $t$, so from now on, we consider only solutions 
satisfying (\ref{m+omega_all_time}). Then (\ref{CH}) can be written in a form of a local conservation law

\centerline{$\(\sqrt{\frac{m+\omega}{\omega}}\)_t=-\(u\sqrt{\frac{m+\omega}{\omega}}\)_x\
.$}

\noindent There are indeed infinite number of conservation laws, in the sequel we need two of them:
\\
$H_{-1}=x\(\sqrt{\dsfrac{c+\omega}{\omega}}-1\)-c\,\sqrt{\dsfrac{c+\omega}{\omega}\ }\
t+\int\limits_{-\infty}^x\(\sqrt{\dsfrac{m(\xi,t)+\omega}{\omega}}-\sqrt{\dsfrac{c+\omega}{\omega}}\)\d\xi+$
\begin{equation}\label{wp}\hfill+\int\limits_x^{+\infty}\(\sqrt{\dsfrac{m(\xi,t)+\omega}{\omega}}-1\)\d\xi
\end{equation}
\begin{equation}\label{H_0[u]}
H_0=\int_{-\infty}^x \(c-m(r,t)\)\d
r-\int\limits_x^{+\infty}m(r,t)\d
r-c(x+1)+\frac{c(3c+4\omega)t}{2}.
\end{equation}

\noindent An important role is played by the quantity

\begin{equation}\label{y_x_-}
y\hskip-.001mm:=\hskip-.001mmx-\hskip-.001mm\int\limits_x^{+\infty}\hskip-.002mm\(\hskip-.001mm
\sqrt{\frac{m+\omega}{\omega}\
}-\hskip-.001mm1\hskip-.001mm\)\hskip-.001mm\d
r\hskip-.001mm
\end{equation}
$\hskip15mm=-H_{-1}+\sqrt{\hskip-.005mm\frac{c+\omega}{\omega}}\left[\hskip-.001mmx\hskip-0.001mm-\hskip-.001mmct
\hskip-.001mm+\hskip-.002mm
\int\limits_{-\infty}^x\hskip-.002mm\(\hskip-1mm\sqrt{\frac{m+\omega}{c+\omega}}-\hskip-.001mm1\hskip-.001mm\)
\hskip-.001mm\d
r\right]\hskip-0.001mm.
$

\noindent	
For further use let us notice that $y\to\pm\infty$ as $x\to\pm\infty$ and vice versa.

\begin{lem}\label{lem_Jost_solutions} (cf. \cite[Lemma 2.2, 2.3]{M15})
\textit{
Assumption 4 ensures that there exist two Jost solutions $\varphi_{\l}(x,t;k)$, $\varphi_{\r}(x,t;k)$ that solve
both the $x$- and $t$-equations (\ref{Lax_representation_CH}a), (\ref{Lax_representation_CH}b), and satisfy}
\begin{equation}\label{asymptotics varphi - x}
\lim\limits_{x\rightarrow-\infty}\sqrt[4]{\frac{c+\omega}{\omega}}\exp\left\{\i
z\sqrt{\frac{c+\omega}{\omega}}\(x-\(c+\frac{\omega}{2\lambda}\)t\)\right\}\varphi_{\l}(x,t;k)=1,
\end{equation}
\begin{equation}\label{asymptotics varphi + x}
\lim\limits_{x\rightarrow+\infty}\e^{-\i kx+\frac{\i\omega
kt}{2\lambda}}\varphi_{\r}(x,t;k)=1.
\end{equation}

\noindent\textit{The function $\ \exp\left\{\frac{-\i zt\sqrt{(c+\omega)\omega\
}}{2\lambda}\right\}\varphi_{\l}(x,t;k)\ $ is analytic in 
\begin{equation}\label{Domain D bar D partial bar D}\hskip16mmD:=\left\{k: \Im k>0\right\}\setminus[0,\i\c]
\end{equation}
and continuous in (subscripts $``\pm"$ denote the right/left side of the bank)
\begin{equation}\label{Domain bar D}\overline{D}:=\(\left\{k: \Im k\geq 0\right\}\setminus\left[0,\i\c\right]\)
\cup
\left[0,\i\c\right]_-\cup \left[0,\i\c\right]_+;\end{equation}
The last expression means that we distinguish between the left and right banks of the segment $[0,\i\c].$
Here, \\$\bullet\ \varphi_{\l}(x,t;.)$ is  discontinuous across $[0,\i\c],$
\\$\bullet$
the function $\ \exp\left\{\frac{\i
kt\omega}{2\lambda}\right\}\varphi_{\r}(x,t;k)\ $ is analytic for
$\Im k>0$ and continuous for $\Im k\geq0.$ As
$k\rightarrow\infty$, $\Im k\geq0,$
we have}
\begin{eqnarray}\nonumber\bullet\ \varphi_{\l}(x,t;k)&=&\sqrt[4]{\dsfrac{\omega}{m+\omega}}\ \cdot
\\
\nonumber&\cdot&\exp\left\{-\i z
\sqrt{\frac{c+\omega}{\omega}}\(x+\int\limits_{-\infty}^x\(\sqrt{\frac{m+\omega}{c+\omega}}-1\)\d\tilde
x-\(c+\frac{\omega}{2\lambda}\)t\)\right\}\cdot
\\
\label{asymptotics varphi - k}&\cdot&\(1-\frac{1}{2\i k}\int\limits_{-\infty}^y\(v(\tilde y,t)+\frac{c}{4(c+\omega)}\)
\d\tilde y+\mathrm{O}\(k^{-2}\)\),
\end{eqnarray}
\begin{eqnarray}\nonumber
\bullet\ \varphi_{\r}(x,t;k)&=&\sqrt[4]{\frac{\omega}{m+\omega}} \exp\left\{\i
k\(x-\int\limits_{x}^{+\infty}\(\sqrt{\frac{m+\omega}{\omega}}-1\)\d\tilde
x\)-\frac{\i\omega kt}{2\lambda}\right\}\cdot 
\\
\label{asymptotics varphi + k}&\cdot&
\(1-\frac{1}{2\i k}\int\limits^{+\infty}_y v(\tilde y,t) \d\tilde
y+\mathrm{O}\(k^{-2}\)\).
\end{eqnarray}

\noindent\textit{ Moreover, the following relations are satisfied for $k\in\(\i\c ,0\)$:\\
$\varphi_{\l,\r}(x,t;k)=\overline{\varphi_{\l,\r}(x,t;-\overline{k})},\
\varphi_{\l}(x,t;k\pm0)=\overline{\varphi_{\l}(x,t;\overline{k}\mp0)},$\\
$\varphi_{\r}(x,t;k)=\overline{\varphi_{\r}(x,t;k)},\
\varphi_{\l}(x,t;k\pm0)=\varphi_{\l}(x,t;\overline{k}\pm0).$
}
\end{lem}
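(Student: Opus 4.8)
The plan is to reduce the weighted spectral problem (\ref{Shturm_x-eq}) to a Schr\"odinger equation with a step-like potential by a Liouville transformation, to construct the Jost solutions $\varphi_{\r},\varphi_{\l}$ as solutions of Volterra integral equations, to extract their analyticity, large-$k$ behaviour and conjugation symmetries from the Neumann series and from the branch structure of the auxiliary variable $z$, and finally to deduce the $t$-equation (\ref{Shturm_t-eq}) from the compatibility of the Lax pair (\ref{Lax_representation_CH}).

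First I would put $\rho:=(m+\omega)/\omega$, which is positive for all $t\ge0$ by Lemma \ref{lem_m+omega}, and pass to the variable $y$ of (\ref{y_x_-}), for which $\d y=\sqrt{\rho}\,\d x$, $y\to\pm\infty$ as $x\to\pm\infty$, and $y-x\to0$ as $x\to+\infty$. The substitution $\varphi=\rho^{-1/4}\psi$ turns (\ref{Shturm_x-eq}) into $-\psi_{yy}+Q\psi=\lambda\psi$, where the potential $Q$ is built from $\rho$ and its first two $y$-derivatives; by (\ref{conditions u(x,t)}) it is integrable against $1+|y|$ at each end and has limits $\tfrac14$ as $y\to+\infty$ and $\tfrac{\omega}{4(c+\omega)}$ as $y\to-\infty$. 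Near $+\infty$ the effective spectral parameter is $k^2=\lambda-\tfrac14$, and $\varphi_{\r}=\rho^{-1/4}\psi_{\r}$, where $\psi_{\r}$ solves $\psi_{\r}(y;k)=\e^{\i ky}-\int_y^{+\infty}\frac{\sin k(\eta-y)}{k}(Q(\eta)-\tfrac14)\psi_{\r}(\eta;k)\,\d\eta$; writing $\psi_{\r}=\e^{\i ky}\chi$ one obtains an equation whose kernel is bounded by $|\eta-y|$ uniformly for $\mathrm{Im}\,k\ge0$, so the Neumann series converges, is analytic for $\mathrm{Im}\,k>0$ and continuous up to $\mathrm{Im}\,k\ge0$, and restoring $\rho^{-1/4}$ together with the $t$-dependent exponential normalization yields (\ref{asymptotics varphi + x}). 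Near $-\infty$ the effective spectral parameter is $z$, with $z^2=\lambda-\tfrac{\omega}{4(c+\omega)}=k^2+\c^2$ by (\ref{lambda_k_z}) and (\ref{chat}); the shifted potential $Q-\tfrac{\omega}{4(c+\omega)}$ decays at $-\infty$, and $\varphi_{\l}$ comes from the analogous Volterra equation written in the variable $z$, normalized at $-\infty$ so as to match (\ref{asymptotics varphi - x}).

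Next I would establish the analyticity statements. The map $k\mapsto z=\sqrt{k^2+\c^2}$, with cut along $[0,\i\c]$ (the upper half of the RH-problem branch cut $[\i\c,-\i\c]$) and normalized by $z>0$ on $\R$, is single-valued on $\overline D$, has $\mathrm{Im}\,z>0$ in the interior of $D$, and on the segment $k=\i s$, $0<s<\c$, takes the opposite values $\pm\sqrt{\c^2-s^2}$ on the two banks. Since the left Volterra solution is analytic in $z$ for $\mathrm{Im}\,z>0$ and continuous up to $\mathrm{Im}\,z\ge0$ — the Volterra kernel staying regular at $z=0$ ($k=\i\c$) and at $z=\c$ ($k=0$) — the function $\exp\{-\i zt\sqrt{(c+\omega)\omega}/(2\lambda)\}\varphi_{\l}$, from which the singular $\lambda^{-1}$-prefactor has been removed, is analytic in $D$, continuous on $\overline D$, and jumps across $[0,\i\c]$ because $\psi_{\l}(z)$ and $\psi_{\l}(-z)$ are the two distinct Jost solutions at $-\infty$; $\varphi_{\r}$, by contrast, is analytic across $[0,\i\c]$ since its Volterra construction involves only $k$ and the right potential decays at $+\infty$. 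The expansions (\ref{asymptotics varphi - k})--(\ref{asymptotics varphi + k}) follow by iterating the Volterra equations once and integrating by parts against the oscillatory kernels: the leading corrections there are the first Neumann iterates, and the $\mathrm{O}(k^{-2})$ remainder is uniform for $\mathrm{Im}\,k\ge0$ precisely because $Q$ is integrable against $1+|y|$. The relations for $k\in(\i\c,0)$ are Schwarz reflections: $k\mapsto-\overline k$ conjugates $\lambda$ (hence $k^2$ and $z^2$) and leaves the real-coefficient equation $-\psi_{yy}+Q\psi=\lambda\psi$ invariant, so $\overline{\varphi_{\l,\r}(x,t;-\overline k)}$ solves the same equation with the same normalization and therefore coincides with $\varphi_{\l,\r}(x,t;k)$; combined with the sign flip of $z$ across the cut this gives the remaining identities.

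Finally, compatibility of (\ref{Lax_representation_CH}) — valid since $u$ is assumed to be a global classical solution of (\ref{CH}) — implies that $\widetilde\varphi:=\varphi_t+(\tfrac{\omega}{2\lambda}+u)\varphi_x-\tfrac{u_x}{2}\varphi$ again solves (\ref{Shturm_x-eq}) whenever $\varphi$ does. Applying this to $\varphi_{\r}$ and using (\ref{asymptotics varphi + x}) with $u,u_x\to0$ at $+\infty$, the leading exponential term of $\widetilde\varphi$ cancels, so $\widetilde\varphi$ decays faster than $\e^{\i kx}$ there and, by the structure of the solution space of (\ref{Shturm_x-eq}) at $+\infty$, vanishes identically; hence $\varphi_{\r}$ also solves (\ref{Shturm_t-eq}). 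The same computation with (\ref{asymptotics varphi - x}) and $u\to c$, $u_x\to0$ at $-\infty$ handles $\varphi_{\l}$; in both cases the $t$-dependent exponential in the normalization is exactly what makes $\widetilde\varphi$ subdominant. The step I expect to be the main obstacle is the analytic bookkeeping for $\varphi_{\l}$ near the ends of the cut: proving that the Volterra construction in the $z$-variable is continuous up to and including $z=0$ and $z=\c$, with exactly the prescribed distinct boundary values on the two banks of $[0,\i\c]$ and no spurious singularity. This is where Assumption 2 ($c\ge m_0$) and the exponential decay of Assumption 3 enter, and it amounts to adapting the step-like Schr\"odinger scattering theory of \cite{Buslaev_Fomin} and \cite{Cohen_Kappeler} to the present weighted problem; everything else is a routine extension of the vanishing-background analysis of \cite{Shep2007}, \cite{Constantin_2001} and \cite{M15}.
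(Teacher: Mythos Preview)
The paper does not give its own proof of this lemma; it is stated as a citation of \cite[Lemmas 2.2, 2.3]{M15}. Your proposal sketches precisely the standard construction behind those results---Liouville transformation $\varphi=\rho^{-1/4}\psi$, $\d y=\sqrt{\rho}\,\d x$ reducing (\ref{Shturm_x-eq}) to a step-like Schr\"odinger problem, Volterra integral equations for the Jost solutions at each end, Neumann-series analyticity in $k$ (right) and in $z=\sqrt{k^2+\c^2}$ (left), large-$k$ expansions from one iteration, Schwarz-reflection symmetries, and compatibility of the Lax pair for the $t$-equation. This is the approach of \cite{Constantin_2001}, \cite{Shep2007}, \cite{Buslaev_Fomin}, \cite{Cohen_Kappeler}, and \cite{M15} itself, so your outline matches what the cited proof does.

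One correction: in your final paragraph you say that Assumptions 2 and 3 enter for the continuity of $\varphi_{\l}$ at the endpoints of the cut. They do not. The lemma is stated under Assumption 4 alone (together with Assumption 1 via Lemma \ref{lem_m+omega} for $m+\omega>0$), and the first-moment condition (\ref{conditions u(x,t)}) is all that is needed: the Volterra kernel $z^{-1}\sin z(\eta-y)$ is entire in $z$, so continuity of $\psi_{\l}$ up to $z=0$ (i.e.\ $k=\i\c$) is automatic. Assumption 2 is about non-vanishing of the Wronskian $W(\i\c)$ and hence continuity of the \emph{transmission coefficient} $a^{-1}$ at $\i\c$---that is Lemma \ref{Lemma_properties_a,b,r}, not the present one. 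Assumption 3 is about analytic continuation of $r(k)$ into a strip, also irrelevant here. Otherwise the sketch is correct.
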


\

\noindent
Let us notice, that $D$, $\overline{D},$ and 
\begin{equation}\label{Domain partial bar D}\partial\overline{D}:=\R\cup\left[0,\i\c\right]_-\cup\left[0,\i\c\right]_+
\end{equation}
 can be characterized in terms of $z(k)$ (\ref{lambda_k_z}) as the domains where $\Im z(k)>0$, $\Im z(k)\geq 0,$ and both  $\Im z(k)=0$ and $\Im k\ge0$, respectively.

\noindent
The Jost solutions determine spectral functions via the  scattering relation
\hskip-2cm\begin{subequations}\label{scattering relations varphi}
\hskip-2cm\begin{eqnarray}
\varphi_{\l}(x,t;k)&\hskip-3mm=\hskip-3mm&a_{}(k)\
\overline{\varphi_{\r}(x,t;\overline{k})}+b_{}(k)\
\varphi_{\r}(x,t;k),\ k\in\mathbb{R}\hskip-1mm\setminus\hskip-1mm\left\{k=0\right\},
\end{eqnarray}
\end{subequations}
which determines the (right) transmission coefficient $a_{}^{-1}(k),$ and the (right) reflection coefficient $r(k):=\frac{b_{}(k)}{a_{}(k)}.$ 


\noindent
In the following preliminary lemma we use the notations (\ref{Domain D bar D partial bar D}), (\ref{Domain bar D}), 
(\ref{Domain partial bar D}) for the domains $D$, $\overline{D}$, and $\partial\overline{D}$.

\

\begin{lem}\label{Lemma_properties_a,b,r}
\noindent\textbf{(cf. \cite[Lemmas 2.4-2.7]{M15})}\textit{
\\
$\bullet$ The (right) transmission coefficient $a^{-1}_{}(k)$ is meromorphic in $D$; has a finite number of poles $ \i\kappa_1,...,\i\kappa_N,$ which lie in the interval $\c<\kappa_N<...<\kappa_1<\frac{1}{2}$; function $a^{-1}_{}(k)$ is continuous up to the boundary of its domain of analyticity with (in general) the exception of the edge point $k=\i\c$. Moreover, $a^{-1}_{\pm}(0)=0.$ 
\\$\bullet$ Further, under Assumption 2}
(
$\textstyle\forall x\in\mathbb{R}: \quad c\geq m_0(x)
$),
\textit{the (right) transmision coefficient $a^{-1}(k)$ is
continuous up to the point $k=\i\c;$
\\
$\bullet$ Under Assumption 3, the reflection coefficient
$r(k)\equiv\frac{b(k)}{a(k)}$ is meromorphic in $D_{C_0}:=\left\{k:0<\Im k<\sqrt{C_0}\right\}\setminus\left\{\i/2\right\}$ with simple poles at $\i\varkappa_j$ (those which lie in the domain $D_{C_0}$), and takes continuous boundary values at $(\i\c,0]$ 
(recall that $C_0$ is the constant 
that appears in (\ref{assum3_integr_ineq}), Assumption 3).
Further, under Assumption 2, $r(k)$ is continuous up to the point $k=\i\c;$ 
}

\noindent
$\bullet$ \textit{Asymptotically as $k\rightarrow\infty$ we have
\begin{equation}\label{transmission coeff+- asymptotics}
a^{-1}_{}(k)=\e^{\i H_{-1} k}\(1+\mathrm{O}(k^{-1})\), \
r(k)\equiv\frac{b(k)}{a(k)} = \mathrm{O}(k^{-1}).
\end{equation}
\\
$\bullet$ 
The residues of $a_+^{-1}(k)$ are given by
\begin{equation}\label{transmission coeff+-residues}
\mathrm{Res}_{\, \i\kappa_j}\ a^{-1}_+(k)=\i\mu_j\gamma_{+,j}^2,\
\end{equation}
where $\gamma_{\r,j}^{-2}:=\int\limits_{-\infty}^{+\infty}
\(\varphi_{}(x,t;\i\kappa_j)\)^2\frac{m+\omega}{\omega}\d x>0
$
 and
$\varphi_\r(x,t;\i\kappa_j)=\mu_j\varphi_\l(x,t;\i\kappa_j)$ with
quantities $\gamma_{\r,j}$, $\mu_j$ independent on $t.$
}

\noindent
$\bullet$ 
$\overline{a^{-1}(-\overline{k})}=a^{-1}(k),\, k\in\overline{D};$
$\quad \overline{r_{}(-\overline{k})}=r_{}(k),\
k\in\mathbb{D_{C_0}};$ 
\\\\$1-|r_{}(k)|^2=\dsfrac{z(k)}{k}\dsfrac{1}{|a(k)|^2},\quad
k\in\mathbb{R}\setminus\left\{0\right\}.$
\end{lem}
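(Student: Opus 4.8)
\textbf{Proof plan for Lemma \ref{Lemma_properties_a,b,r}.}

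The plan is to derive everything from the Volterra integral equations for the Jost solutions $\varphi_\l$, $\varphi_\r$ together with the scattering relation (\ref{scattering relations varphi}) and the Wronskian formula $a(k)=W(\varphi_\l,\varphi_\r)/w(k)$ (with $w(k)$ the appropriate normalization coming from the limits (\ref{asymptotics varphi - x}), (\ref{asymptotics varphi + x})). First I would establish the analyticity domains: the kernel in the Volterra equation for $\e^{\mathrm{i}kt\omega/2\lambda}\varphi_\r$ depends on $k$ only through $\e^{\mathrm{i}k(x-\tilde x)}$, which is bounded and analytic for $\Im k>0$; this gives analyticity of $a(k)$ in $D$. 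The cut $[0,\mathrm{i}\c]$ and the left/right bank distinction for $\varphi_\l$ comes from the square-root branch $z(k)=\sqrt{\lambda-\frac{\omega}{4(c+\omega)}}=\sqrt{k^2-\c^2}$ entering the normalization (\ref{asymptotics varphi - x}) — as noted in the excerpt, $D$, $\overline D$, $\partial\overline D$ are exactly the sets where $\Im z>0$, $\Im z\ge0$, and $\Im z=0\le\Im k$. Then $a^{-1}(k)$ is meromorphic in $D$ with poles at the zeros of $a$; the reality/symmetry relation $\overline{a(-\overline k)}=a(k)$ forces the zeros in the upper half-plane to be finitely many points $\mathrm{i}\kappa_j$ on the imaginary axis (finiteness from the $k\to\infty$ asymptotics (\ref{transmission coeff+- asymptotics}), which shows $a(k)\to\e^{-\mathrm{i}H_{-1}k}$ has no zeros near $\infty$, plus analyticity; the location strictly between $\c$ and $1/2$ is the standard spectral argument — eigenvalues $\lambda\in(\frac{\omega}{4(c+\omega)},\frac14)$ of the half-line Schr\"odinger problem, translated via (\ref{lambda_k_z})). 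The statement $a^{-1}_\pm(0)=0$, equivalently $a(k)\to\infty$ as $k\to0$, follows from the known $k\to0$ behaviour of the Jost functions inherited from the step-like Schr\"odinger scattering (Buslaev--Fomin, Cohen--Kappeler): at $k=0$ the two Jost solutions become linearly dependent in the generic step-like case, which makes $a$ singular; I would extract this from the explicit $z(k)/k$ factor appearing in the unitarity relation below.

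Next, for $a^{-1}(k)$ continuous up to $k=\mathrm{i}\c$ under Assumption 2, I would invoke Lemma \ref{lem_Jost_solutions} (which already gives continuity of the normalized Jost solutions up to $\partial\overline D$ including the endpoint) and reduce the claim to $W(\mathrm{i}\c)\neq0$; the nonvanishing of this Wronskian at the edge of the one-fold spectrum is precisely what Assumption 2 ($c\ge m_0(x)$) buys — this is asserted in the Comment after Assumption 2 and is proved in \cite[Lemma 2.6]{M15}, so I would cite it rather than reprove it. For the reflection coefficient $r=b/a$: Assumption 3 gives, via (\ref{assum3_integr_ineq}) with $C_0>\frac{c}{4(c+\omega)}$, an analytic continuation of $b(k)$ (hence of $r(k)$) into the strip $0<\Im k<\sqrt{C_0}$ off the cut and off $\mathrm{i}/2$ — the bound $C_0>\c^2\cdot\frac{1}{(\cdot)}$ is calibrated exactly so the strip reaches above $\mathrm{i}\c$; the simple poles of $r$ at those $\mathrm{i}\kappa_j$ inside the strip are simple because they are simple zeros of $a$ (again standard: $a'(\mathrm{i}\kappa_j)\neq0$ from the trace-type formula, and $b(\mathrm{i}\kappa_j)\neq0$), and continuity up to $(\mathrm{i}\c,0]$ follows from that of $a$ and $b$; continuity up to $\mathrm{i}\c$ itself again uses Assumption 2.

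For the asymptotic expansions (\ref{transmission coeff+- asymptotics}) I would substitute the large-$k$ expansions (\ref{asymptotics varphi - k}), (\ref{asymptotics varphi + k}) of $\varphi_\l$, $\varphi_\r$ into the Wronskian representation of $a$; the exponential factors combine to $\e^{\mathrm{i}H_{-1}k}$ by (\ref{wp}) — this is where the conservation law $H_{-1}$ enters — and the $\mathrm{O}(k^{-1})$ correction is immediate, while $r=b/a=\mathrm O(k^{-1})$ follows since $b$ carries a decaying oscillatory factor with no exponential growth. For the residue formula (\ref{transmission coeff+-residues}) I would differentiate the eigenvalue relation: at $k=\mathrm i\kappa_j$ we have $\varphi_\r=\mu_j\varphi_\l$, and differentiating the ODE (\ref{Shturm_x-eq}) in $\lambda$ and pairing against $\varphi$ (the standard Wronskian/derivative-of-eigenvalue identity) yields $a'(\mathrm i\kappa_j)$ in terms of the normalization integral $\gamma_{\r,j}^{-2}=\int(\varphi(x,t;\mathrm i\kappa_j))^2\frac{m+\omega}{\omega}\,dx$, which is positive precisely because $m+\omega>0$ (Lemma \ref{lem_m+omega}) and $\varphi$ is a genuine $L^2$ eigenfunction; the factor $\mathrm i\mu_j\gamma_{\r,j}^2$ then drops out, and $t$-independence of $\mu_j,\gamma_{\r,j}$ follows from the $t$-equation (\ref{Shturm_t-eq}) as in the vanishing-background case. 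Finally the symmetry relations $\overline{a^{-1}(-\overline k)}=a^{-1}(k)$ and $\overline{r(-\overline k)}=r(k)$ are direct consequences of the reality/symmetry relations for $\varphi_{\l,\r}$ listed at the end of Lemma \ref{lem_Jost_solutions}, and the unitarity identity $1-|r(k)|^2=\frac{z(k)}{k}\frac{1}{|a(k)|^2}$ on $\R\setminus\{0\}$ comes from computing the Wronskian $W(\varphi_\r,\overline{\varphi_\r})$ two ways — once from the normalization (\ref{asymptotics varphi + x}), giving the constant $2\mathrm ik$, and once after substituting the scattering relation, giving $2\mathrm iz(k)(|a|^2-|b|^2)$ (the $z$ versus $k$ discrepancy is exactly the step-like feature, reflecting the different WKB behaviours at $\pm\infty$); dividing gives the claim. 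The main obstacle I anticipate is the behaviour at the two exceptional points: cleanly justifying $a^{-1}_\pm(0)=0$ (the $k\to0$ degeneration, which requires care with the branch of $z(k)$ and the left-bank/right-bank bookkeeping) and the continuity up to $k=\mathrm i\c$, which hinges entirely on the nonvanishing Wronskian $W(\mathrm i\c)\neq0$ — everything else is a bookkeeping exercise with Volterra estimates and the already-established properties of the Jost solutions.
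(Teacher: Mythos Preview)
The paper does not actually prove this lemma: it is imported wholesale from \cite{M15} (Lemmas 2.4--2.7), and the text proceeds directly to the Riemann--Hilbert formulation after stating it. So there is no paper proof to compare against; your plan is the proof, and it is the standard one that \cite{M15} presumably carries out --- Volterra estimates for the Jost solutions, Wronskian representation of $a(k)$, spectral localization of the zeros via the map $\lambda=k^2+\tfrac14$, the derivative-of-Wronskian identity for the residues, and the unitarity computation for $1-|r|^2$. Your outline is correct and complete for the purposes of this paper.

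Two small remarks. First, a sign slip: from (\ref{lambda_k_z}) one has $z^2=k^2+\c^2$, not $k^2-\c^2$; this does not affect any of your arguments but would propagate if you wrote things out. Second, your justification of $a_\pm^{-1}(0)=0$ is the weakest link as written: the phrase ``generic step-like case'' is not quite right here, since the vanishing is not a genericity statement but a structural one --- at $k=0$ the Wronskian normalization $w(k)$ in $a(k)=W(\varphi_\l,\varphi_\r)/w(k)$ carries a factor $k$ from the right Jost asymptotics (\ref{asymptotics varphi + x}) while $z(0)=\c\neq0$ from the left, so $a(k)\sim C/k$ regardless of whether the Jost solutions are dependent at $k=0$. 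You correctly flag this and the $k=\i\c$ endpoint as the two places needing care; for the latter you are right to simply cite \cite[Lemma~2.6]{M15} for $W(\i\c)\neq0$, exactly as the paper does.
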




\subsection{Vector Riemann -- Hilbert problem.}\label{SubSect_RH_problem_1} \noindent 
We define the vector
Riemann -- Hilbert problem as follows. The sectionally meromorphic
function
$V_{\mathfrak{r}}(y,t;k)=\(V_{\mathfrak{r},\,1}(y,t;k),V_{\mathfrak{r},\,2}(y,t;k)\)$
is defined by
\begin{equation}\label{M_psi}
\hskip-3mm\left\{\begin{array}{l}
\sqrt[4]{\frac{m+\omega}{\omega}}\(\dsfrac{1}{a(k)}\varphi_\l(x,t;k)\,\e^{\i
g_{\r}(y,t;k)}\ , \ \varphi_\r(x,t;k)\,\e^{-\i g_{\r}(y,t;k)}\), \quad k\in D,
\\
\sqrt[4]{\frac{m+\omega}{\omega}}\(\overline{\varphi_{}(x,t;\overline{k})}\,\e^{\i
g_{\r}(y,t;k)}\ , \
\dsfrac{1}{\overline{a_{}(\overline{k})}}\overline{\varphi_-(x,t;\overline{k})}\,\e^{-\i
g_{\r}(y,t;k)}\), \quad k\in D^*,
\end{array}\right.
\end{equation}
\noindent where $g_{\r}(y,t;k)=k y-\frac{2\omega k t}{4k^2+1}$,  and the domain $D$ is determined by (\ref{Domain D bar D partial bar D}).

We are interested in the jump relations of
$V_{\mathfrak{r}}(y,t;k)$ on the contour
$$\Sigma_{\mathfrak{r}}=\mathbb{R}\cup
\left[\i\c,-\i\c\right].$$ The orientation of the
contour is chosen as follows: from $-\infty$ to $+\infty$ and from
$+\i\c$ to
$-\i\c.$ The positive side of the contour
lies on the left as one moves along the contour in the positive direction, the negative one lies on the right. 
We will use the notation $V_{\pm}(y,t;k)$
for the limit of $V_{\pm}(y,t;k)$ from the positive/negative side of the
contour.


\begin{figure}[ht!]
\begin{center}
\epsfig{width=60mm,figure=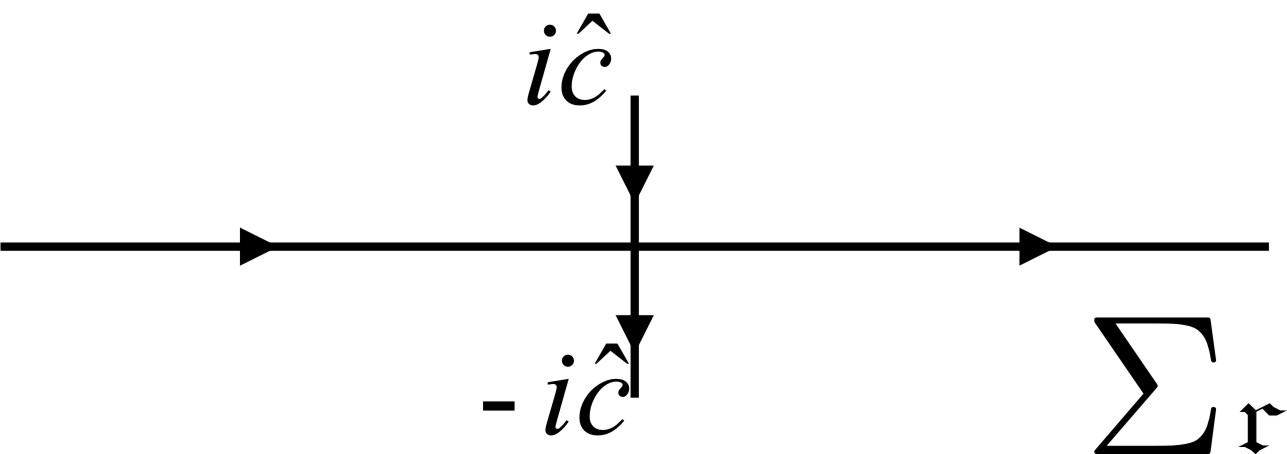}
\end{center}
\caption{Contour $\Sigma_{\r}$.}
\label{Graphics_Sigma_r}
\end{figure}

The scattering relation (\ref{scattering relations varphi}) and
Lemma \ref{Lemma_properties_a,b,r} provide the following
properties of the function $V_{\mathfrak{r}}(y,t;k):$
\begin{lem}\label{lem_RH}(cf. \cite[Section 3]{M15})
\begin{enumerate}
\item Analyticity: \\$V_{\mathfrak{r}}(y,t;k)$ is meromorphic
in $k\in\mathbb{C}\setminus\Sigma_{\mathfrak{r}}$ with simple poles at
$\pm\i\kappa_j$ and continuous up to the boundary;

\item Jump relations: for $k\in\Sigma_{\mathfrak{r}}$\\
$V_{\mathfrak{r}}^-(y,t;k)=V_{\mathfrak{r}}^+(y,t;k)J_{\mathfrak{r}}(y,t;k),$
where
\begin{equation}\label{J_R}
J_{\mathfrak{r}}(y,t;k)=\begin{pmatrix}1&\ol{r_{}\(\ol{k}\)}\,\,\e^{-2\i
g_{\r}(y,t;k)}\\-r_{}(k)\,\e^{2\i
g_{\r}(y,t;k)}&1-|r_{}(k)|^2\end{pmatrix},\quad
k\in\mathbb{R}\setminus\{0\},
\end{equation}
\begin{equation}\nonumber
J_{\mathfrak{r}}(y,t;k)=\begin{pmatrix}1&0\\f(k)\,\e^{2\i g_{\r}(y,t;k)}&1\end{pmatrix},\quad
k\in\(\i\c,0\),
\end{equation}
\begin{equation}\label{J_0_-ic}
J_{\mathfrak{r}}(y,t;k)=\begin{pmatrix}1&\ol{f\(\ol{k}\)}\,\e^{-2\i
g_{\r}(y,t;k)}\\0&1\end{pmatrix},\quad
k\in\(0,-\i\c\),
\end{equation}
where $r_{}(k)=\frac{b(k)}{a(k)},$
\ $f(k):=\frac{z(k+0)}{k\,
a(k-0)a(k+0)},\quad k\in(\i\c,0),$
\\
$1-|r_{}(k)|^2=\frac{z(k)}{k\,|a(k)|^2},\quad k\in{\R};$
\item Residue conditions: for $j=1,...,N$
\begin{subequations}\label{prop_symmetry_a}
\begin{eqnarray}
\mathrm{Res}_{\i\kappa_j}V_{\mathfrak{r}}(y,t;k)&\hskip-3mm=\hskip-3mm&\lim\limits_{k\rightarrow\i\kappa_j}V_{\mathfrak{r}}(y,t;k)\begin{pmatrix}0&0\\\i\gamma_{+,j}^2\e^{2\i
g_{\r}(y,t;\i\kappa_j)}&0\end{pmatrix},\qquad\ \ 
\\
\mathrm{Res}_{-\i\kappa_j}V_{\mathfrak{r}}(y,t;k)&\hskip-3mm=\hskip-3mm&\lim\limits_{k\rightarrow-\i\kappa_j}V_{\mathfrak{r}}(y,t;k)\begin{pmatrix}0&-\i\gamma_{+,j}^2\e^{2\i
g_{\r}(y,t;\i\kappa_j)}\\0&0\end{pmatrix}\hskip-1mm;
\end{eqnarray}
\end{subequations}
\item Symmetry relations:
\begin{equation}\nonumber
\overline{V_{\mathfrak{r}}(y,t;\overline{k})}=V_{\mathfrak{r}}(y,t;-k)=V_{\mathfrak{r}}(y,t;k)\begin{pmatrix}0&1\\1&0\end{pmatrix},
\quad\overline{V_{\mathfrak{r}}(y,t;-\overline{k})}=V_{\mathfrak{r}}(y,t;k);
\end{equation}
 \item Asymptotics at infinity: \begin{equation}\label{M infinity}
V_{\mathfrak{r}}(y,t;k) \rightarrow
\begin{pmatrix}1&1\end{pmatrix} \textrm{ as }
k\rightarrow\infty.\end{equation}
\end{enumerate}
\end{lem}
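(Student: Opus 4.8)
The plan is to verify the five listed properties one at a time, directly from the definition (\ref{M_psi}), using the analytic and continuity structure of the Jost solutions (Lemma \ref{lem_Jost_solutions}), of the scattering coefficients (Lemma \ref{Lemma_properties_a,b,r}), and the scattering relation (\ref{scattering relations varphi}). Two observations streamline everything: the scalar prefactor $\sqrt[4]{(m+\omega)/\omega}$ is positive, bounded and independent of $k$ (since $m+\omega>0$ by Lemma \ref{lem_m+omega}), so it plays no role in analyticity or in jumps with respect to $k$; and $g_{\r}(y,t;k)=ky-\dsfrac{2\omega kt}{4k^2+1}=ky-\dsfrac{\omega kt}{2\lambda}$ is odd in $k$ and has poles only at $k=\pm\i/2$, so it never affects the structure of $V_{\r}$ on $\Sigma_{\r}$; moreover $x$ is a fixed monotone function of $y$ at fixed $t$ through (\ref{y_x_-}), so the $k$-dependence of $V_{\r}(y,t;\cdot)$ may be read at fixed $(x,t)$. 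For item 1, on $D$ the first entry is $\sqrt[4]{(m+\omega)/\omega}\,a^{-1}(k)\varphi_{\l}(x,t;k)\,\e^{\i g_{\r}}$: by Lemma \ref{lem_Jost_solutions} the combination $\e^{-\i zt\sqrt{(c+\omega)\omega}/(2\lambda)}\varphi_{\l}$ is analytic in $D$, by Lemma \ref{Lemma_properties_a,b,r} $a^{-1}$ is meromorphic in $D$ with simple poles exactly at $\i\kappa_j$, and the accompanying scalar exponentials are analytic in $D$ (their apparent singularity at $k=\i/2$ cancels), so the first entry is meromorphic in $D$ with poles only at $\i\kappa_j$; the second entry $\sqrt[4]{(m+\omega)/\omega}\,\varphi_{\r}\,\e^{-\i g_{\r}}$ is analytic throughout $\Im k>0$. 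On $D^*$ one uses the Schwarz-reflected statements. Continuity up to $\partial\overline{D}$ follows from the boundary-continuity parts of those lemmas, the only delicate points being $k=0$, where $a^{-1}_{\pm}(0)=0$ keeps the first entry finite, and the edge $k=\i\c$, where continuity of $a^{-1}$ and $r$ is precisely what Assumption 2 supplies.

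For item 2, on $\R\setminus\{0\}$ one divides the scattering relation (\ref{scattering relations varphi}) by $a(k)$ (recalling $\ol k=k$ there) to obtain $a^{-1}(k)\varphi_{\l}=\ol{\varphi_{\r}}+r\varphi_{\r}$; multiplying by $\sqrt[4]{(m+\omega)/\omega}\,\e^{\i g_{\r}}$ this reads $V_{\r,1}^-=V_{\r,1}^+-r\,\e^{2\i g_{\r}}V_{\r,2}^+$, which is the first column of $J_{\r}$, while conjugating the scattering relation and dividing by $\ol{a(k)}$ yields the second column, with entries $\ol{r(\ol k)}\,\e^{-2\i g_{\r}}$ and $1-|r|^2$ (and $1-|r|^2=z/(k|a|^2)$ by Lemma \ref{Lemma_properties_a,b,r}). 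On $(\i\c,0)$ the jump is the discontinuity of $V_{\r}$ across the branch cut $[0,\i\c]$ of $z(k)=\sqrt{k^2+\c^2}$: only the entry $a^{-1}\varphi_{\l}$ jumps there ($\varphi_{\r}$ being analytic), and using the linear dependence of the three Jost solutions of the $x$-equation, the relations of Lemma \ref{lem_Jost_solutions} valid on $[0,\i\c]$, and the sign flip $z(k+0)=-z(k-0)$, one obtains the lower-triangular jump with off-diagonal entry $f(k)\,\e^{2\i g_{\r}}$, $f(k)=z(k+0)/\bigl(k\,a(k-0)a(k+0)\bigr)$; on the opposite bank $(0,-\i\c)$ the upper-triangular form then follows from the symmetry of item 4.

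For item 3, the residue at $\i\kappa_j$ of $V_{\r,1}$ equals $\sqrt[4]{(m+\omega)/\omega}\,\bigl(\mathrm{Res}_{\i\kappa_j}a^{-1}\bigr)\varphi_{\l}(x,t;\i\kappa_j)\,\e^{\i g_{\r}(\i\kappa_j)}$; substituting $\mathrm{Res}_{\i\kappa_j}a^{-1}=\i\mu_j\gamma_{+,j}^2$ and $\varphi_{\r}(x,t;\i\kappa_j)=\mu_j\varphi_{\l}(x,t;\i\kappa_j)$ from Lemma \ref{Lemma_properties_a,b,r} turns this into $\i\gamma_{+,j}^2\,\e^{2\i g_{\r}(\i\kappa_j)}$ times the value at $\i\kappa_j$ of the analytic second entry, i.e.\ (\ref{prop_symmetry_a}a); the condition at $-\i\kappa_j$ follows by symmetry. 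For item 4, the three symmetry relations are read off from $\varphi_{\l,\r}(x,t;k)=\ol{\varphi_{\l,\r}(x,t;-\ol k)}$, $\ol{a^{-1}(-\ol k)}=a^{-1}(k)$, $\ol{r(-\ol k)}=r(k)$ and the oddness of $g_{\r}$ in $k$; for instance, if $k\in D$ then $-k\in D^*$ and $V_{\r,1}(-k)=\sqrt[4]{(m+\omega)/\omega}\,\ol{\varphi_{\r}(x,t;-\ol k)}\,\e^{\i g_{\r}(-k)}=\sqrt[4]{(m+\omega)/\omega}\,\varphi_{\r}(x,t;k)\,\e^{-\i g_{\r}(k)}=V_{\r,2}(k)$, and the remaining entries and the other two relations are analogous.

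Finally, for item 5 one substitutes the $k\to\infty$ expansions (\ref{asymptotics varphi - k}), (\ref{asymptotics varphi + k}) together with $a^{-1}(k)=\e^{\i H_{-1}k}(1+\ord(k^{-1}))$ from (\ref{transmission coeff+- asymptotics}). In the second entry the exponential phases of $\varphi_{\r}$ and $\e^{-\i g_{\r}}$ cancel identically by the definition (\ref{y_x_-}) of $y$, and $\sqrt[4]{(m+\omega)/\omega}\cdot\sqrt[4]{\omega/(m+\omega)}=1$, so $V_{\r,2}=1+\ord(k^{-1})\to1$; in the first entry the factor $\e^{\i H_{-1}k}$ from $a^{-1}$ combines with the phase of $\varphi_{\l}$ and with $\i g_{\r}$, and using $z=\sqrt{k^2+\c^2}=k+\ord(k^{-1})$, $\omega/(2\lambda)=\ord(k^{-2})$, and the second form of $y$ in (\ref{y_x_-}) (which absorbs $H_{-1}$ against the remaining integrals), the total phase is $\ord(k^{-1})$ and $V_{\r,1}\to1$. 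I expect the branch-cut jump on $[\i\c,-\i\c]$ in item 2 to be the only genuinely non-routine step: it is the single place where the discontinuity of $\varphi_{\l}$, the two boundary values $a(k\pm0)$, and the sign flip of $z$ must all be tracked simultaneously, and pinning down the coefficient $f(k)$ with the correct triangular structure requires careful use of the reality and symmetry relations of Lemma \ref{lem_Jost_solutions} on $[0,\i\c]$.
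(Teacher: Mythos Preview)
Your proposal is correct and follows exactly the route the paper indicates: the paper does not give a detailed proof but simply states that the scattering relation (\ref{scattering relations varphi}) together with Lemma \ref{Lemma_properties_a,b,r} yield the listed properties, referring to \cite[Section 3]{M15} for details. Your argument is precisely a careful unpacking of that one-line justification, verifying each of the five items directly from the definition (\ref{M_psi}) and the analytic/symmetry properties of the Jost solutions and scattering data.
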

\noindent The solution of the initial value problem (\ref{CH}), (\ref{init_cond}) can be obtained via the solution of the RH problem in a 
parametric form :

\begin{lem}\label{Lemma reconstruction of potential}
\textbf{(cf. \cite[Lemma 5.1]{M15})} 
\\
\textit{ The function $V_{\mathfrak{r}}(x,t;k)=\(V_{\mathfrak{r,1}}(x,t;k), V_{\mathfrak{r,2}}(x,t;k)\)$ defined in
(\ref{M_psi}) satisfies the
following relations:}
\begin{subequations}\label{M1*M2}\label{M1/M2}
\begin{eqnarray}&&\dsfrac{V_{\mathfrak{r},\,1}(x,t;\frac{\i}{2})}{V_{\mathfrak{r},\,2}(x,t;\frac{i}{2})}=
\e^{x-y},
\\\nonumber
&&V_{\mathfrak{r},\,1}(x,t;k)V_{\mathfrak{r},\,2}(x,t;k)=\sqrt{\frac{m+\omega}{\omega}}
\(1+\frac{2\i}{\omega}u(x,t)\(k-\frac{\i}{2}\)
+O\(k-\frac{\i}{2}\)^2\),\\
\end{eqnarray}
\end{subequations}
\end{lem}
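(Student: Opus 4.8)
The plan is to extract both relations from the large-$k$ and special-point behavior of the Jost solutions already recorded in Lemma~\ref{lem_Jost_solutions}, combined with the explicit definition (\ref{M_psi}) of $V_{\mathfrak{r}}$. The key observation is that for $k=\tfrac{\i}{2}$ one has $\lambda=k^2+\tfrac14=0$, which is exactly the value at which the $t$-equation (\ref{Shturm_t-eq}) and the exponential factors $\e^{\pm\i g_{\r}}$ and $\exp\{\mp\i zt\sqrt{(c+\omega)\omega}/2\lambda\}$ become singular; the product structure of $V_{\mathfrak{r},1}V_{\mathfrak{r},2}$, however, cancels all such singular $t$-dependent factors, leaving a quantity that can be read off from the explicit WKB-type asymptotics. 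So the two formulas are really "evaluate the product / the ratio at $k=\tfrac{\i}{2}$" statements, and the proof is a careful bookkeeping of the factors $\sqrt[4]{(m+\omega)/\omega}$, the reconstruction phases, and the $g$-function.

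First I would write out $V_{\mathfrak{r},1}V_{\mathfrak{r},2}$ for $k\in D$ using the first line of (\ref{M_psi}): the two copies of $\sqrt[4]{(m+\omega)/\omega}$ multiply to $\sqrt{(m+\omega)/\omega}$, the factors $\e^{\i g_{\r}}$ and $\e^{-\i g_{\r}}$ cancel identically, and one is left with $\sqrt{(m+\omega)/\omega}\,a^{-1}(k)\,\varphi_\l(x,t;k)\varphi_\r(x,t;k)$. Next I would substitute the large-$k$ asymptotics (\ref{asymptotics varphi - k}), (\ref{asymptotics varphi + k}) together with $a^{-1}(k)=\e^{\i H_{-1}k}(1+\ord(k^{-1}))$ from (\ref{transmission coeff+- asymptotics}); the two $\sqrt[4]{\omega/(m+\omega)}$ factors from the Jost asymptotics cancel the $\sqrt{(m+\omega)/\omega}$, and the exponential phases combine into $\exp\{\i k(x - \int_x^\infty(\cdots) - \int_{-\infty}^x(\cdots) + x + \cdots) + \i H_{-1}k - \tfrac{\i\omega kt}{2\lambda}\cdot 2 - \cdots\}$ — by the definitions (\ref{y_x_-}) of $y$ and the conservation law $H_{-1}$ in (\ref{wp}), together with $\lambda = z^2 + \tfrac{\omega}{4(c+\omega)}$, all the $x$- and $t$-dependent terms in the exponent cancel, leaving just a convergent analytic function of $k$ near $k=\tfrac{\i}{2}$. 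Expanding that function to first order in $k - \tfrac{\i}{2}$, the constant term is $1$ and the linear coefficient collects $-\tfrac{1}{2\i k}(\int_{-\infty}^y(v+\tfrac{c}{4(c+\omega)}) + \int_y^{+\infty}v)$ plus the derivative of the exponent; identifying this combination with $\tfrac{2\i}{\omega}u(x,t)$ is the content of the reconstruction formula and uses the known relation between $u$, $m$, and the trace-type integral of $v$. For the ratio $V_{\mathfrak{r},1}/V_{\mathfrak{r},2}$ at $k=\tfrac{\i}{2}$, the $\sqrt[4]{(m+\omega)/\omega}$ factors and the $a^{-1}$ factor need more care (one should use that $a^{-1}_\pm(0)=0$ does \emph{not} interfere since $k=\tfrac{\i}{2}$ is interior to $D$), but the exponential phases now give $\e^{2\i g_{\r}}$ times the ratio of the two Jost exponentials, and again by the definition of $y$ this collapses to $\e^{x-y}$ at $k=\tfrac{\i}{2}$.

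The main obstacle I anticipate is the delicate cancellation of the singular-in-$\lambda$ (equivalently, singular as $k\to\tfrac{\i}{2}$) exponential factors: the $t$-equation introduces $\exp\{\pm\i\omega kt/2\lambda\}$ in (\ref{asymptotics varphi + x}) and a $z/2\lambda$ term in $\varphi_\l$, and $g_{\r}(y,t;k) = ky - 2\omega k t/(4k^2+1)$ also has a pole at $k=\tfrac{\i}{2}$ since $4k^2+1 = 4\lambda$. One has to verify that in the \emph{product} $V_{\mathfrak{r},1}V_{\mathfrak{r},2}$ these poles cancel exactly (they must, since the left side of (\ref{M1*M2}b) is claimed analytic at $k=\tfrac{\i}{2}$), and in the \emph{ratio} they combine rather than cancel, producing the finite answer $\e^{x-y}$ — tracking the precise coefficients through the definitions of $g_{\r}$, $y$, and $\lambda$ is where the real work lies. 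A secondary subtlety is making sure the symmetry relations and the choice of branch of $\sqrt[4]{(m+\omega)/\omega}$ are consistent so that $V_{\mathfrak{r},2}(x,t;\tfrac{\i}{2})\neq 0$ and the ratio is well defined; this follows from the non-vanishing established implicitly by Lemma~\ref{Lemma_properties_a,b,r} and the structure of (\ref{M_psi}). Beyond these points the argument is routine asymptotic bookkeeping, so I would present the two identities as consequences of substituting the explicit asymptotics into (\ref{M_psi}) and simplifying with (\ref{y_x_-}), (\ref{wp}), and (\ref{lambda_k_z}).
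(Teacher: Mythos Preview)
The paper does not actually prove this lemma here; it is quoted from the companion paper \cite{M15} (Lemma 5.1) and stated without argument. So there is no ``paper's own proof'' to compare against in this text.

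That said, your proposed argument has a genuine gap. You plan to feed the large-$k$ expansions (\ref{asymptotics varphi - k}), (\ref{asymptotics varphi + k}), (\ref{transmission coeff+- asymptotics}) into the product $V_{\mathfrak r,1}V_{\mathfrak r,2}$ and into the ratio, simplify the exponents using $H_{-1}$ and the definition of $y$, and then ``expand that function to first order in $k-\tfrac{\i}{2}$''. But those formulas are asymptotic as $k\to\infty$: the tails $\ord(k^{-2})$ are not explicit analytic functions you can evaluate at a finite point, they are error bounds valid only at large $k$. Knowing that $V_{\mathfrak r,1}V_{\mathfrak r,2}\to 1$ as $k\to\infty$, together with its $\ord(k^{-1})$ correction there, tells you nothing about its Taylor coefficients at $k=\tfrac{\i}{2}$. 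The same objection applies to your treatment of the ratio.

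The correct route is local analysis at $k=\tfrac{\i}{2}$ itself. The point $k=\tfrac{\i}{2}$ corresponds to $\lambda=0$, where the $x$-equation (\ref{Shturm_x-eq}) becomes $-\varphi_{xx}''+\tfrac14\varphi=0$, independent of $m$; its solutions are explicitly $\e^{\pm x/2}$. After stripping off the essential-singularity factors (using that $\e^{\i\omega kt/2\lambda}\varphi_\r$ and $\e^{-\i zt\sqrt{(c+\omega)\omega}/2\lambda}\varphi_\l$ are analytic at $k=\tfrac{\i}{2}$, which is exactly what Lemma~\ref{lem_Jost_solutions} guarantees), the normalized Jost solutions at this point are determined explicitly by matching to $\e^{\mp x/2}$; combining with $g_\r(y,t;k)=ky-\omega kt/2\lambda$ one reads off $V_{\mathfrak r,2}(\tfrac{\i}{2})=\sqrt[4]{(m+\omega)/\omega}\,\e^{(y-x)/2}$ and the analogous expression for $V_{\mathfrak r,1}$, whence (\ref{M1*M2}a). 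Relation (\ref{M1*M2}b) then comes from differentiating the spectral problem in $\lambda$ at $\lambda=0$ (equivalently, the first two terms of the Liouville-transformed Schr\"odinger problem in the $y$ variable), not from the $k\to\infty$ expansion. Your identification of which singular factors must cancel in the product and which survive in the ratio is correct and useful; what is missing is replacing the large-$k$ asymptotics by this local $\lambda=0$ computation.
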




\section{Phase functions.}\label{sect_phase function} In this section we construct the necessary phase functions 
and the signature table for their imaginary part. Since the RH problem is formulated in terms of $y(x)$, 
rather than in terms of $x$,
 in 
the sequel we study the problem in terms of the variables $y,t$. We study the asymptotics along the rays 
$y= Ct,$ where 
$C$ is a constant, and hence 
in asymptotic analysis we use a ``slow" variable

$\hskip5cm\xi\equiv \dsfrac{y}{\omega t},$\\
 and a fast variable $t$; hence instead of the phase function $g_{\r}(y,t;k)$ from (\ref{M_psi}) 
it is convenient to introduce a phase function which depends 
only on $\xi\equiv \frac{y}{\omega t}$ and $t$.
\subsection{The right and the left phase function.}
As was shown in \cite{M15}, in asymptotic analysis for large positive $x$ and for large negative $x$ we 
can use the ``right" and the ``left" phase functions,
\begin{eqnarray}\label{phase grgl}
g_{\r}(k,\xi)&=&k\omega\xi-\frac{2k\omega}{4k^2+1},\qquad \textrm{ where } \xi=\frac{y}{\omega t},\\
g_{\l}(k,\xi)&=&\omega\xi\sqrt{k^2+\c^2\ }-\frac{2\omega\sqrt{k^2+\c^2}}{(1-4\c^2)(4k^2+1)},
\end{eqnarray}
(here $\c$ is from (\ref{chat})), which naturally appear in asymptotics for the Jost solutions of the Lax pair.
The signature table for $\Im g_{\r}(k,\xi)$ is drawn in Figures \ref{Signature table right 123}, 
\ref{Signature table right 4567}. 
The stationary phase points $\theta'_k(k,\xi)=0$ are given as $\i\mu_{0,\r}, \i\mu_{1,\r}$, where
\\$\qquad\mu_{0,\mathfrak{r}}(\xi)=\frac{1}{2}\sqrt{\frac{\xi+1-\sqrt{1+4\xi}}{-\xi}},\qquad
\mu_{1,\mathfrak{r}}(\xi)=\frac{1}{2}\sqrt{\frac{\xi+1+\sqrt{1+4\xi}}{-\xi}}.
$
\vskip-14mm
\begin{equation}
\label{mur01}
\end{equation}
\vskip-2mm
\noindent The quantity $d_{0,\mathfrak{r}}$ is defined
by the formula
$d_{0,\mathfrak{r}} = \frac{1}{2}\sqrt{\frac{\xi - 2}{\xi}}.$

\noindent
Once the signature table for $\Im g_{\r}$ is constructed, the one for $\Im g_{\l}$ is obtained by using the relation
\noindent $g_{\l}(k,\xi)=
\frac{c+\omega}{\omega} g_\r\(\hat z,\hat\xi\), $ 
where
$\quad\hat z
=
\dsfrac{\sqrt{k^2+\c^2}}{\sqrt{1-4\c^2}}$, $\ \hat\xi
= \(\frac{\omega}{c+\omega}\)^{3/2}\xi.$
Under the conformal mapping $\hat z\mapsto k$, the point $\hat z =
\frac{\i}{2}$ comes to the point $k = \frac{\i}{2}$
and the segment $\hat
z\in[-\frac{1}{2}\sqrt{\frac{c}{\omega}},
\frac{1}{2}\sqrt{\frac{c}{\omega}}]$ comes to the interval $k\in
[\frac{\i}{2}\sqrt{\frac{c}{c+\omega}},
-\frac{\i}{2}\sqrt{\frac{c}{c+\omega}}]=[\i\c,-\i\c].$
Therefore, an important role is played by the mutual location of the quantities
$\frac{1}{2}\sqrt{\frac{c}{\omega}}$, $\frac{\sqrt{3}}{2}$ and
$\frac{1}{2},$ and thus, we distinguish the cases 
\begin{itemize}
\item $\frac c\omega>3,$ 
\item $1<\frac c\omega<3,$ 
\item $0<\frac c\omega<1.$
\end{itemize}
The stationary phase points 
$\ \hat z_{0,\mathfrak{l}}(\hat\xi)=\frac{1}{2}\sqrt{\frac{\hat\xi+1-\sqrt{1+4\hat\xi}}{-\hat\xi}},\
\hat
z_{1,\mathfrak{l}}(\hat\xi)=\frac{1}{2}\sqrt{\frac{\hat\xi+1+\sqrt{1+4\hat\xi}}{-\hat\xi}}$\\
are equal to $\frac{1}{2}\sqrt{\frac{c}{\omega}}$ when $\hat\xi =
\frac{-2(c-\omega)}{(c+\omega)^2}.$ The corresponding quantities
in the $k-$ plane are equal to $\i\mu_{0,\l}$, $\i\mu_{1,\l}$, where
%
\\$\mu_{0,\l}=\frac{1}{2}\sqrt{\frac{\xi\sqrt{1-4\c^2}+1-\sqrt{1+4\(1-4\c^2\)^{3/2}\xi}}{\xi\sqrt{1-4\c^2}}},
\mu_{1,\l}=\frac{1}{2}\sqrt{\frac{\xi\sqrt{1-4\c^2}+1+\sqrt{1+4\(1-4\c^2\)^{3/2}\xi}}{\xi\sqrt{1-4\c^2}}}.$
\vskip-14mm\begin{equation}
\label{mul01}
\end{equation}
%
The signature table for $\mbox{Im} g_{\l}(k,\xi)$ is drawn in Figures \ref{Signature table left 1234}--\ref{Signature table left(3) 5678}.


\begin{figure}[ht!]
\begin{minipage}[ht!]{0.3\linewidth}
\begin{center}
\epsfig{width=40mm,figure=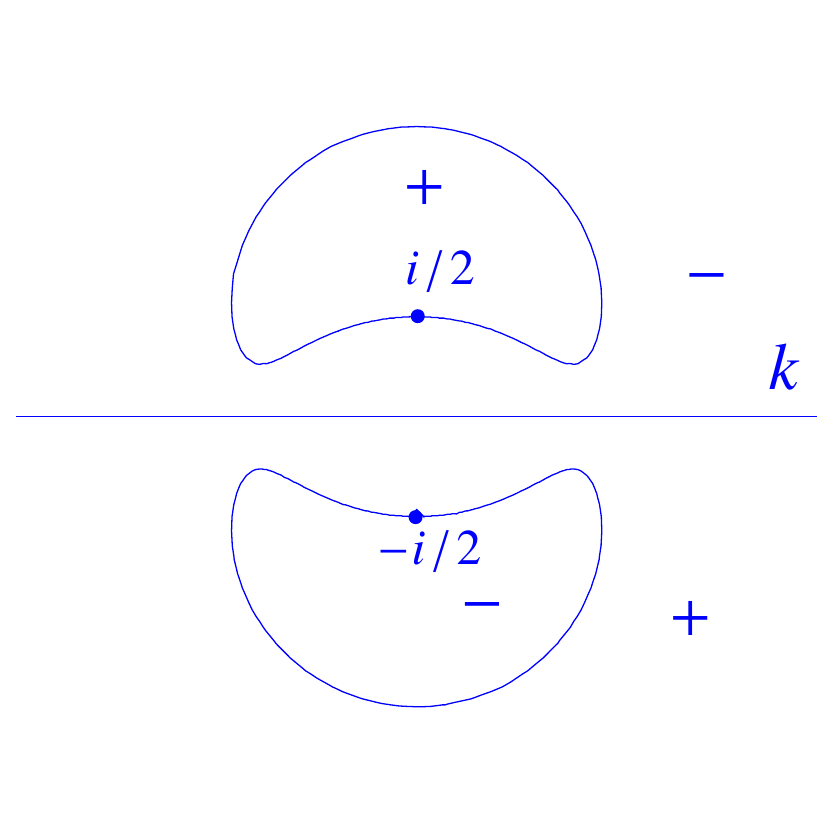}\\
$\xi<\dsfrac{-1}{4}$
\end{center}
\end{minipage}
\begin{minipage}[ht!]{0.3\linewidth}
\begin{center}
\epsfig{width=40mm,figure=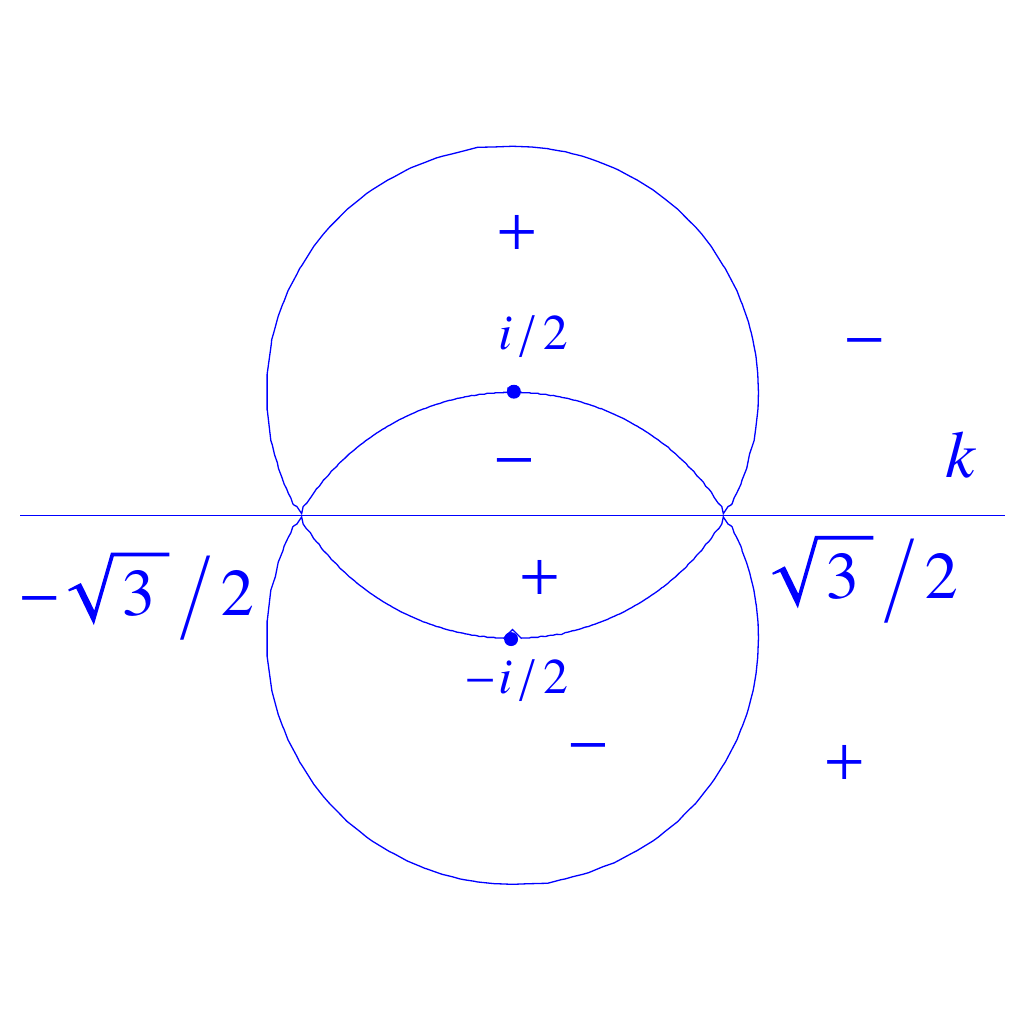}\\
$\xi=\dsfrac{-1}{4}$
\end{center}
\end{minipage}
\begin{minipage}[h!]{0.3\linewidth}
\begin{center}
\epsfig{width=40mm,figure=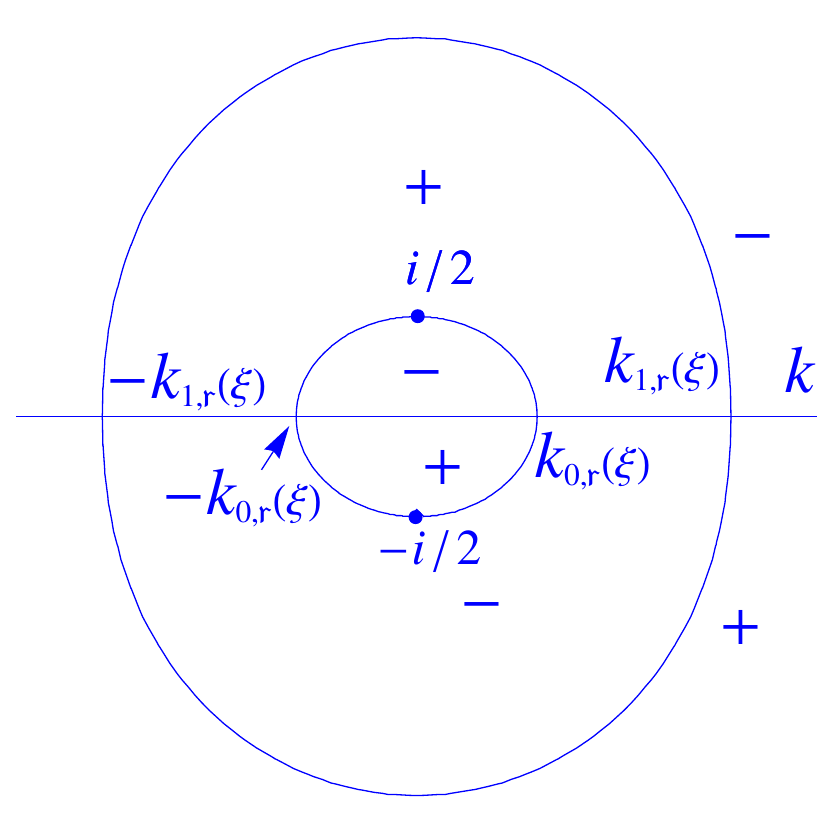}\\
$\dsfrac{-1}{4}<\xi<0$
\end{center}
\end{minipage}
\caption{Signature table for $\Im g_{\r}(k,\xi).$ The plotted contours are the lines $\Im g_{\r}=0.$}
\label{Signature table right 123}
\end{figure}
\begin{figure}[ht!]
\begin{minipage}[h!]{0.24\linewidth}
\begin{center}
\epsfig{width=30mm,figure=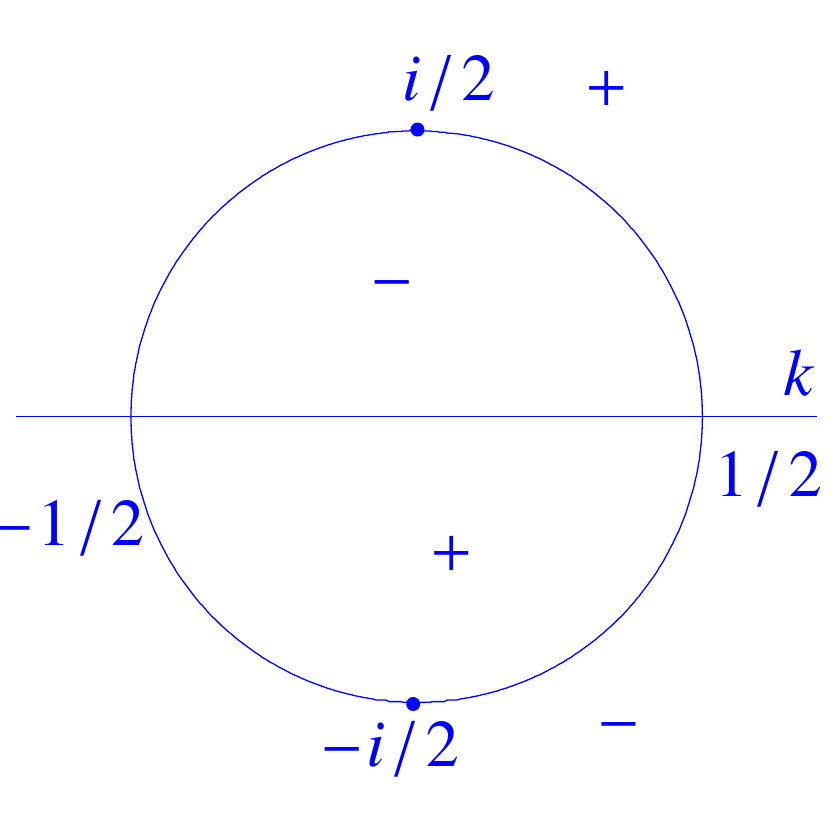}\\
$\xi=0$
\end{center}
\end{minipage}
\begin{minipage}[ht!]{0.24\linewidth}
\begin{center}
\epsfig{width=30mm,figure=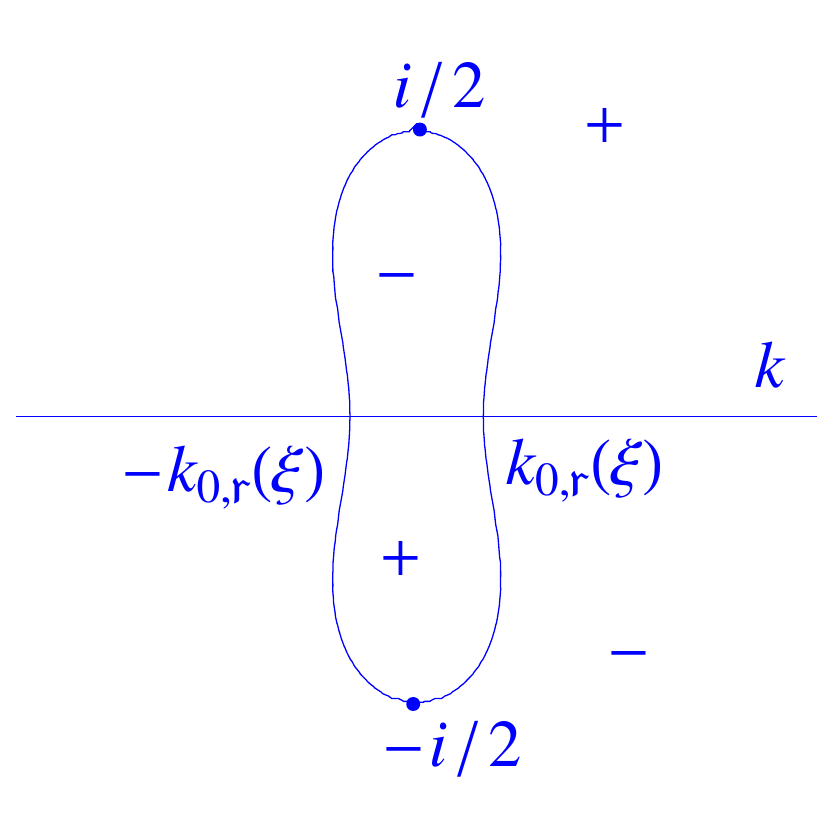}\\
$0<\xi<2$
\end{center}
\end{minipage}
\begin{minipage}[ht!]{0.24\linewidth}
\begin{center}
\epsfig{width=30mm,figure=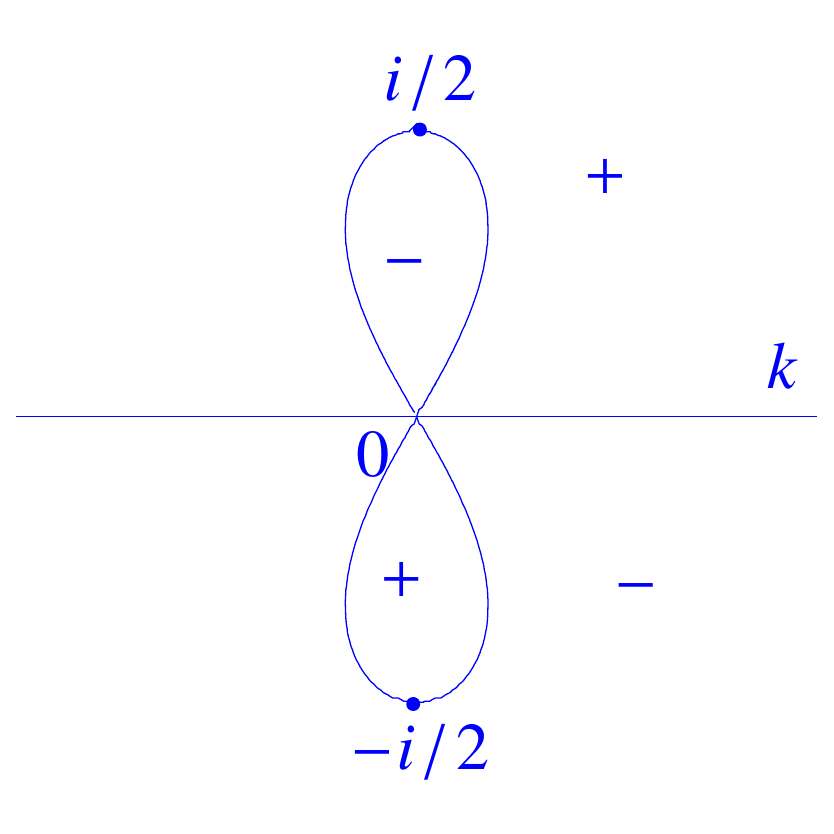}\\
$\xi=2$
\end{center}
\end{minipage}
\begin{minipage}[ht!]{0.24\linewidth}
\begin{center}
\epsfig{width=30mm,figure=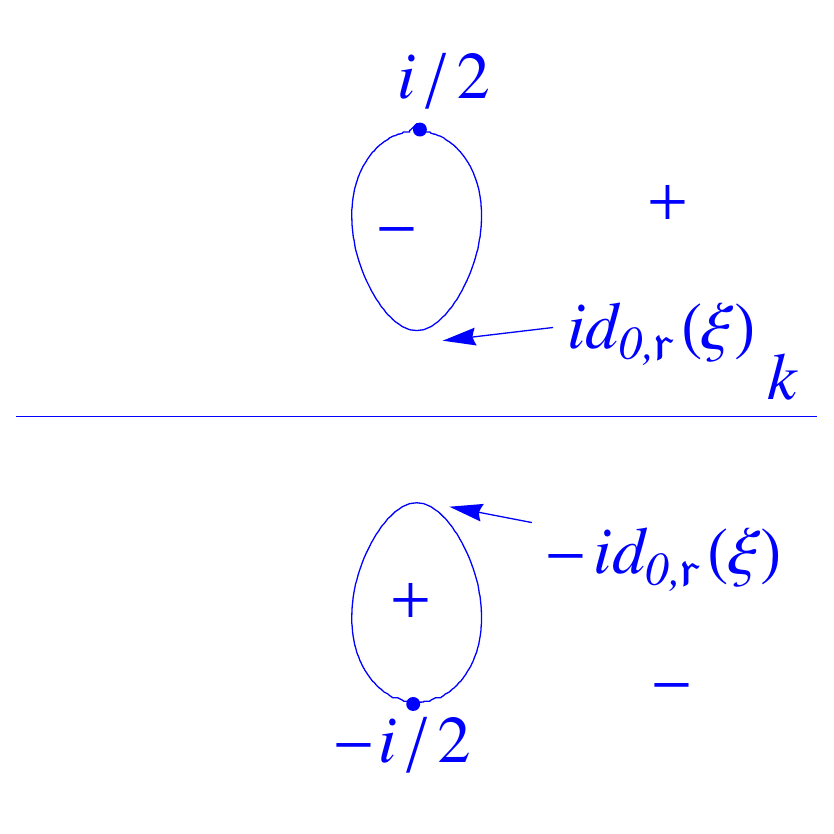}\\
$\xi>2$
\end{center}
\end{minipage}
\caption{Signature table for $\Im g_{\r}(k,\xi).$ The plotted contours are the lines $\Im g_{\r}=0.$}
\label{Signature table right 4567}
\end{figure}
\begin{figure}[ht!]
\begin{minipage}[ht!]{0.24\linewidth}
\begin{center}
\epsfig{width=30mm,figure=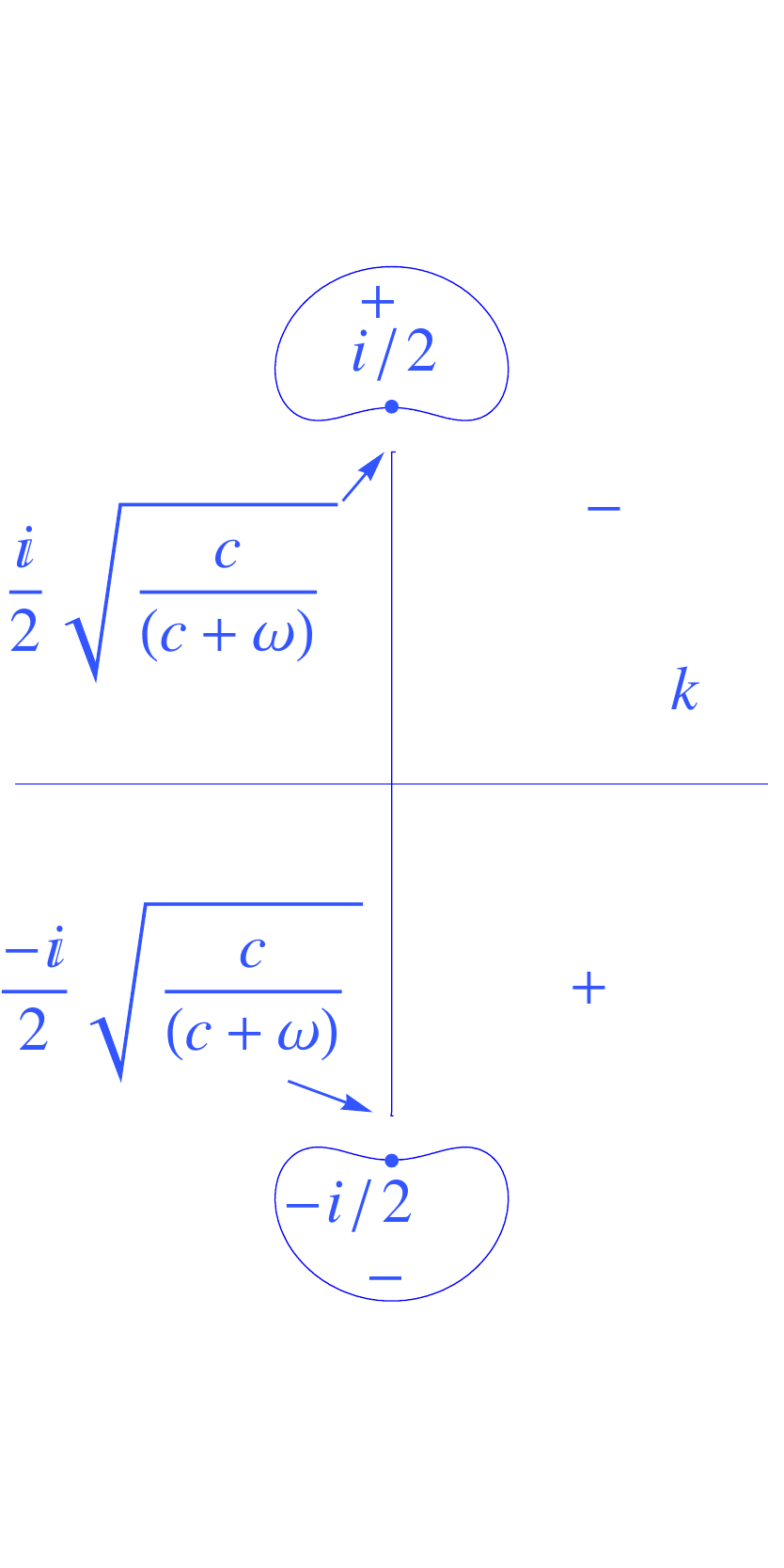}\\
$\hat\xi<\dsfrac{-1}{4}$
\end{center}
\end{minipage}
\begin{minipage}[ht!]{0.24\linewidth}
\begin{center}
\epsfig{width=30mm,figure=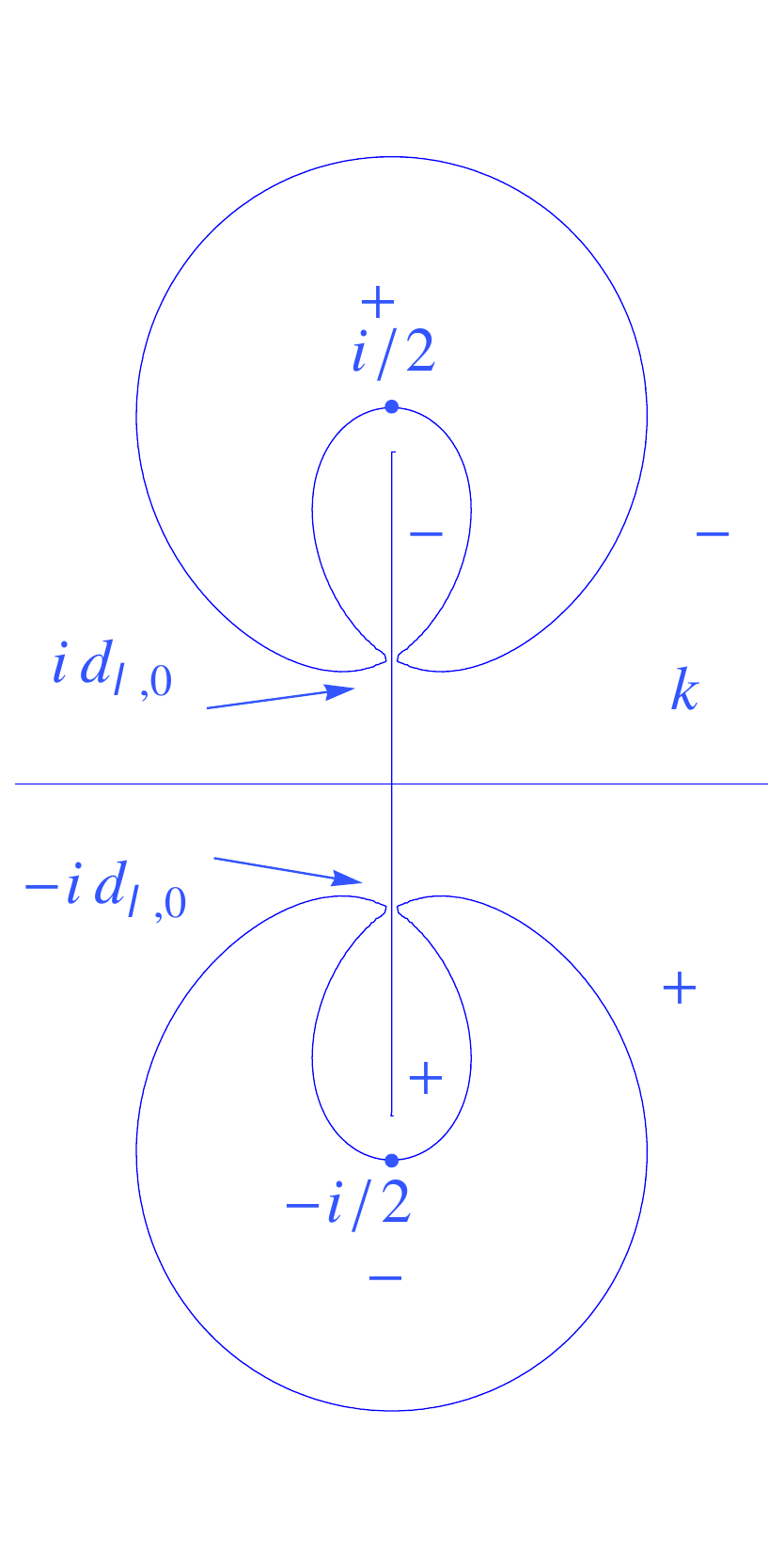}\\
$\hat\xi=\dsfrac{-1}{4}$
\end{center}
\end{minipage}
\begin{minipage}[ht!]{0.24\linewidth}
\begin{center}
\epsfig{width=30mm,figure=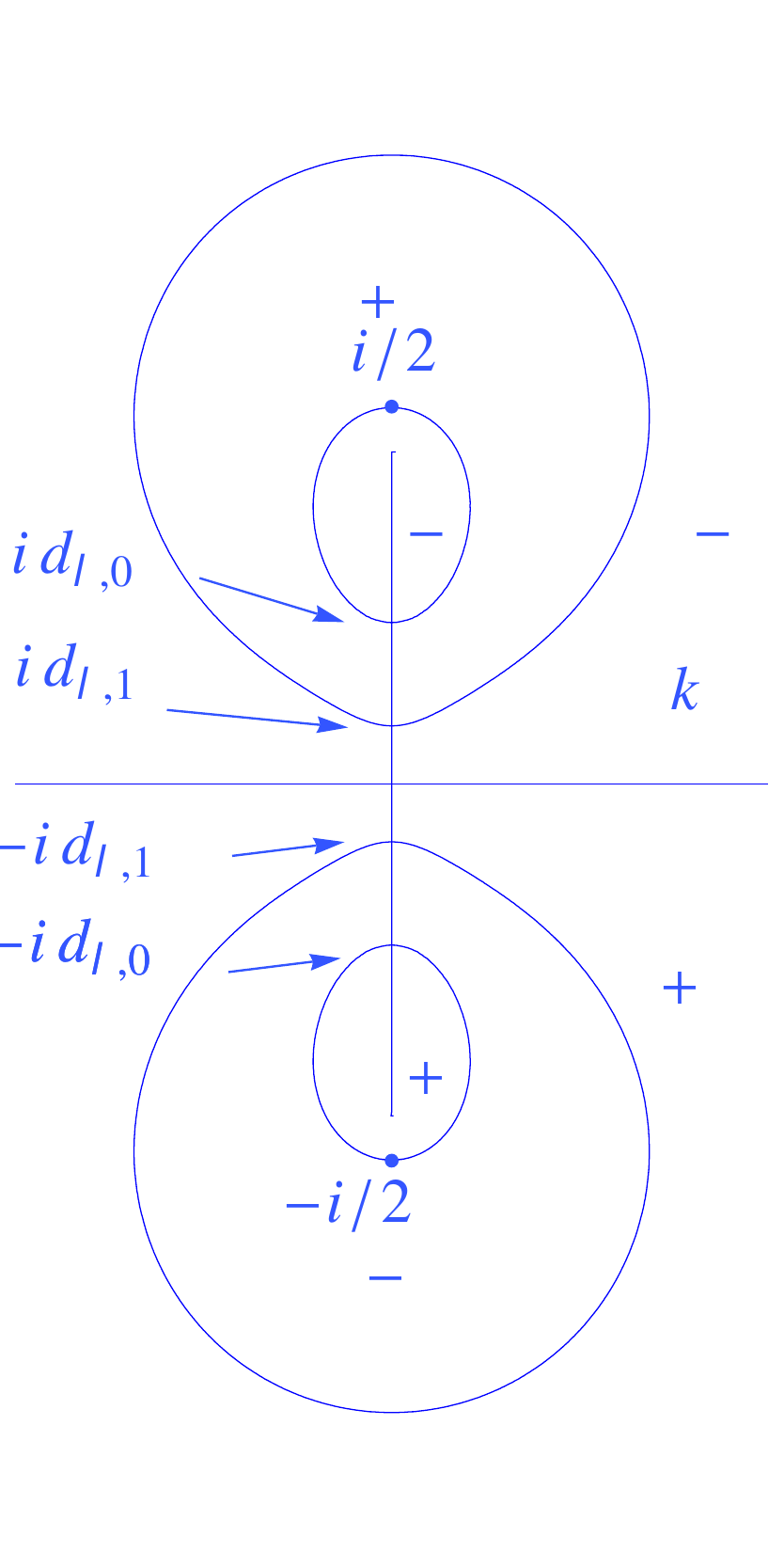}\\
$\frac{-1}{4}<\hat\xi<\frac{-2(c-\omega)\omega}{(c+\omega)^2}$
\end{center}
\end{minipage}
\begin{minipage}[ht!]{0.24\linewidth}
\begin{center}
\epsfig{width=30mm,figure=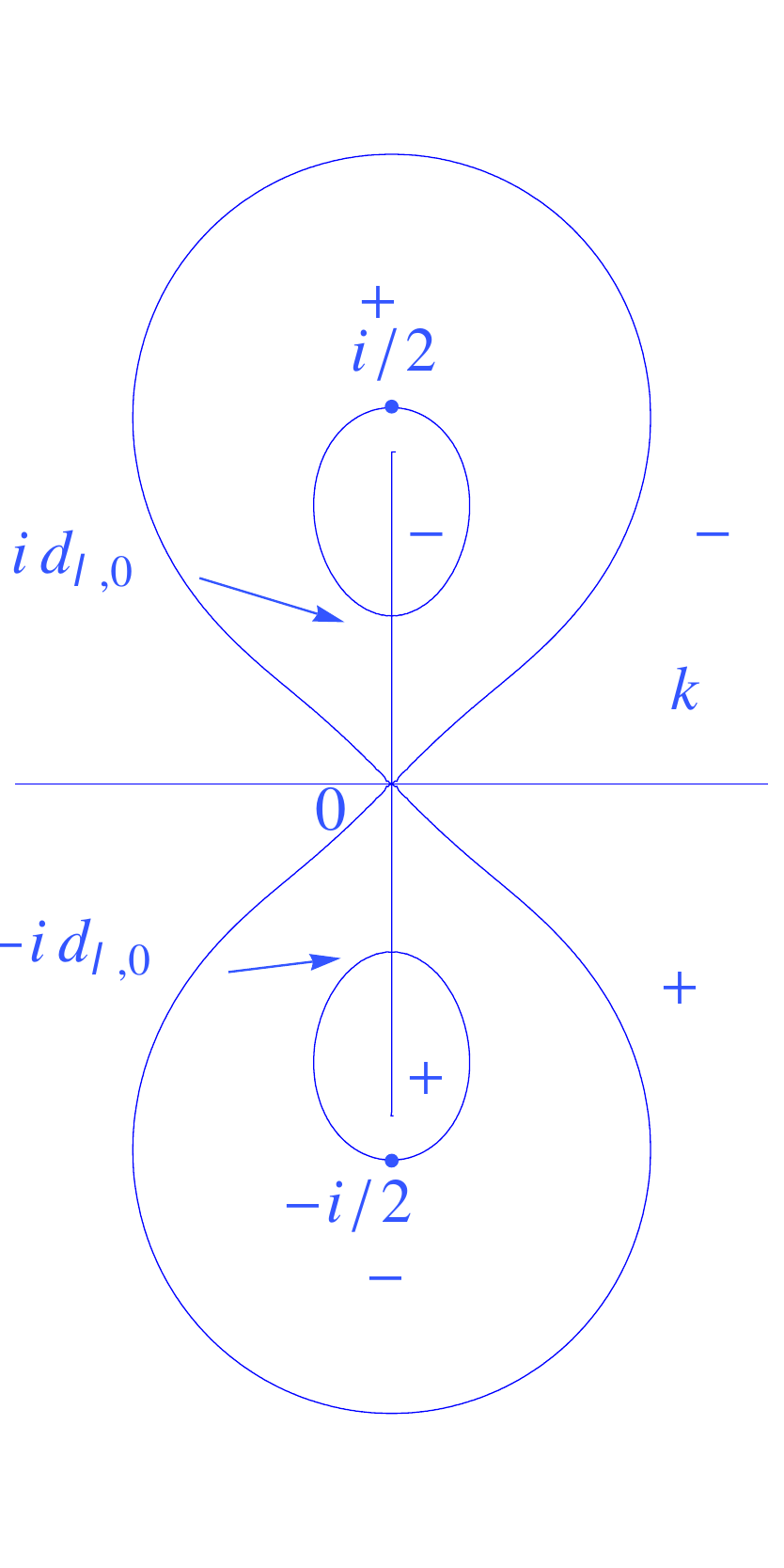}\\
$\hat\xi=\frac{-2(c-\omega)\omega}{(c+\omega)^2}$
\end{center}
\end{minipage}
\caption{Signature table for $\Im g_{\l}(k,\xi)$ when
$\frac{c}{\omega}>3$. The plotted contours are the lines $\Im g_{\l}=0.$} \label{Signature table left 1234}
\end{figure}
\begin{figure}[ht!]
\begin{minipage}[ht!]{0.24\linewidth}
\begin{center}
\epsfig{width=30mm,figure=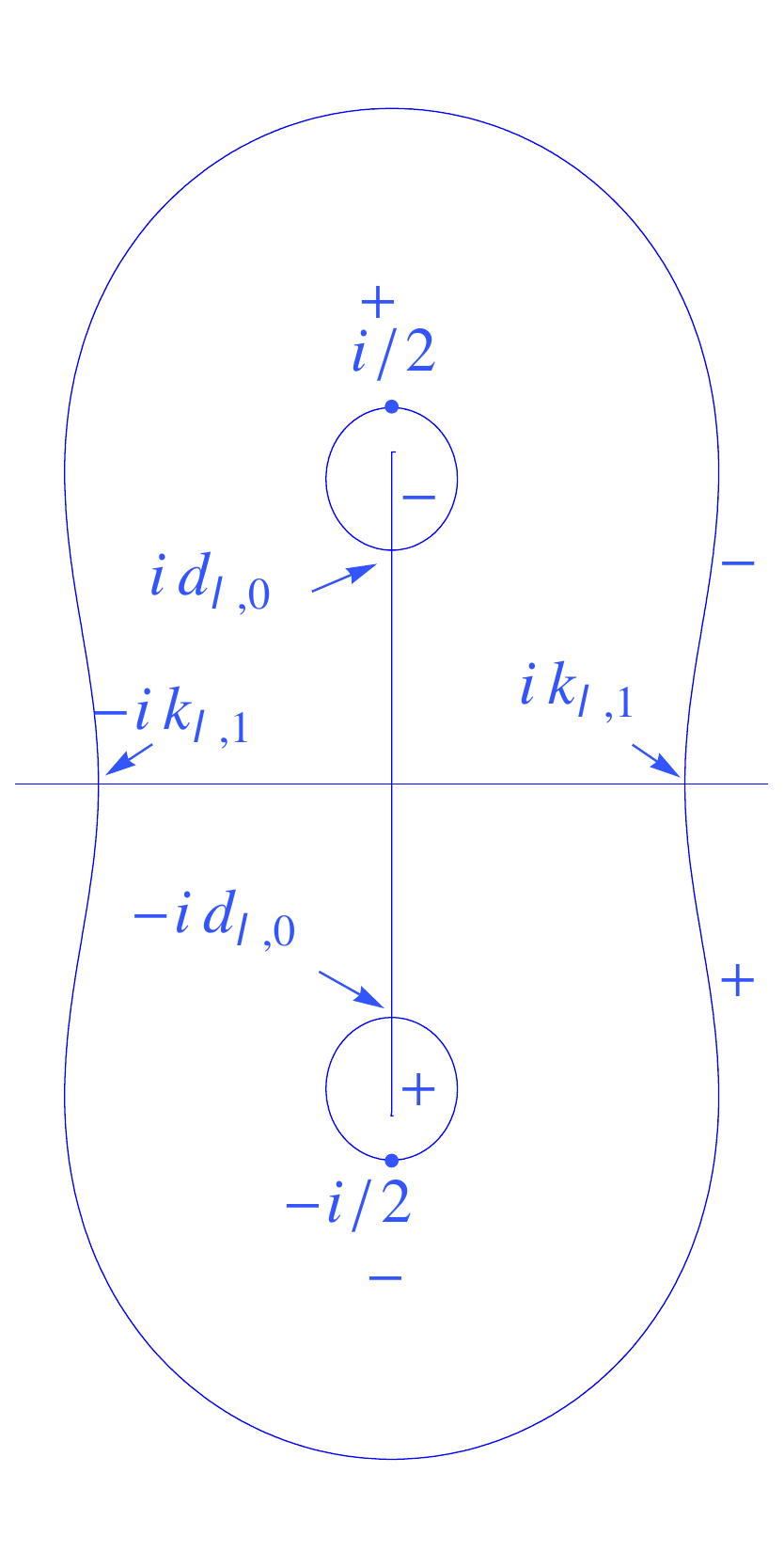}\\
$\frac{-2(c-\omega)\omega}{(c+\omega)^2}<\hat\xi<0$
\end{center}
\end{minipage}
\begin{minipage}[ht!]{0.24\linewidth}
\begin{center}
\epsfig{width=30mm,figure=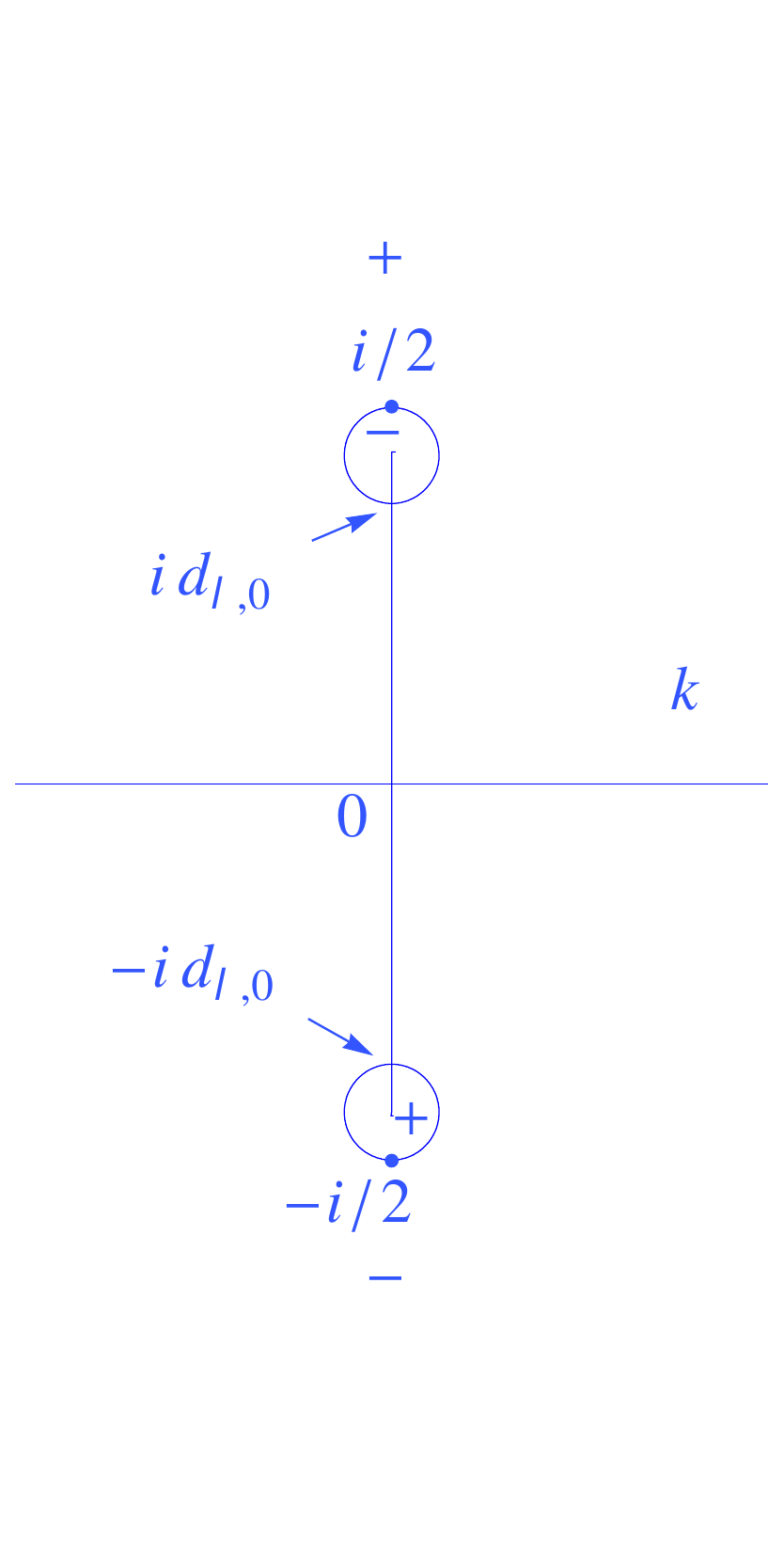}\\
$0\leq\hat\xi<2$
\end{center}
\end{minipage}
\begin{minipage}[ht!]{0.24\linewidth}
\begin{center}
\epsfig{width=30mm,figure=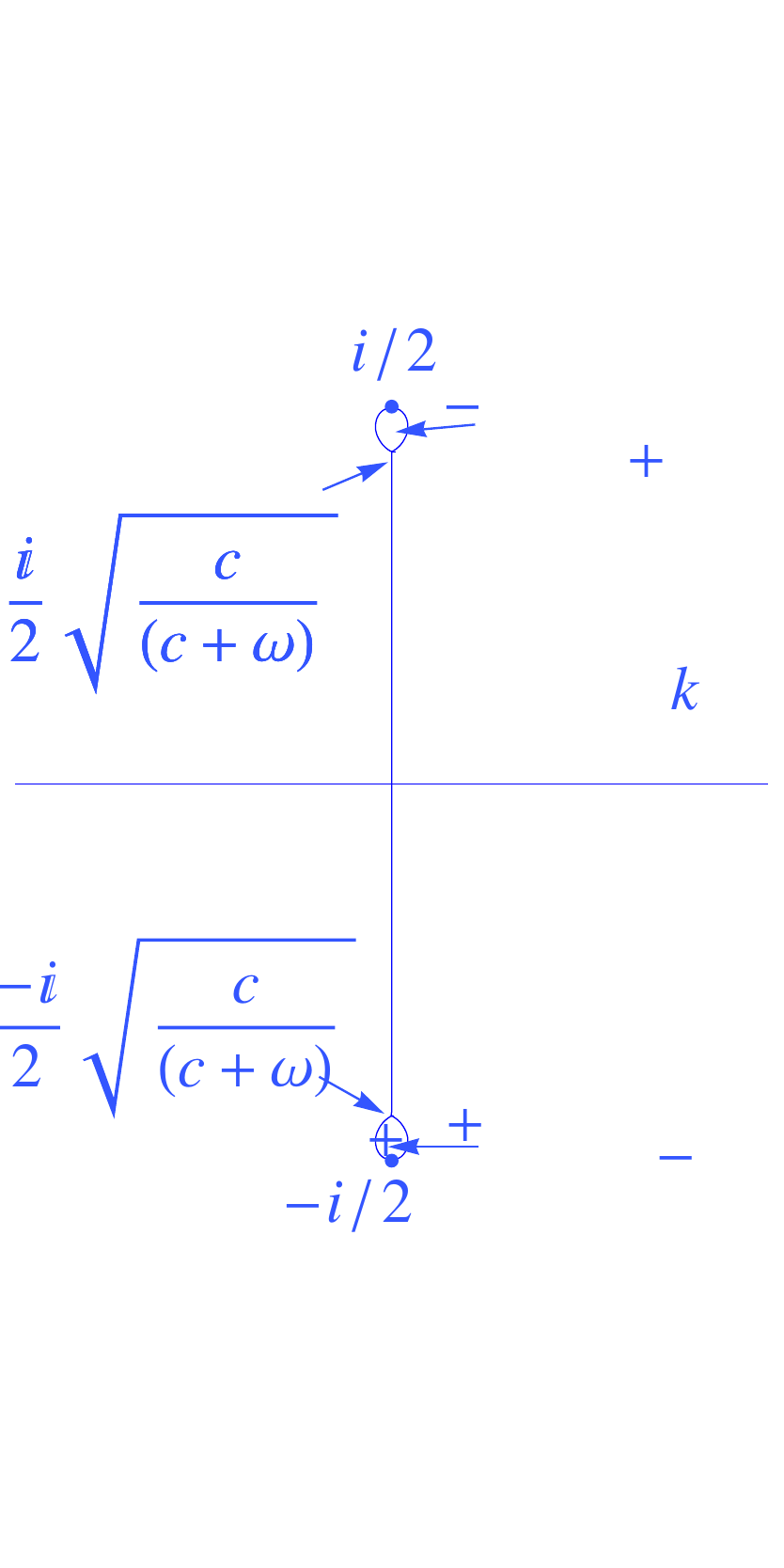}\\
$\hat\xi=2$
\end{center}
\end{minipage}
\begin{minipage}[ht!]{0.24\linewidth}
\begin{center}
\epsfig{width=30mm,figure=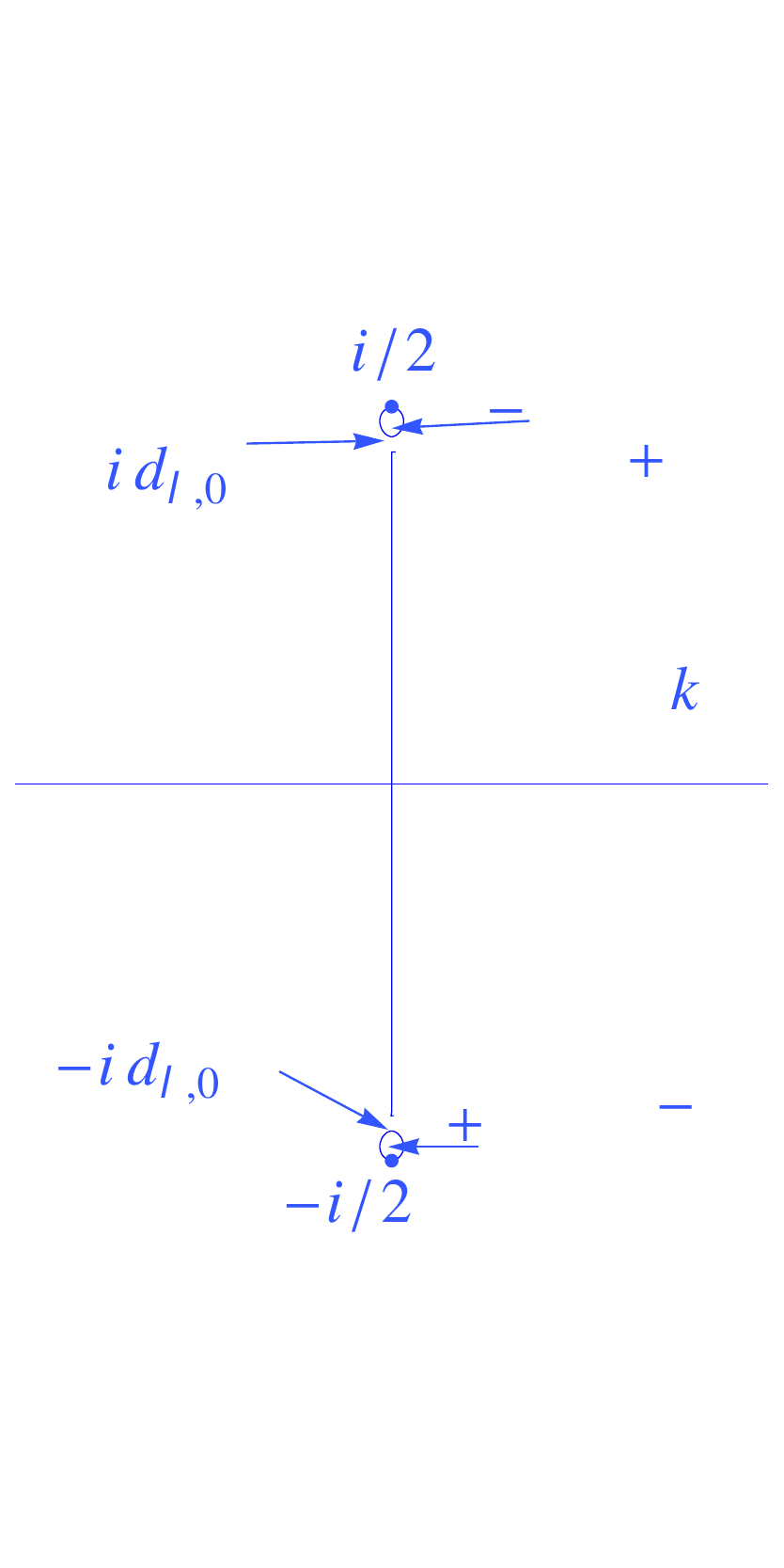}\\
$\hat\xi > 2$
\end{center}
\end{minipage}
\caption{Signature table for $\Im g_{\l} (k,\xi)$ when
$\frac{c}{\omega}>3$. The plotted contours are the lines $\Im g_{\l}=0.$} \label{Signature table left 5678}
\end{figure}


\begin{figure}[ht!]
\begin{minipage}[ht]{0.24\linewidth}
\begin{center}
\epsfig{width=30mm,figure=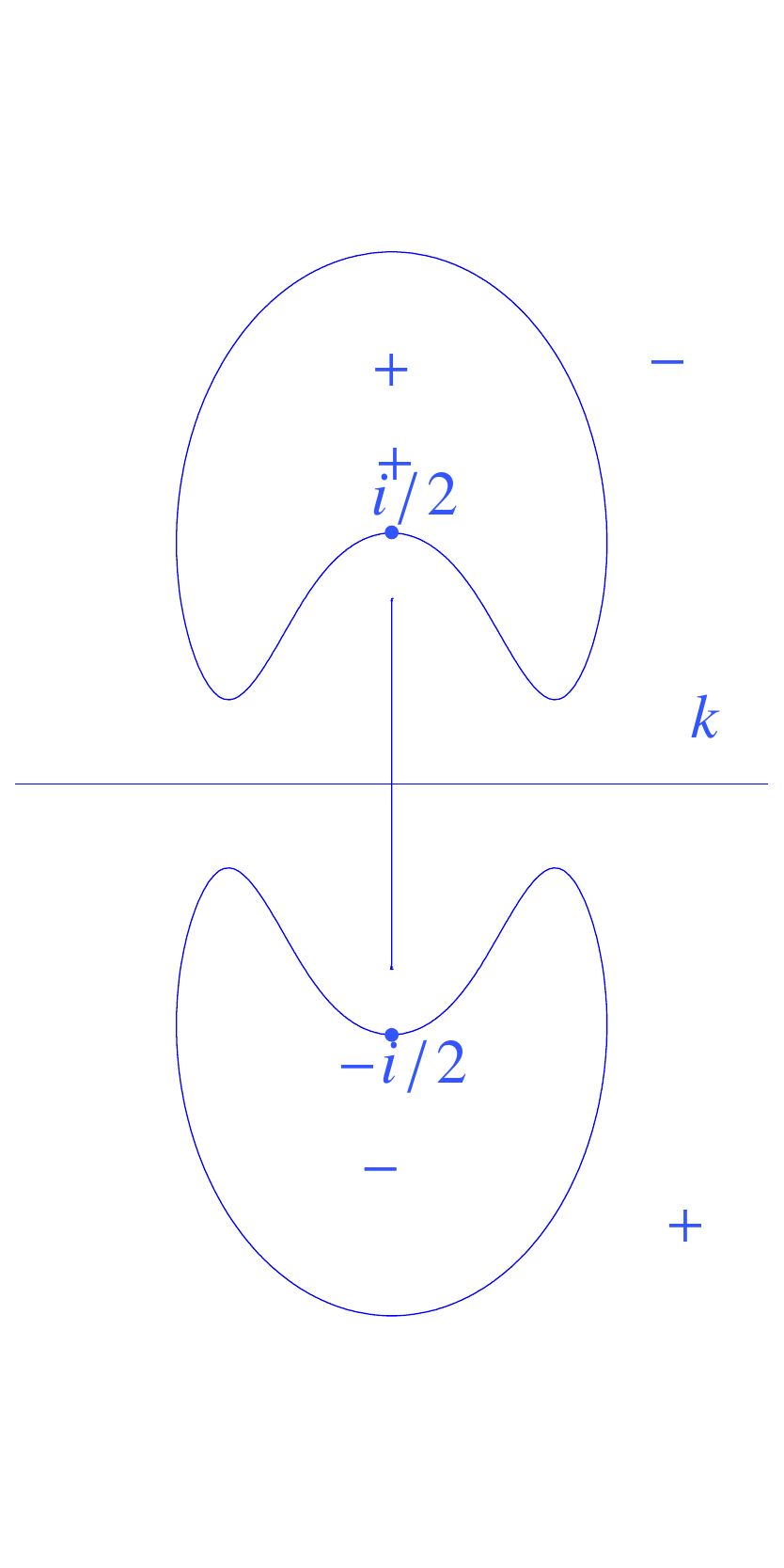}\\
$\hat\xi<\dsfrac{-1}{4}$
\end{center}
\end{minipage}
\begin{minipage}[ht!]{0.24\linewidth}
\begin{center}
\epsfig{width=30mm,figure=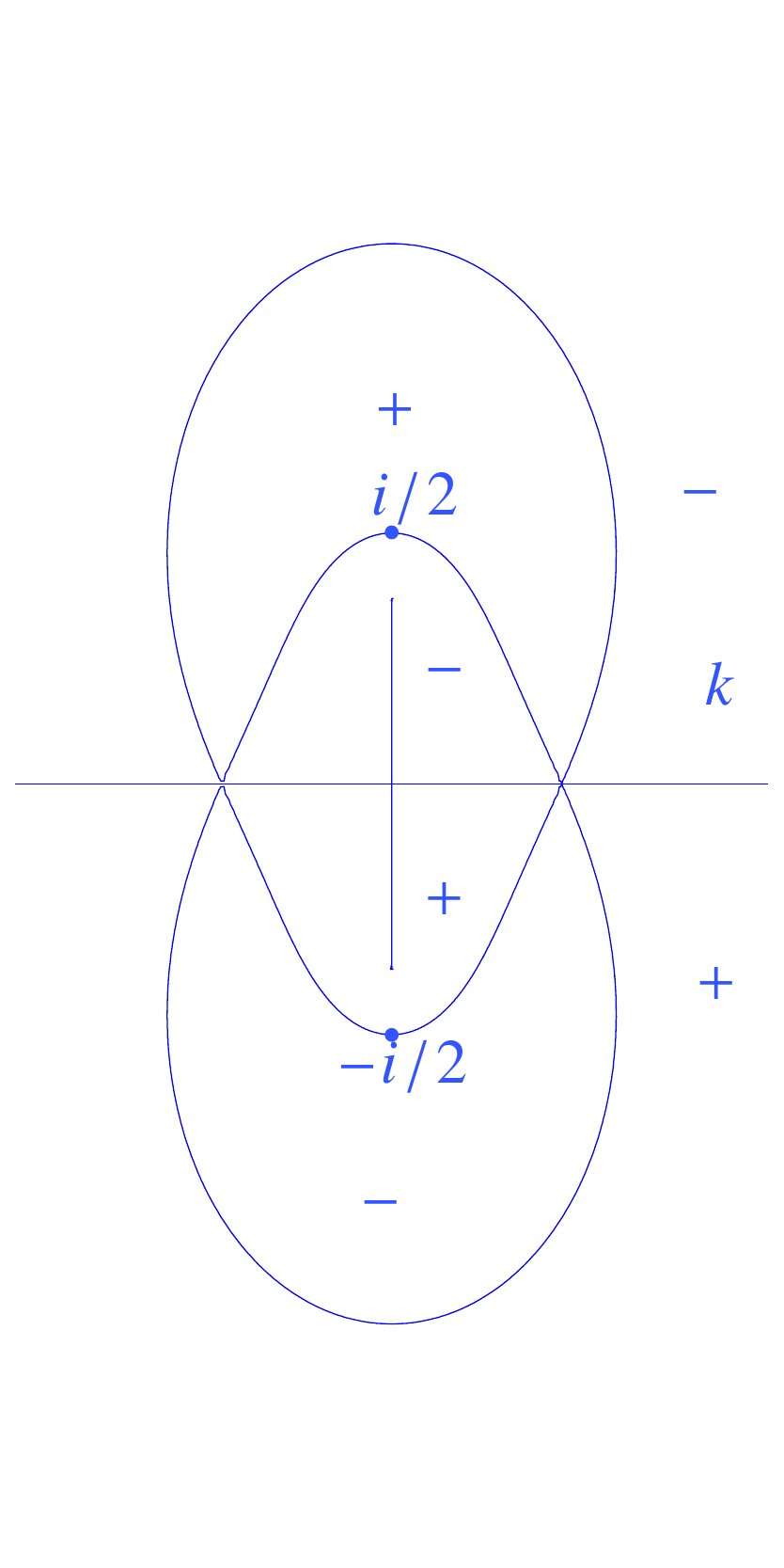}\\
$\hat\xi=\dsfrac{-1}{4}$
\end{center}
\end{minipage}
\begin{minipage}[ht!]{0.24\linewidth}
\begin{center}
\epsfig{width=30mm,figure=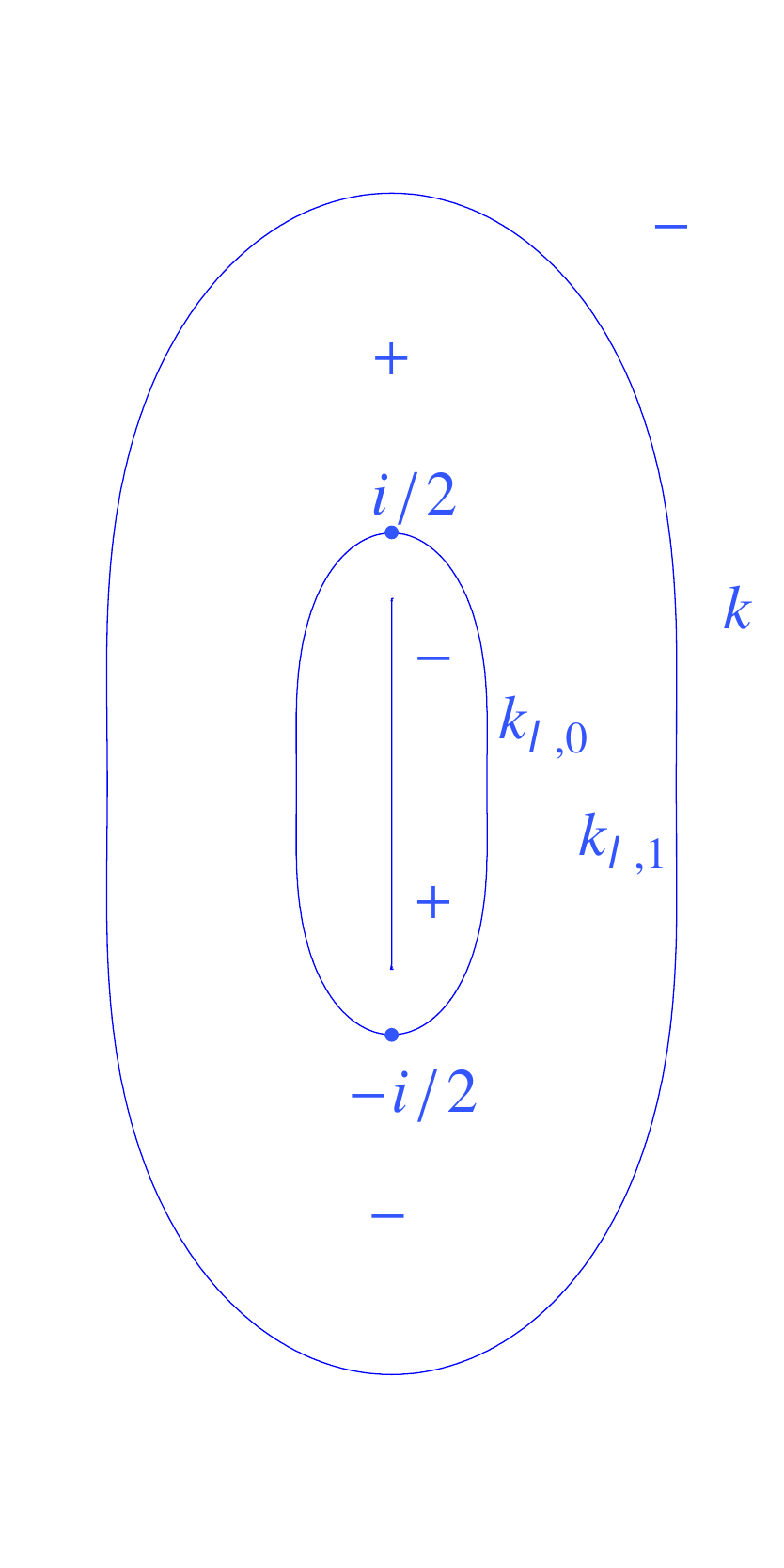}\\
$\frac{-1}{4}<\hat\xi<\frac{-2(c-\omega)\omega}{(c+\omega)^2}$
\end{center}
\end{minipage}
\begin{minipage}[ht!]{0.24\linewidth}
\begin{center}
\epsfig{width=30mm,figure=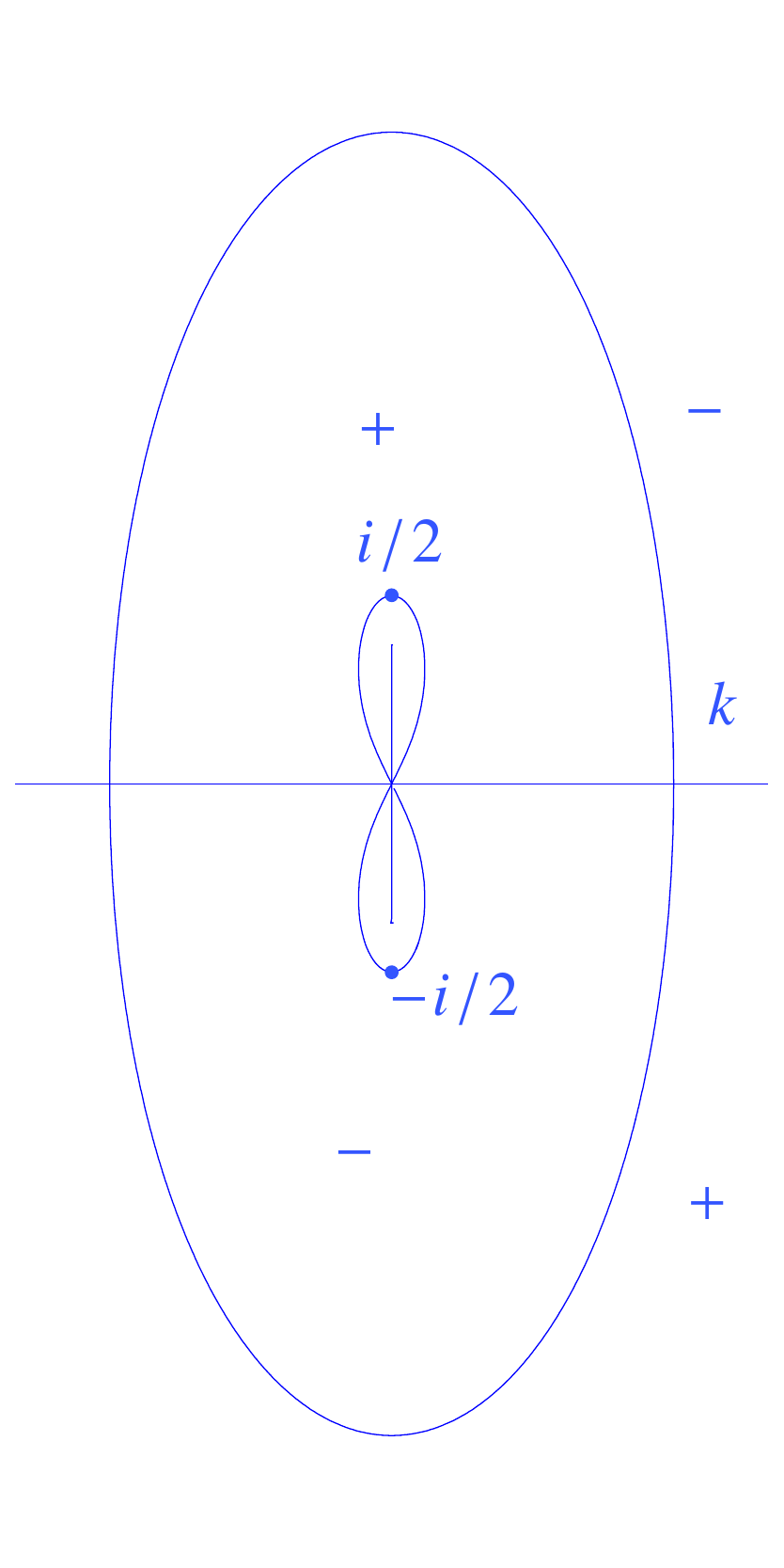}\\
$\hat\xi=\frac{-2(c-\omega)\omega}{(c+\omega)^2}$
\end{center}
\end{minipage}
\caption{Signature table for $\Im g_{\l}(k,\xi)$ when
$1<\frac{c}{\omega}<3$. The plotted contours are the lines $\Im g_{\l}=0.$} \label{Signature table left(2) 1234}
\end{figure}
\begin{figure}[ht!]
\begin{minipage}[ht!]{0.24\linewidth}
\begin{center}
\epsfig{width=30mm,figure=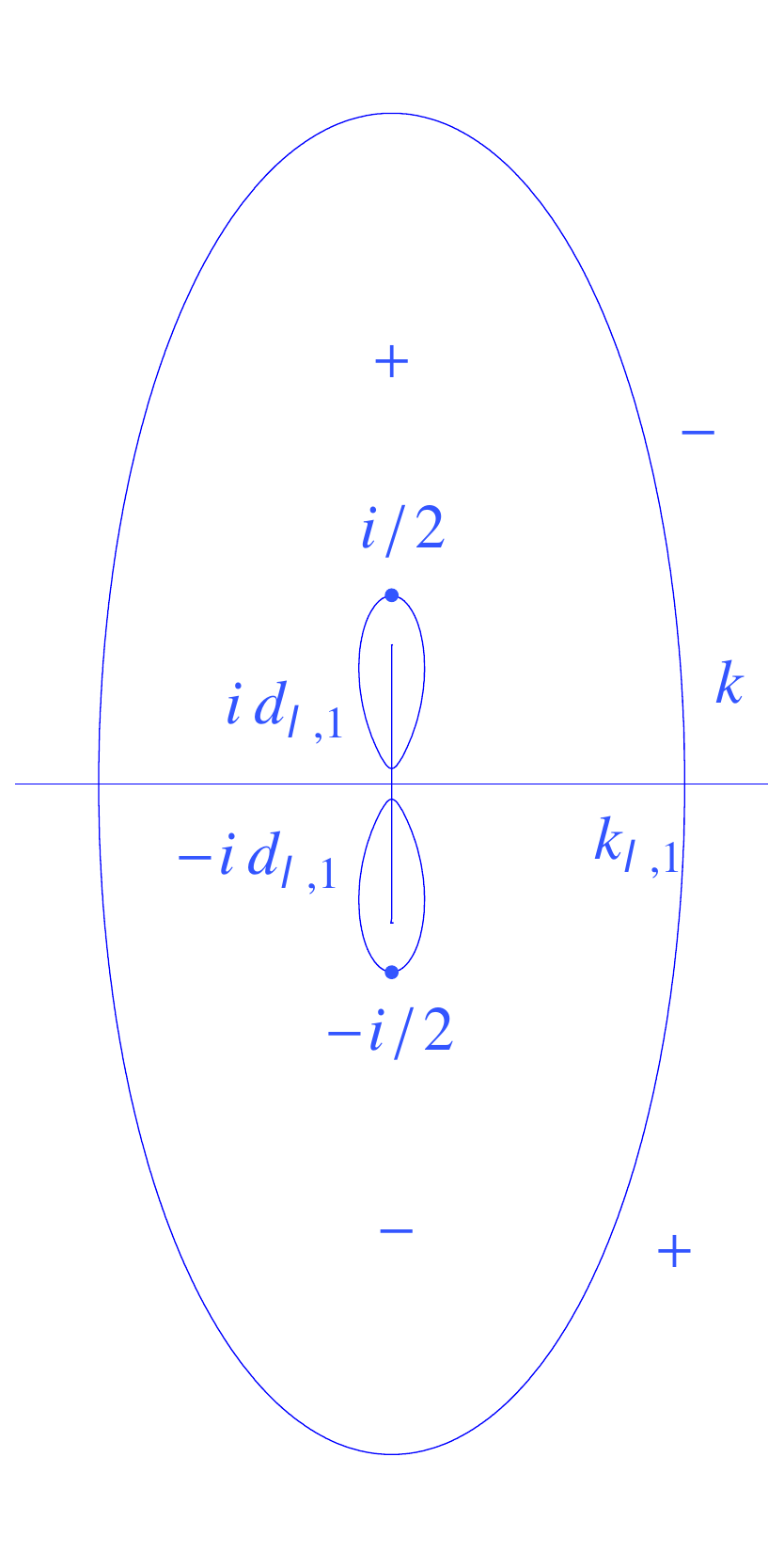}\\
$\frac{-2(c-\omega)\omega}{(c+\omega)^2}<\hat\xi<0$
\end{center}
\end{minipage}
\begin{minipage}[ht!]{0.24\linewidth}
\begin{center}
\epsfig{width=30mm,figure=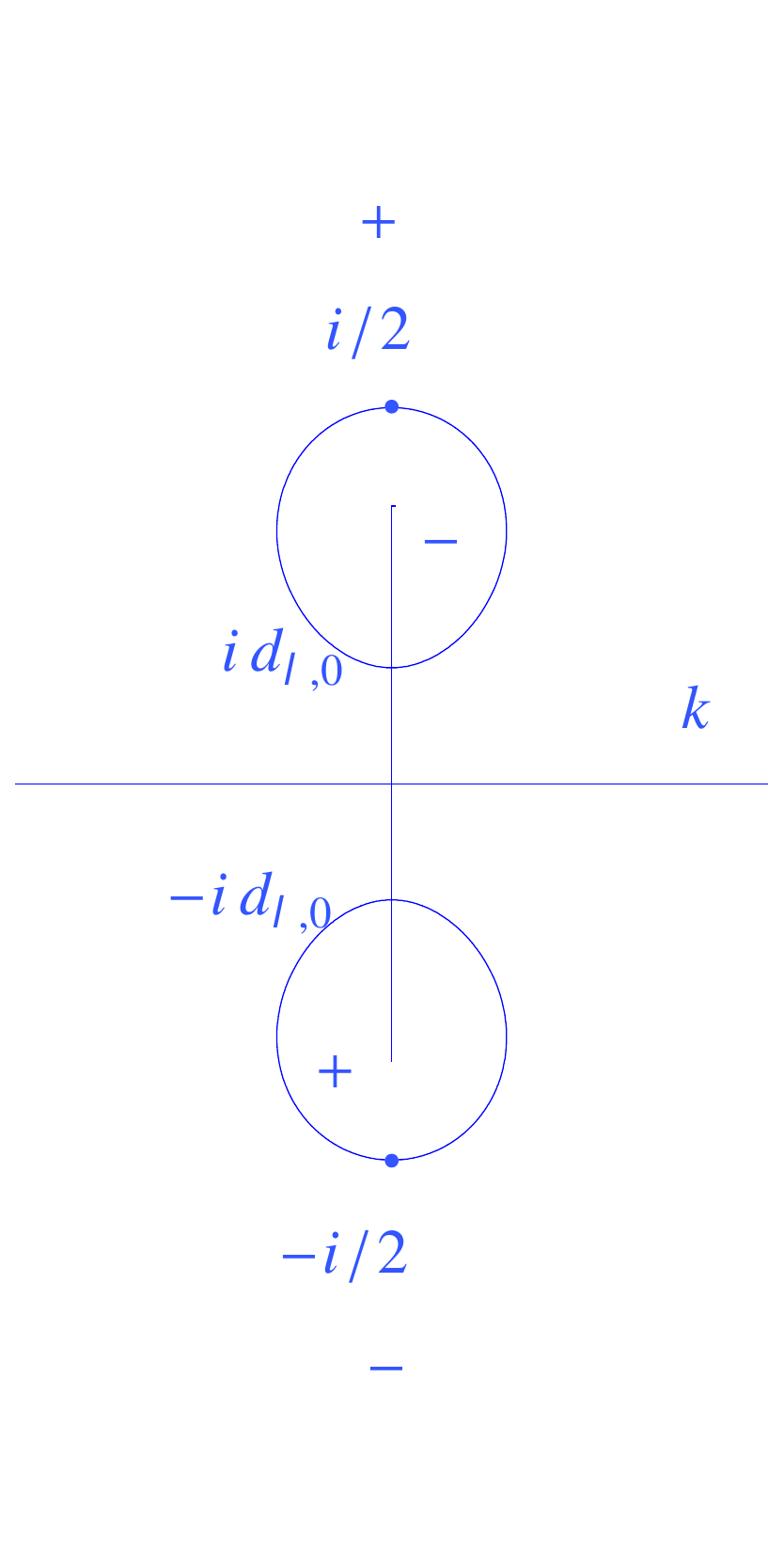}\\
$0\leq\hat\xi<2$
\end{center}
\end{minipage}
\begin{minipage}[ht!]{0.24\linewidth}
\begin{center}
\epsfig{width=30mm,figure=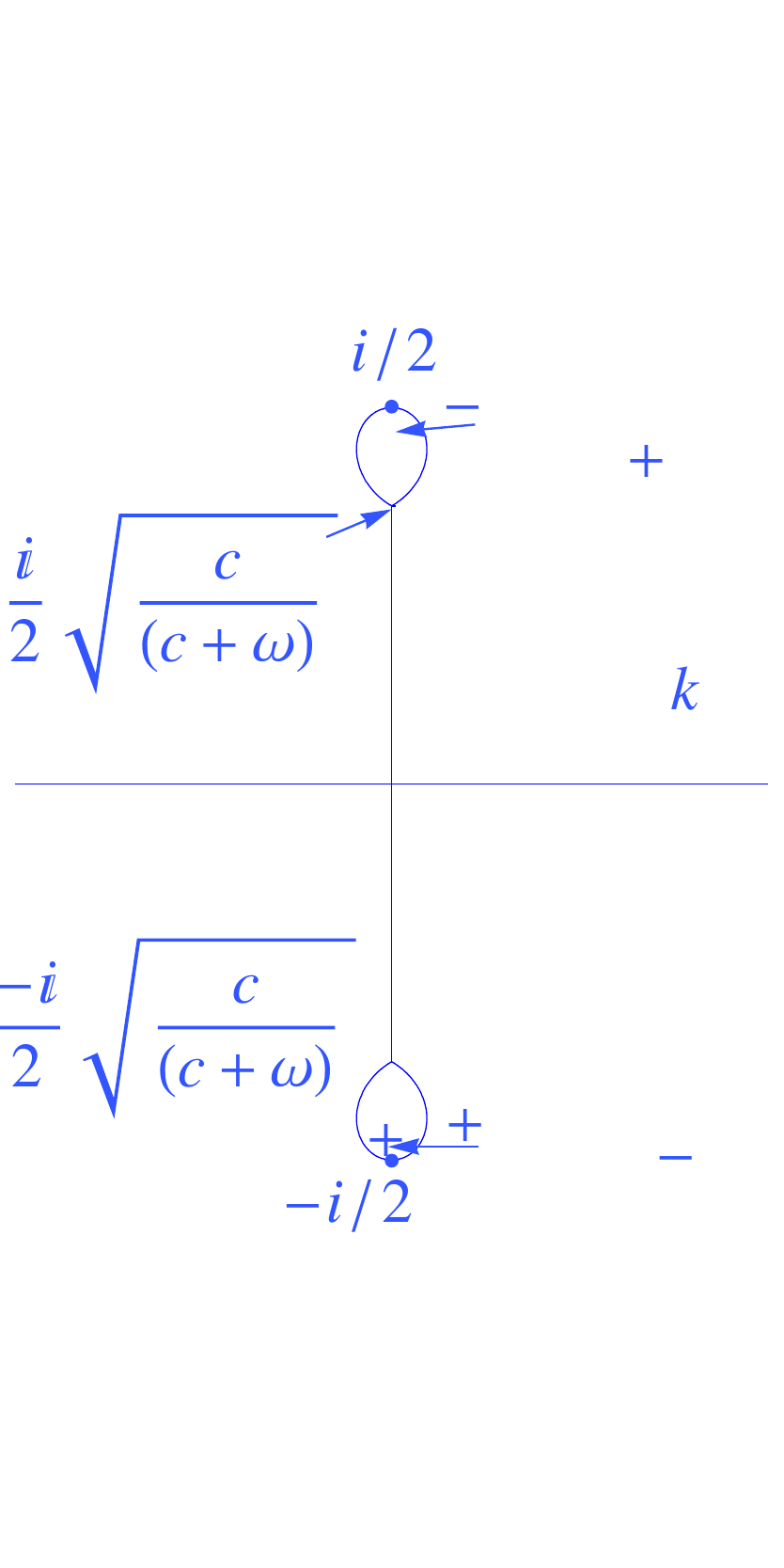}\\
$\hat\xi=2$
\end{center}
\end{minipage}
\begin{minipage}[h]{0.24\linewidth}
\begin{center}
\epsfig{width=30mm,figure=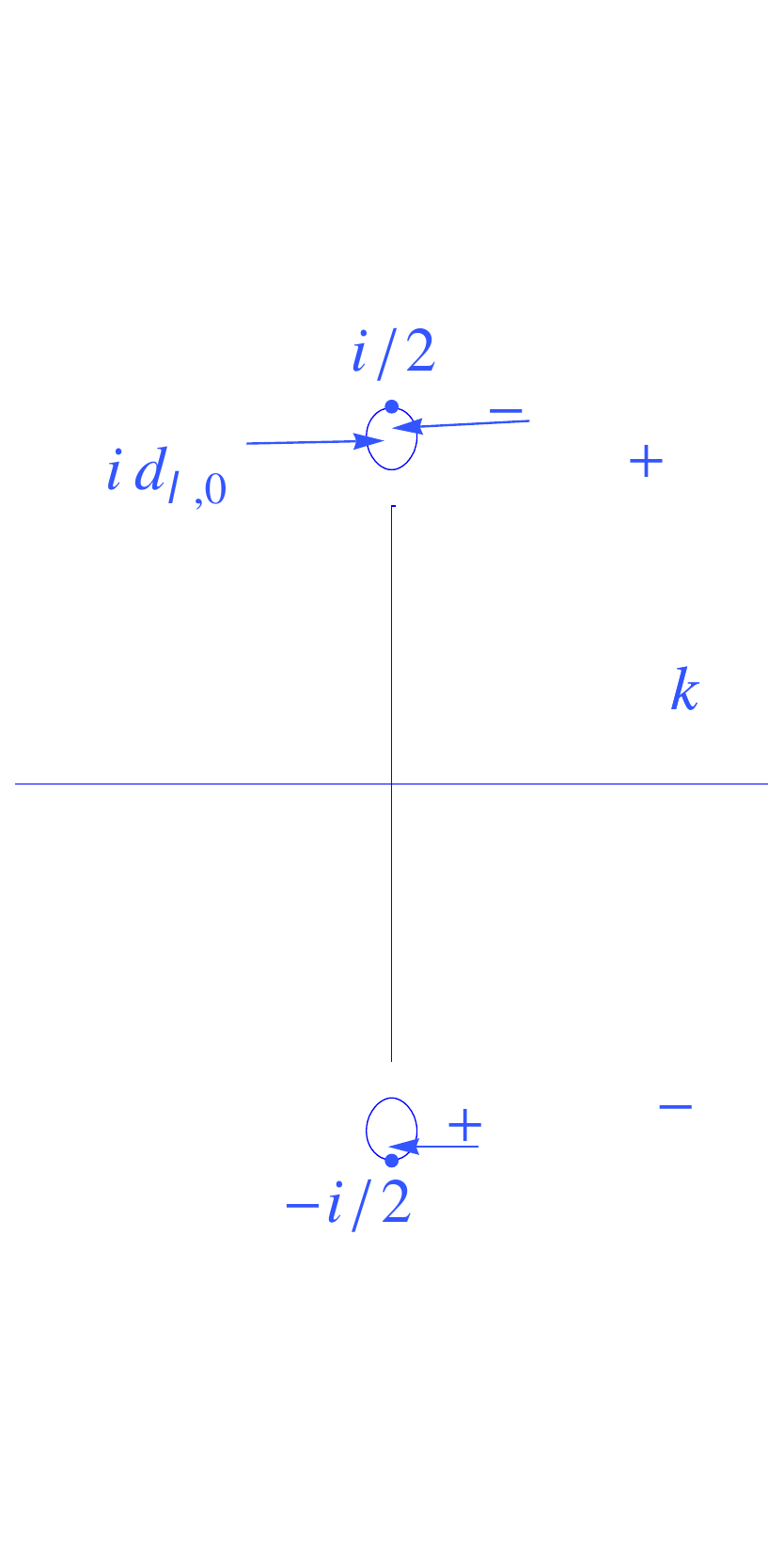}\\
$\hat\xi > 2$
\end{center}
\end{minipage}
\caption{Signature table for $\Im g_{\l}(k,\xi)$ when
$1<\frac{c}{\omega}<3$. The plotted contours are the lines $\Im g_{\l}=0.$} \label{Signature table left(2) 5678}
\end{figure}


\begin{figure}
\begin{minipage}[ht!]{0.24\linewidth}
\begin{center}
\epsfig{width=30mm,figure=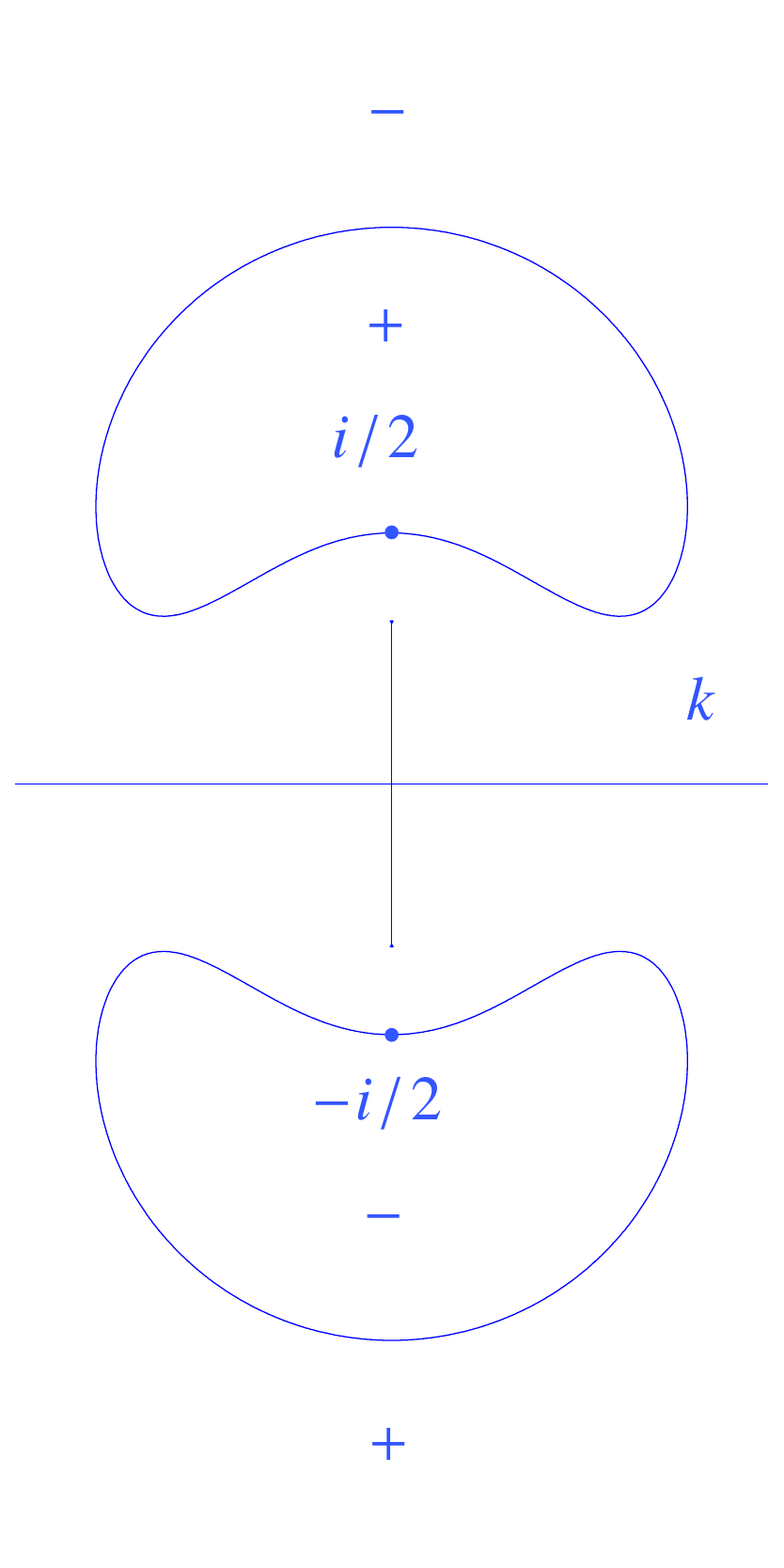}\\
$\hat\xi<\dsfrac{-1}{4}$
\end{center}
\end{minipage}
\begin{minipage}[h!]{0.24\linewidth}
\begin{center}
\epsfig{width=30mm,figure=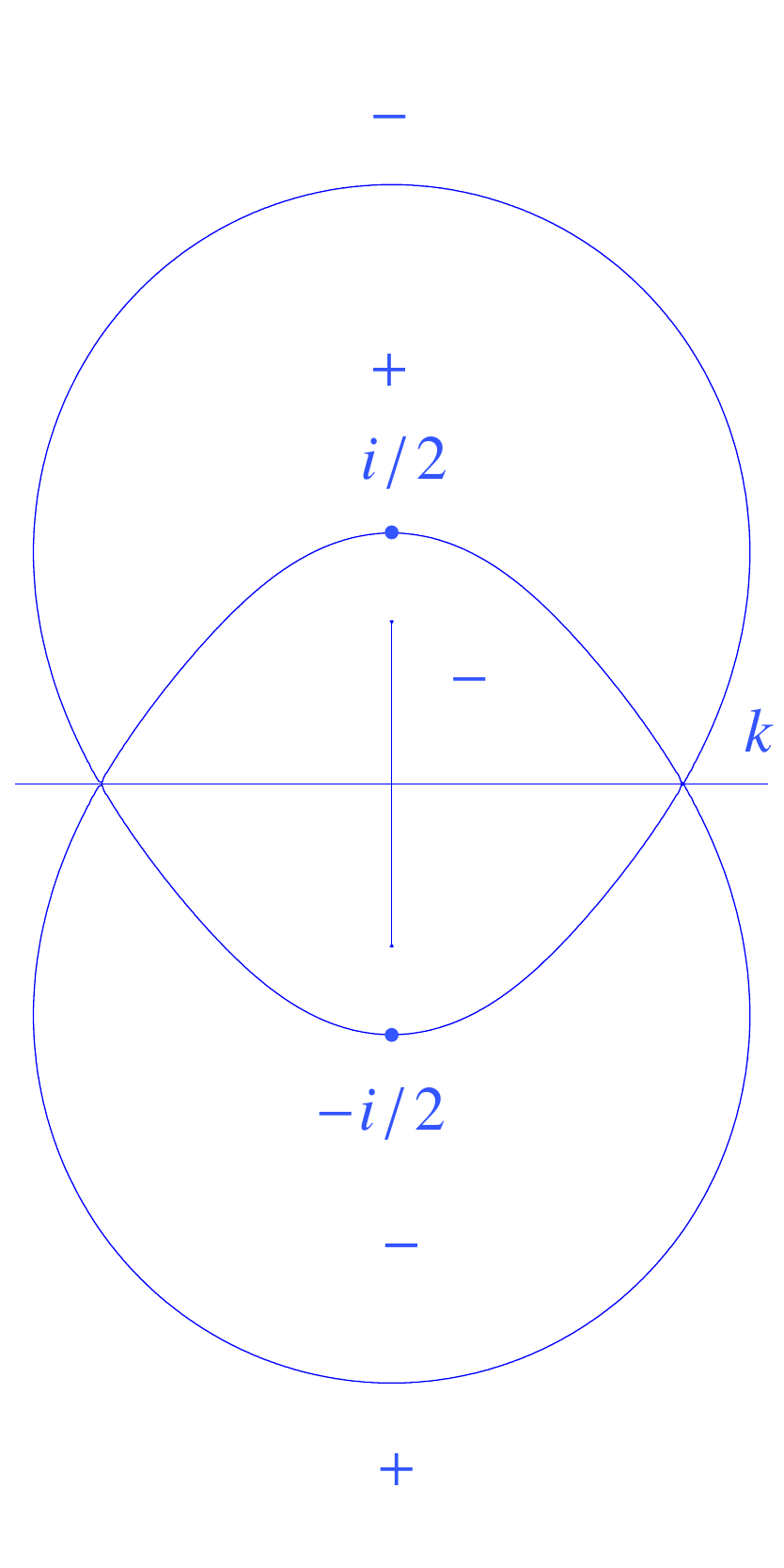}\\
$\hat\xi=\dsfrac{-1}{4}$
\end{center}
\end{minipage}
\begin{minipage}[ht!]{0.24\linewidth}
\begin{center}
\epsfig{width=30mm,figure=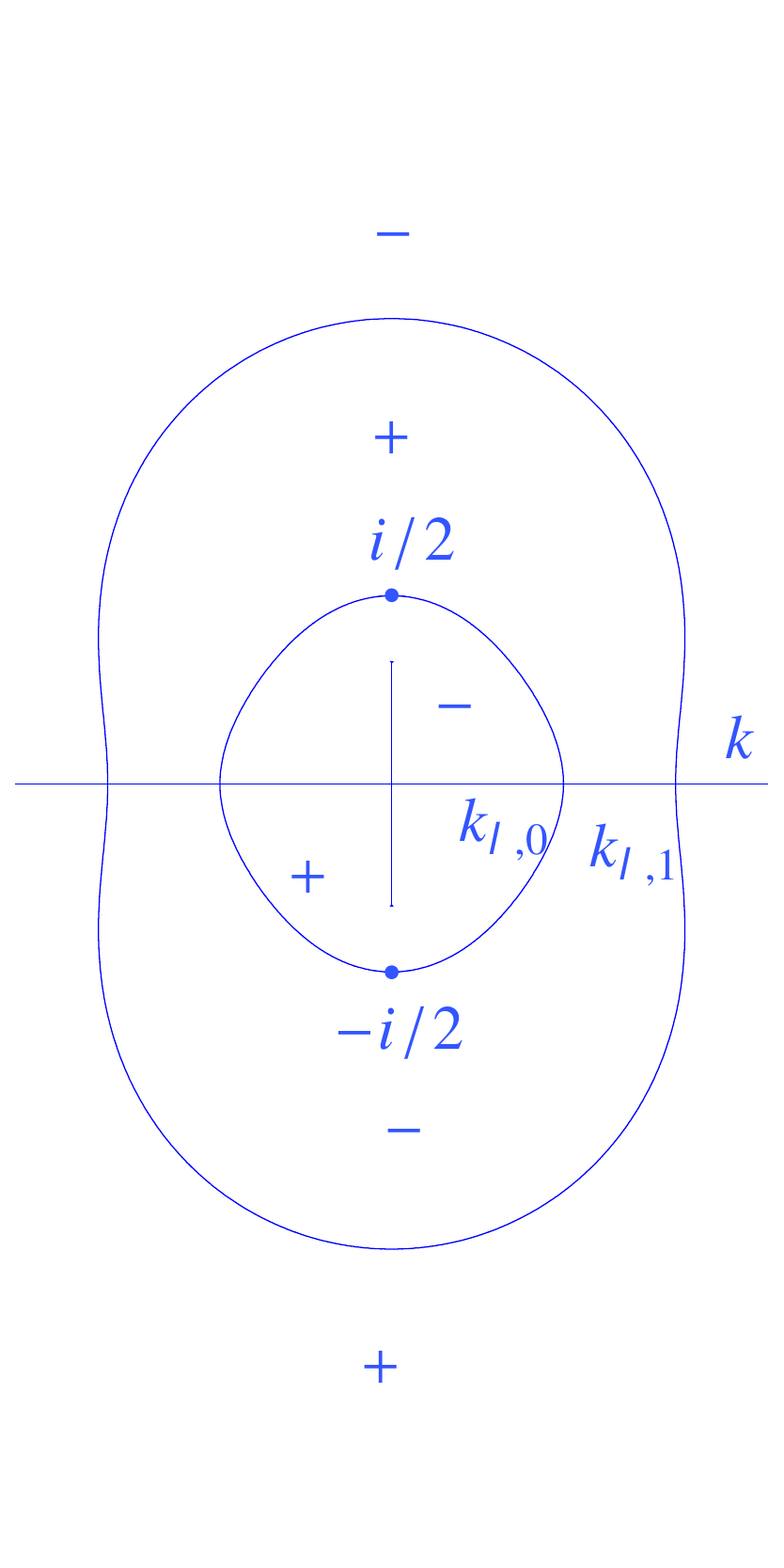}\\
$\frac{-1}{4}<\hat\xi<0$
\end{center}
\end{minipage}
\begin{minipage}[ht!]{0.24\linewidth}
\begin{center}
\epsfig{width=30mm,figure=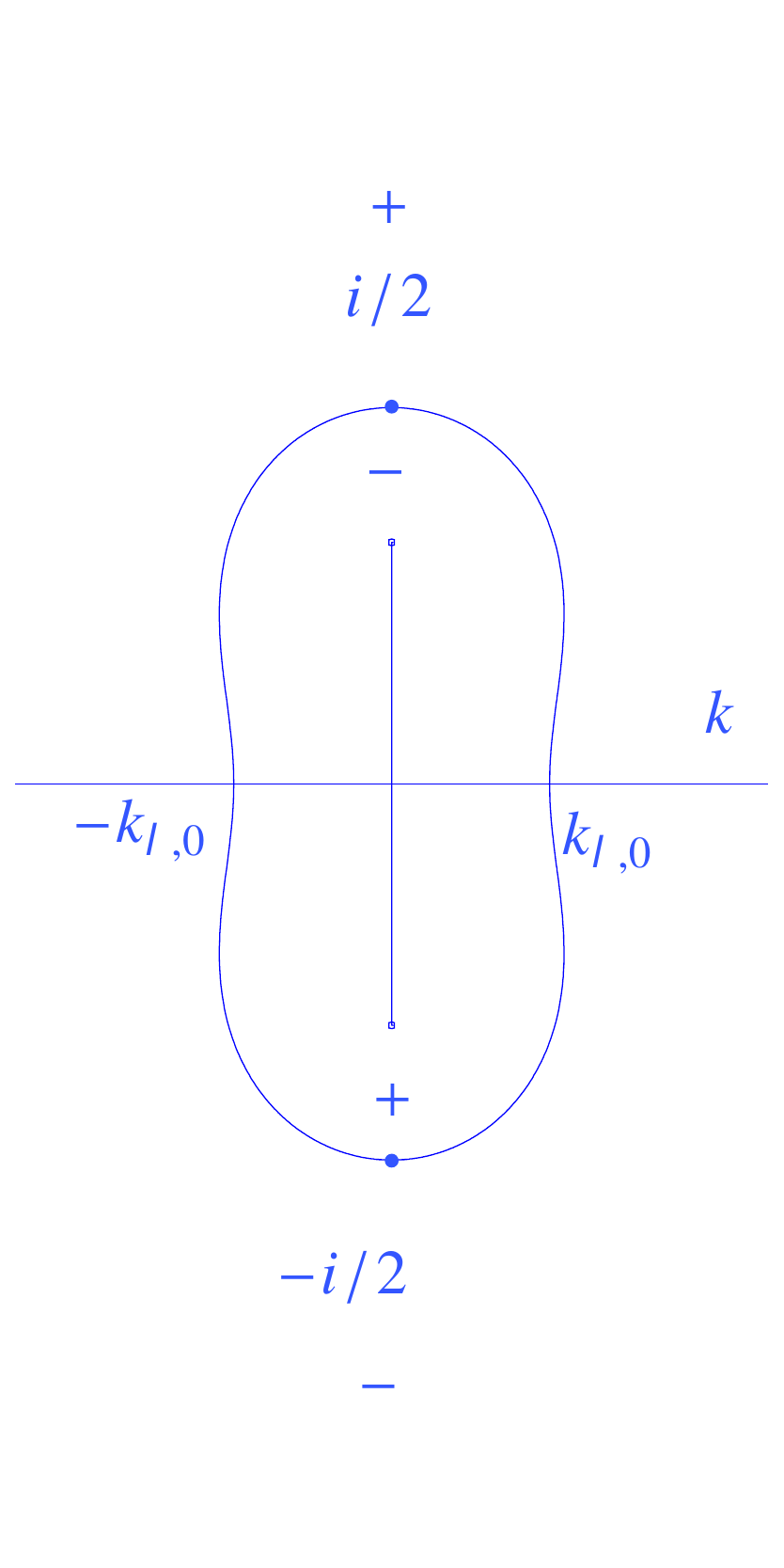}\\
$0\leq\hat\xi<\frac{-2(c-\omega)\omega}{(c+\omega)^2}$
\end{center}
\end{minipage}
\caption{Signature table for $\Im g_{\l}(k,\xi)$ when
$0<\frac{c}{\omega}<1$. The plotted contours are the lines $\Im g_{\l}=0.$} \label{Signature table left(3) 1234}
\end{figure}
\begin{figure}
\begin{minipage}[ht!]{0.24\linewidth}
\begin{center}
\epsfig{width=30mm,figure=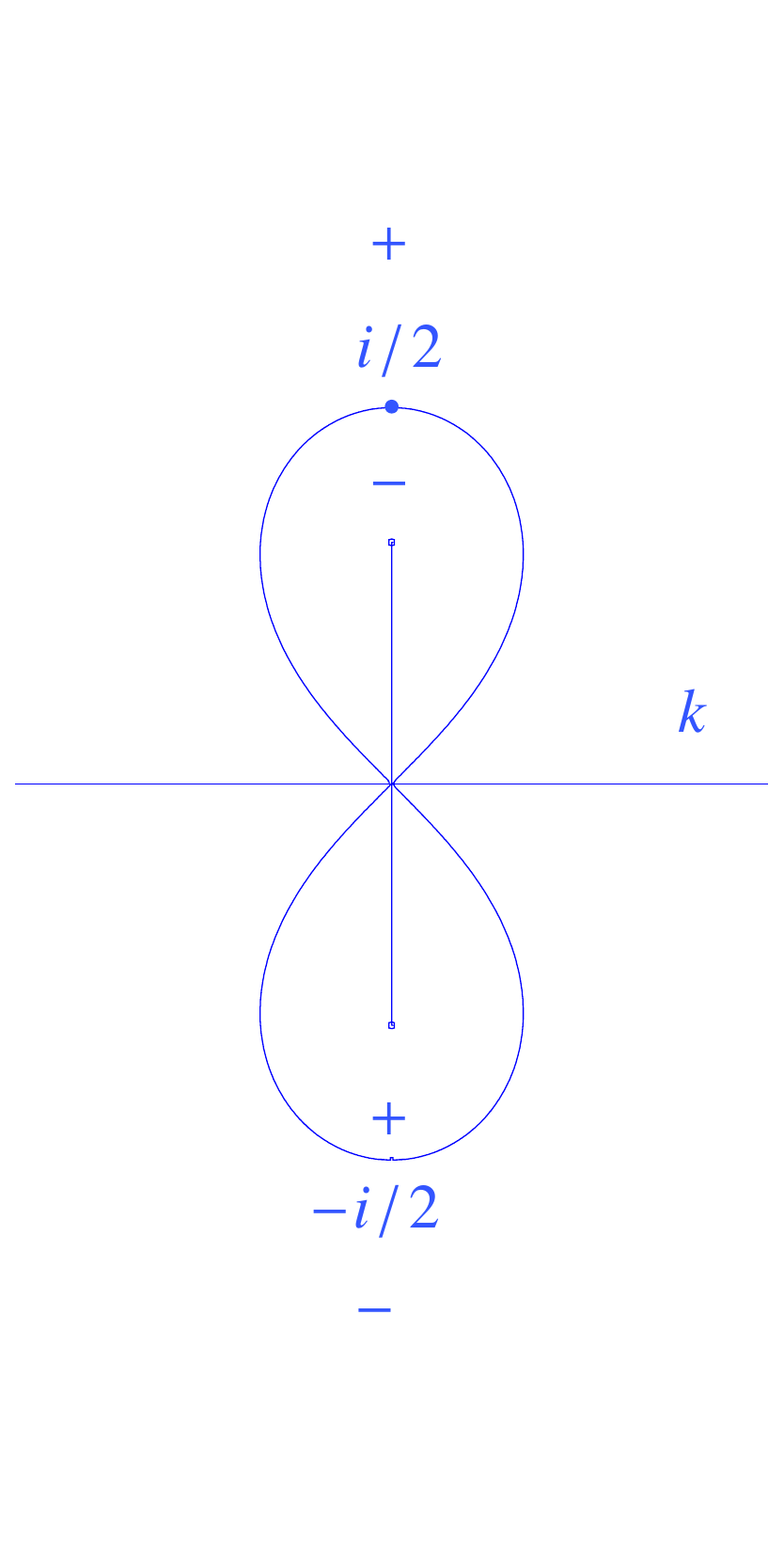}\\
$\hat\xi=\frac{-2(c-\omega)\omega}{(c+\omega)^2}$
\end{center}
\end{minipage}
\begin{minipage}[ht!]{0.24\linewidth}
\begin{center}
\epsfig{width=30mm,figure=Signature_table_left_2__6-eps-converted-to}\\
$\frac{-2(c-\omega)\omega}{(c+\omega)^2}<\hat\xi<2$
\end{center}
\end{minipage}
\begin{minipage}[h!]{0.24\linewidth}
\begin{center}
\epsfig{width=30mm,figure=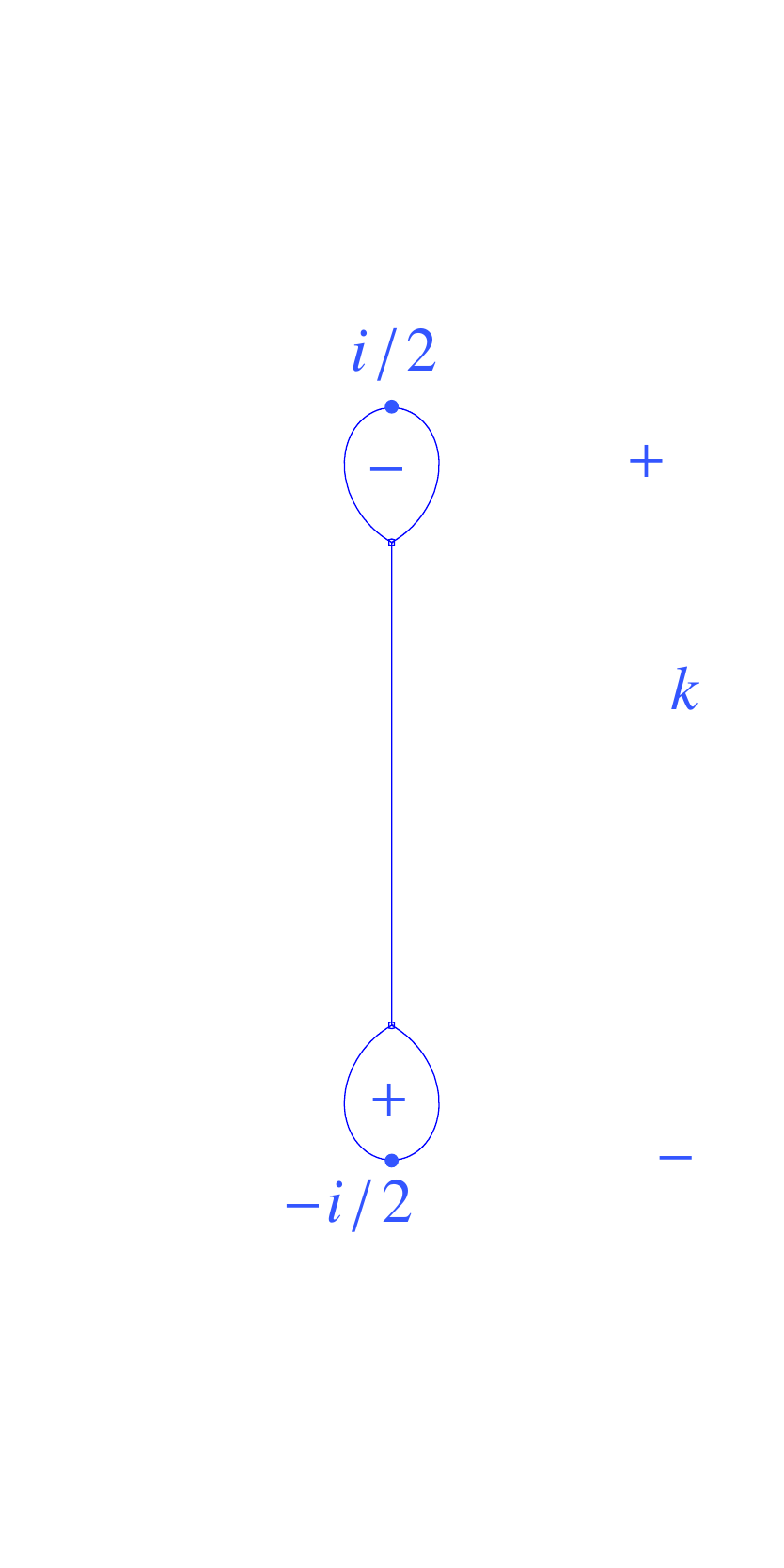}\\
$\hat\xi=2$
\end{center}
\end{minipage}
\begin{minipage}[h!]{0.24\linewidth}
\begin{center}
\epsfig{width=30mm,figure=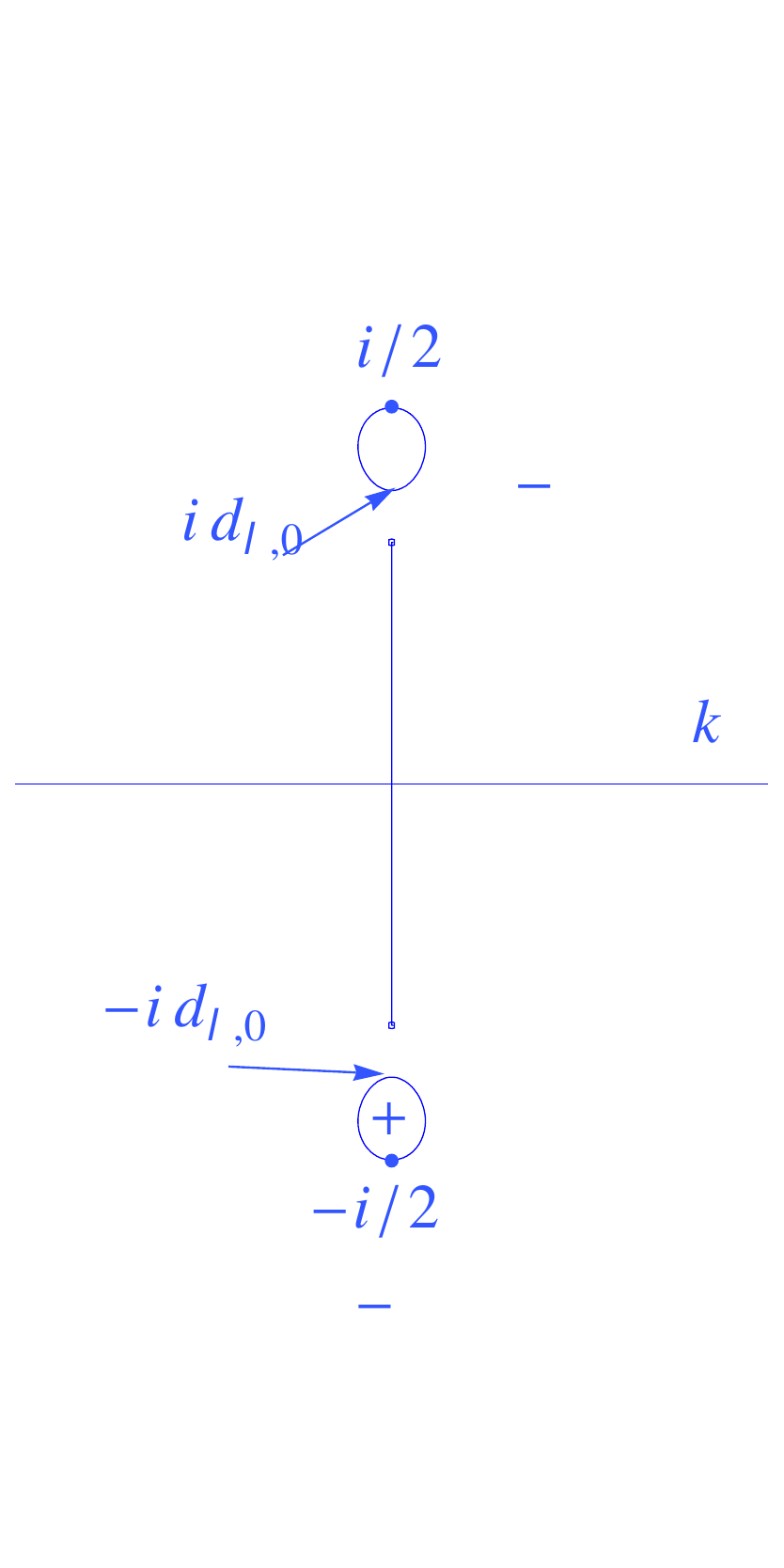}\\
$\hat\xi > 2$
\end{center}
\end{minipage}
\caption{Signature table for $\Im g_{\l}(k,\xi)$ when
$0<\frac{c}{\omega}<1$. The plotted contours are the lines $\Im g_{\l}=0.$} \label{Signature table left(3) 5678}
\end{figure}

\subsection{Middle phase functions.}

\begin{figure}
\begin{minipage}[ht!]{1.\linewidth}
\vskip-3.1cm\hskip-10mm
\includegraphics[scale=0.8]{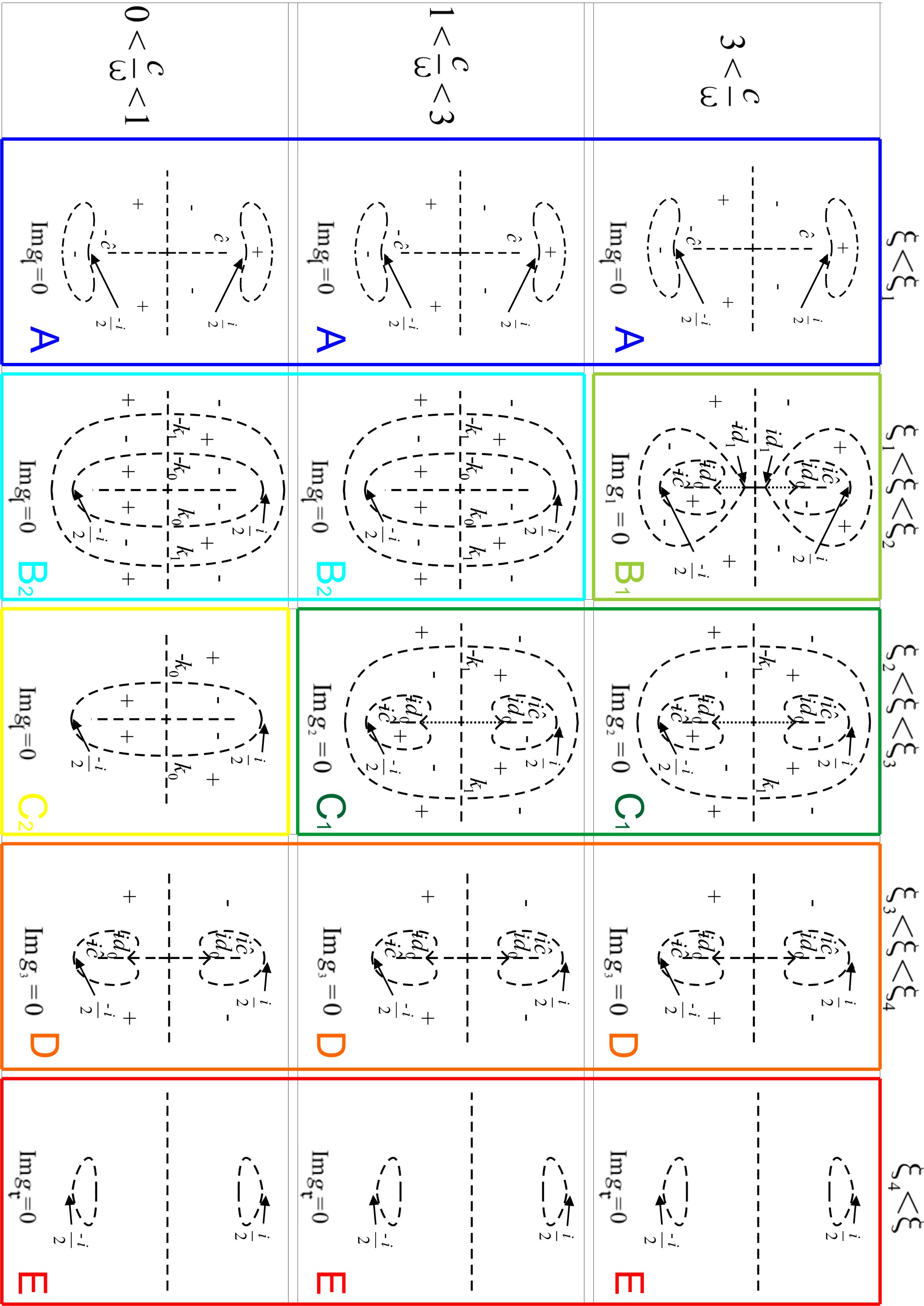}
\end{minipage}
\caption{Suitable signature table for $\Im g(k,\xi)$ in different regions of the parameter $\xi.$}
\label{Figure_phase_middle_function_plot}
\end{figure}

We use $g_{\r}$ and $g_{\l}$ in the asymptotic analysis for large positive and large negative $x$, 
when the lines $\Im g_{\r}=0$, $\Im g_{\l}=0$ do not intersect the interval $[\i\c,-\i\c]$. 
When the above lines start to intersect the above segment, we need to construct intermediate phase functions.
The observation
\begin{itemize}
\item $g_{\r,-}-g_{\r,+}=0,$ and
 $\Im g_{\r}>0$ in a vicinity of the segment $[\i\c,0],$
\item $g_{\l,-}+g_{\l,+}=0,$ and
$\Im g_{\l}<0$ in a vicinity of the segment $[\i\c,0]$
\end{itemize}
suggests looking for an (odd) phase function with the following property:
\begin{enumerate}[1]
\item \hskip-2mm (a). on those subsegments of
$(\i\c,0)$ that lie in the domain $\Im
g(k,\xi)\leq0$, we have
\begin{equation}
\label{prop g 1}\textstyle
g_-(k,\xi)+g_+(k,\xi)=0,
\end{equation}
\setcounter{enumi}{0}\item \hskip-2mm (b). on those subsegments of
$(\i\c,0)$ that lie in the domain $\Im
g(k,\xi)\geq0$, we have
\begin{equation}\label{prop g 2}
\textstyle
g_-(k,\xi)-g_+(k,\xi)=B(\xi),
\end{equation} 
\noindent 
where $B(\xi)$ is a real-valued quantity, which does not depend on 
$k$.
\end{enumerate}
\noindent
Also, a suitable phase function should satisfy the following
properties:
\begin{enumerate}
\setcounter{enumi}{1} \item asymptotics at infinity:
$g(k,\xi)-g_{\r}(k,\xi) =
\mathrm{O}\(k^{-1}\)\ \textrm{ as }
k\rightarrow\infty;$
\vskip-14mm\begin{equation}
\label{prop g infinity} 
\end{equation} 
\item asymptotics at $k=\frac\i 2:$
$\exists\lim\limits_{k\to\i/2} \(g(k,\xi)-g_\r(k,\xi)\)\in\mathbb{C}.$
\vskip-14mm\begin{equation}\label{prop g i/2} \end{equation}
\end{enumerate}
Let us notice that $g_{\r}(k,\xi)$ and
$g_{\l}(k,\xi)$ satisfy these conditions.
Further, 

$\d g_{\r}(k,\xi)=\d g_{\l}(k,\xi)=
\(\frac{\omega}{4\(k\mp\frac{\i}{2}\)^2}+\mathrm{O}(1)\)\d k\quad
\mathrm{as}\quad k\rightarrow\frac{\pm\i}{2}.$

\noindent 
The suitable $g$-functions has qualitatively the same signature table as $g_{\l}$ 
(see Figures \ref{Signature table left 1234}--\ref{Signature table left(3) 5678}), 
but have different properties on the segment $[\i\c,-\i\c].$

The suitable signature table for $g$-functions in different regions of the parameter $\frac{c}{\omega}$ 
and variable $\xi$ is plotted in Figure \ref{Figure_phase_middle_function_plot}. Here $\xi_j, j=1,..,5$ are borders between 
regions with different phase function $g.$ Further, the points $\pm\i d_0\in(0,\i\c)$, 
$\pm\i d_1\in\pm(0,\i d_0)$ are the points of intersection of the lines $\Im g =0$ with the imaginary axis. (The points $\pm\i d_1$ 
appear only in the situation plotted in subgraphic $B_1$, i.e. $c>3\omega$, $\xi_1<\xi<\xi_2$. 
The points $\pm\i d_0$ 
appear in the situation plotted in subgraphics $B_1$, $C_1$, $D$.)
Points $\pm k_0$, $\pm k_1$
are points of intersection of the curve $\Im g = 0$ with the real line ($k_0$ appears in situation of subgraphics $B_2$, 
$C_2$, and $k_1$ appears in situation of subgraphics $B_2$, $C_1$. If both of them are presented in graphic, then $k_0<k_1$.
We can say that the points $\pm\i d_j, j=1,2$ are transformed into the point $\pm k_j, j=1,2$ when they 
move from the imaginary axis to the real axis.)

 Let us notice 
that the corresponding quantities $\xi_{j}, j=1,4,$ are different for different values of the parameter 
$\frac{c}{\omega}.$ Denote by $\zeta_j$ the corresponding critical values in the $x,t\geq0$ half-plane, 
$\zeta=\frac{x}{\omega t}.$ We calculate them in Lemma \ref{lemma: u_u^mod}, but for the sake of brevity, 
we write them down here. We list in ascending order of the parameter $\xi$ the qualitative description 
of asymptotics in each zone (we use functions $g_i, i=1,2,3,$ defined by (\ref{g_dg})--
(\ref{phase g function 3}),(\ref{dg1_param_eq_sys})--(\ref{dg3_param_eq_sys})):
\begin{equation}\label{xi_j_def_beg}\end{equation}
\vskip-10mm
\noindent $\bullet$ \textbf{Case $\dsfrac{c}{\omega}>3.$} Transition points are:
\\\textbf{1. } $\xi_1=\frac{-1}{4}\(\frac{c+\omega}{\omega}\)^{3/2},$ $\quad\zeta_1=\frac{3c-\omega}{4\omega}.$
\\
\textbf{2. } $\xi_2$ is determined by the transcedental system (\ref{dg1_crit_param_eq_sys}). 
\\
$\zeta_2 = \xi_2 - 
\frac{2\i}{\omega}\lim\limits_{k\to\i/2}(g_2-g_{\r})(k,\xi_2).$ 
\\
\textbf{3. } \textbf{$\xi_3=0,$} $\quad\zeta_3 = - 
\frac{2\i}{\omega}\lim\limits_{k\to\i/2}(g_3-g_{\r})(k,\xi_3=0).$
\\
\textbf{4. } \textbf{$\xi_4=2\(\frac{c+\omega}{\omega}\)^{3/2},$} $\quad\zeta_4=\xi_4.$
\\
Therefore, we have a zone of fast decaying to $c$ asymptotics $\xi<\xi_1$, 
hyperelliptic (genus 2) zone $\xi_1<\xi<\xi_2$, 
two zones of elliptic asymptotics $\xi_2<\xi<\xi_3$ and $\xi_3<\xi<\xi_4$, and soliton zone $\xi_4<\xi$.  
\\
$\bullet$ \textbf{Case $1<\dsfrac{c}{\omega}<3.$}
\\
\textbf{1. } $\xi_1=\frac{-1}{4}\(\frac{c+\omega}{\omega}\)^{3/2},$ $\quad\zeta_1=\frac{3c-\omega}{4\omega}.$
\\
\textbf{2. } $\xi_2=\frac{-2(c-\omega)}{\sqrt{\omega(c+\omega)}},$ $\quad \zeta_2=\frac{2\omega^2-c\omega+c^2}
{\omega(c+\omega)}.$
\\
\textbf{3. } $\xi_3=0,$ $\quad\zeta_3 = - 
\frac{2\i}{\omega}\lim\limits_{k\to\i/2}(g_3-g_{\r})(k,\xi_3=0).$
\\
\textbf{4. } $\xi_4=2\(\frac{c+\omega}{\omega}\)^{3/2},\quad \zeta_4=\xi_4.$ 
\\
Hence, we have a zone of fast decaying to $c$ asymptotics $\xi<\xi_1$, 
zone of slowly decaying to $c$ dispersive wave $\xi_1<\xi<\xi_2$, 
two zones of elliptic asymptotics $\xi_2<\xi<\xi_3$ and $\xi_3<\xi<\xi_4$, and soliton zone $\xi_4<\xi$.
\noindent
\\
$\bullet$ \textbf{Case $0<\dsfrac{c}{\omega}<1.$}
\\
\textbf{1. } $\xi_1=\frac{-1}{4}\(\frac{c+\omega}{\omega}\)^{3/2},$ $\quad\zeta_1=\frac{3c-\omega}{4\omega}.$
\\
\textbf{2. } $\xi_2=0,$ $\zeta_2 = \frac{c}{\omega}.$
\\
\textbf{3. } $\xi_3=\frac{-2(c-\omega)}{\sqrt{\omega(c+\omega)}},$ $\quad \zeta_3 = \frac{2\omega^2-c\omega+c^2}
{\omega(c+\omega)}.$
\begin{equation}\label{xi_j_def_end}\end{equation}
\vskip-10mm
\noindent
\textbf{4. } $\xi_4=2\(\frac{c+\omega}{\omega}\)^{3/2},$  $\quad \zeta_4 = \xi_4.$
\\
Similarly, we have a zone of fast decaying to $c$ asymptotics $\xi<\xi_1$, 
two zones of slowly decaying to $c$ dispersive wave $\xi_1<\xi<\xi_2$ and $\xi_2<\xi<\xi_3$, 
zone of elliptic asymptotics $\xi_3<\xi<\xi_4$, and soliton zone $\xi_4<\xi$.
\vskip5mm
\noindent In the critical cases $\frac{c}{\omega}=3,$ $\frac{c}{\omega}=1,$ $\frac{c}{\omega}=0,$ some of the zones vanish. Namely,
\\$\bullet$\textbf{Case ${\frac{c}{\omega}=3.}$} Then $\xi_1=\xi_2=\frac{-1}{4}\(\frac{c+\omega}{\omega}\)^{3/2},$ and we do not have the hyperelliptic zone of genus 2.
\\$\bullet$ \textbf{Case ${\frac{c}{\omega}=1.}$} Then $\xi_2=\xi_3=0,$ and we do not have the first elliptic zone.
\\$\bullet$ \textbf{Case ${\frac{c}{\omega}=0.}$} Then $\xi_3=\xi_4=0,$ and we do not have the elliptic zone.
\vskip2mm 
\noindent We will try to find a phase function $g$ so, that it would exactly describe the situation shown in Figure 
\ref{Figure_phase_middle_function_plot}. For each $i =1,2,3,$ define $g=g_{i}$ as 
\begin{equation}\label{g_dg}\textstyle g_i(k,\xi)=\int\limits_{\i\c}^k\d g_i(k,\xi), \qquad 
\end{equation}
\noindent\textrm{where  } $\d g_{i}$ \textrm { are chosen as follows for the specified regions (see Figure 
\ref{Figure_phase_middle_function_plot}): }
\begin{enumerate}
\item case $B$ (compare with the 3$^{rd}$ graphic in Figure \ref{Signature table left 1234}):
\\
$\d g_1(k,\xi) =
\frac{\omega\xi(k^2+\mu^2)k\sqrt{(k^2+d_0^2)(k^2+d_1^2)}}{\(k^2+\frac{1}{4}\)^{2}\sqrt{k^2+\c^2}}\d
k,\quad
0<d_1<\mu<d_0<\c.$
\vskip-12mm
\begin{equation}\label{phase g function 1}\textstyle\end{equation}
\item case $C_1$ (compare with the 1$^{st}$ graphic in Figures \ref{Signature table
left 5678}, \ref{Signature table left(2) 5678}): 
For $\xi\neq0:$ 
\\
$\d g_2(k,\xi) \hskip-1mm=
\frac{\omega\xi(k^2+\mu_{0}^2)(k^2-k_1^2)\sqrt{(k^2+d_0^2)}}{\(k^2+\frac{1}{4}\)^{2}\sqrt{k^2+\c^2}}\d
k, \quad
0\hskip-1mm<\hskip-1mm\mu_0\hskip-1mm<\hskip-1mmd_0<\hskip-1mm\c\hskip-1mm<\hskip-1mm\frac{1}{2},\ k_1\hskip-1mm>\hskip-1mm0\hskip0mm.$
\vskip-12mm
\begin{equation}\label{phase g function 2}\textstyle\end{equation}
\item case $D$ (compare with the 2$^{nd}$ graphic in Figures \ref{Signature table left 5678}, \ref{Signature table left(2) 5678}, \ref{Signature
table left(3) 5678}):
For $\xi\neq0:$ 
\\
$\d
g_3(k,\xi) =
\frac{\omega\xi(k^2+\mu_{0}^2)(k^2+\mu_{1}^2)\sqrt{(k^2+d_0^2)}}{\(k^2+\frac{1}{4}\)^{2}\sqrt{k^2+\c^2}}\d
k, \ 
0\hskip-1mm<\hskip-1mm\mu_0\hskip-1mm<\hskip-1mm d_0 \hskip-1mm< \hskip-1mm\c\hskip-1mm<\hskip-1mm\frac{1}{2}\hskip-1mm<\hskip-1mm\mu_1\hskip0mm.
$
\vskip-12mm\begin{equation}\label{phase g function 3}\textstyle\end{equation}
\end{enumerate}
\noindent Note that if $\xi=0$, then in (\ref{phase g function 2})-(\ref{phase g function 3})
$k_1=\infty$, $\mu_1=\infty$, so we will take care of this case separately.  For $\xi=0$ we set:
\\
$\d
g_{2,3}(k,\xi\hskip-1mm=\hskip-1mm0) \hskip-1mm=\hskip-1mm
\frac{\sqrt{1-4\c^2}}{(1-4\mu_0^2)\sqrt{1-4d_0^2}}\frac{\omega(k^2+\mu_{0}^2)\sqrt{k^2+d_0^2}}{\(k^2+\frac{1}{4}\)^{2}\sqrt{k^2+\c^2}}\d
k, \
0\hskip-1mm<\hskip-1mm\mu_0\hskip-1mm<\hskip-1mmd_0\hskip-1mm<\hskip-1mm\c\hskip-1mm<\hskip-1mm\frac{1}{2}.$
\vskip-13mm
\begin{equation}\label{phase g xi=0 23}\textstyle\end{equation}
\noindent Finally, the original right and left phase functions can also be written as
\begin{enumerate}
\setcounter{enumi}{3}
\item $\d g_{\r}(k,\xi)=\frac{\omega\xi(k^2+\mu_{0,\r}^2)(k^2+\mu_{1,\r}^2)}{\(k^2+\frac{1}{4}\)^2}\d k$
with  $\mu_{0,\r}, \mu_{1,\r}$ defined in (\ref{mur01}). 

\item $\d g_{\l}(k,\xi)=\frac{\omega\xi k(k^2+\mu_{0,\l}^2)(k^2+\mu_{1,\l}^2)}{\(k^2+\frac{1}{4}\)^2\sqrt{k^2+\c^2}}\d k$
with $\mu_{0,\l}, \mu_{1,\l}$ defined in (\ref{mul01}).

\end{enumerate}
\subsection{Equations for the parameters of the middle phase function}
All of the functions (\ref{phase g function 1})-(\ref{phase g function 3}) must satisfy properties (\ref{prop g 1})-(\ref{prop g i/2}), and this leads to systems of equations that determine the parameters $\mu_0,$ $\mu_1,$ $d_0,$ $d_1,$ $k_1$ of the $g$-functions. Property (\ref{prop g infinity}) is satisfied automatically. To satisfy (\ref{prop g 1})-(\ref{prop g 2}) it is enough to satisfy 

$\qquad\qquad\int\limits_{\i d_0}^{\i d_1}d g_1=\int\limits_0^{\i d_0}d g_2=\int\limits_0^{\i d_0}d g_3=0.
$
\vskip-12mm
\begin{equation}\label{int_o^id}
\end{equation}
Finally, to satisfy (\ref{prop g i/2}), let us expand $\d g_i$ in a neighborhood of $k=\i/2:$ 
\\
\begin{itemize}
 \item [$i=1:$]
$\dsfrac{\d g_1}{\d k} = \dsfrac{-\xi\omega\sqrt{1-4d_0^2}\sqrt{1-4d_1^2\ }\,(1-4\mu_0^2)}{16\sqrt{1-4\c^2}\
\(k\mp\frac{\i}{2}\)^2}\mp$
\\
$\hskip-2cm\dsfrac{\i\xi\omega
\hskip-.5mm\left[\hskip-.5mm
(\hskip-.5mm1\hskip-.5mm-\hskip-.5mm4d_0^2\hskip-.mm)\hskip-.5mm(\hskip-.5mm1\hskip-.5mm-4d_1^2\hskip-.mm)\hskip-.5mm\textcolor{blue}{(}\hskip-.5mm1\hskip-.5mm-\hskip-.5mm4\mu_0^2\hskip-.5mm-\hskip-.5mm2(\hskip-.5mm1\hskip-.5mm-\hskip-.mm4\c^2\hskip-.5mm)\hskip-.5mm\textcolor{blue}{)}\hskip-.5mm-\hskip-.5mm(\hskip-.5mm1\hskip-.5mm-\hskip-.5mm4\c^2\hskip-.mm)\hskip-.5mm(\hskip-.5mm1\hskip-.5mm-\hskip-.5mm4\mu_0^2)\hskip-.5mm(\hskip-.5mm1\hskip-.5mm-\hskip-.5mm4d_0^2\hskip-.5mm+\hskip-.5mm1\hskip-.5mm-\hskip-.5mm4d_1^2)\hskip-.5mm
\right]
}
{8\sqrt{1-4d_0^2}\sqrt{1-4d_1^2}\(1-4\c^2\)^{3/2}\
\(k\mp\frac{\i}{2}\)}$
\\
$+\mathrm{O}(1),
$
\\
Correspondingly, the equations for the parameters are:
\begin{subequations}
\label{dg1_param_eq_sys}
\begin{eqnarray}\textstyle
&&\frac{-\xi\sqrt{1-4d_0^2}\sqrt{1-4d_1^2\ }\,(1-4\mu_0^2)}{16\sqrt{1-4\c^2}}=\frac{1}{4},
\\
&&1-4\mu_0^2=\frac{2(1-4\c^2)(1-4d_0^2)(1-4d_1^2)}{4(\c^2-d_1^2)(1-4d_0^2)-(1-4\c^2)(1-4d_1^2)},
\\
&&\int\limits_{\i d_0}^{\i d_1}\frac{k(k^2+\mu_0^2)\sqrt{k^2+d_0^2}\sqrt{k^2+d_1^2}}{\(k^2+\frac14\)^2\sqrt{k^2+\c^2}}=0.
\end{eqnarray}
\end{subequations}
\item[$i=2:$]
$\dsfrac{\d g_2}{\d k} = \dsfrac{-\xi\omega\sqrt{1-4d_0^2}\(4 k_1^2+1\)(1-4\mu_0^2)}{16\sqrt{1-4\c^2}\
\(k\mp\frac{\i}{2}\)^2}\mp$
\\
$
\hskip-5cm\frac{\i\xi\omega
\left[
\color{black}\left[\left\{4\c^2+4k_1^2+(1-4\c^2)4k_1^2\right\}(1-4d_0^2)-(1-4\c^2)(1+4k_1^2)\right]\color{black}(1-4\mu_0^2)-2(1-4\c^2)(1-4d_0^2)(1+4k_1^2)
\right]
}
{8\sqrt{1-4d_0^2}\(1-4\c^2\)^{3/2}\
\(k\mp\frac{\i}{2}\)}
$\\
$+\mathrm{O}(1).
$
\\
Correspondingly, the equations for the parameters are:
\begin{subequations}\label{dg2_param_eq_sys}
\begin{eqnarray}\textstyle
&&\frac{-\xi\sqrt{1-4d_0^2}\(4k_1^2+1\)(1-4\mu_0^2)}{16\sqrt{1-4\c^2}\
}=\frac{1}{4},
\\\nonumber
&&1-4\mu_0^2=\frac{2(1-4\c^2)(1-4d_0^2)(1+4k_1^2)}{\color{black}\left\{4\c^2+4k_1^2+(1-4\c^2)4k_1^2\right\}(1-4d_0^2)-(1-4\c^2)(1+4k_1^2)\color{black}},
\\\\
&&\int\limits_{0}^{\i d_0}\dsfrac{(k^2-k_1^2)(k^2+\mu_0^2)\sqrt{k^2+d_0^2}}{\(k^2+\frac14\)^2\sqrt{k^2+\c^2}}=0.
\end{eqnarray}
\end{subequations}

\item[$i=3:$]
$\dsfrac{\d g_3}{\d k} = \dsfrac{\xi\omega\sqrt{1-4d_0^2}\(4\mu_1^2-1\)(1-4\mu_0^2)}{16\sqrt{1-4\c^2}\
\(k\mp\frac{\i}{2}\)^2}\pm$
\\
$\hskip-2cm\Huge{
\frac{\i\xi\omega\sqrt{1-4\c^2}
\left[
(1-4\mu_0^2)(1-4\mu_1^2)(2(1-4d_0^2)-1)-2(1-4\mu_0^2+1-4\mu_1^2)(1-4d_0^2)
\right]
}
{8\sqrt{1-4d_0^2}\(1-4\c^2\)\
\(k\mp\frac{\i}{2}\)}+\mathrm{O}(1).}
$

\noindent
Correspondingly, the equations for the parameters are:
\begin{subequations}\label{dg3_param_eq_sys}
\begin{eqnarray}\textstyle
&&\frac{\xi\sqrt{1-4d_0^2}\(4\mu_1^2-1\)(1-4\mu_0^2)}{16\sqrt{1-4\c^2}\
}=\dsfrac{1}{4},
\\
&&1-4\mu_0^2=\frac{2(1-4\c^2)(1-4d_0^2)(4\mu_1^2-1)}{\left[4\mu_1^2-1+(1-4\c^2)(4\mu_1^2+1)\right](1-4d_0^2)-(1-4\c^2)(4\mu_1^2-1)}
,\nonumber\\\\
&&\int\limits_{0}^{\i d_0}\frac{(k^2+\mu_0^2)(k^2+\mu_1^2)\sqrt{k^2+d_0^2}\d k}{\(k^2+\frac14\)^2\sqrt{k^2+\c^2}}=0.
\end{eqnarray}
\end{subequations}
\noindent For $\xi=0$ the equations for the parameters of the function\\ $g_2(k,\xi=0)\equiv g_3(k,\xi=0)$ defined by (\ref{phase g xi=0 23}) are
\begin{subequations}\label{dg23_xi=0_param_eq_sys}\begin{eqnarray}
&&1-4\mu_0^2=\dsfrac{2(1-4\c^2)(1-4d_0^2)}{(2-4\c^2)(1-4d_0^2)-(1-4\c^2)}
,\\
&&\int\limits_{0}^{\i d_0}\dsfrac{(k^2+\mu_0^2)\sqrt{k^2+d_0^2}\d k}{\(k^2+\frac14\)^2\sqrt{k^2+\c^2}}=0.
\end{eqnarray}
\end{subequations}

\noindent 
For the initial phase functions we have
\item[Case of $g_{\r}:$]
$\dsfrac{\d g_{\r}(k,\xi)}{\d k} = \dsfrac{\xi\omega\(1-4\mu_0^2\)\(4\mu_1^2-1\)}{16
\(k\mp\frac{\i}{2}\)^2}\mp$
\\
$
\dsfrac{\i\xi\omega
\left[
(1-4\mu_0^2)(1-4\mu_1^2)-2(1-4\mu_0^2+1-4\mu_1^2)
\right]
}
{8\(k\mp\frac{\i}{2}\)}+\mathrm{O}(1).
$
\\
Here $\mu_0=\mu_{0,\r},\quad \mu_1=\mu_{1,\r}$ are defined in (\ref{mur01}).
\\
Correspondingly, the equations for the parameters are 
\begin{equation*}\textstyle
\left\{
\begin{array}{l}\dsfrac{\xi\(1-4\mu_0^2\)\(4\mu_1^2-1\)}{16
}=\frac{1}{4},
\\
1-4\mu_0^2=\dsfrac{2(4\mu_1^2-1)}{4\mu_1^2+1}.
\end{array}
\right.
\end{equation*}
\item[Case of $g_{\l}:$]
$\dsfrac{\d g_{\l}(k,\xi)}{\d k} = \frac{\xi\omega(1-4\mu_0^2)\(4\mu_1^2-1\)}{16\sqrt{1-4\c^2}\
\(k\mp\frac{\i}{2}\)^2}
\mp$
\\
$
\mp\dsfrac{\i\xi\omega
\left[
(1-4\mu_0^2)(1-4\mu_1^2)-2(1-4\c^2)(1-4\mu_0^2+1-4\mu_1^2)
\right]
}
{8\(1-4\c^2\)^{3/2}\
\(k\mp\frac{\i}{2}\)}+\mathrm{O}(1).
$
\\
Here $\mu_0=\mu_{0,\l},\quad \mu_1=\mu_{1,\l}$ are defined in (\ref{mul01}).
\\
Correspondingly, equations for the parameters are
\begin{equation*}\textstyle
\left\{
\begin{array}{l}\dsfrac{\xi(1-4\mu_0^2)\(4\mu_1^2-1\)}{16\sqrt{1-4\c^2}\
}=\frac{1}{4},
\\
1-4\mu_0^2=\dsfrac{2(1-4\c^2)(4\mu_1^2-1)}{2(1-4\c^2)+4\mu_1^2-1}.
\end{array}
\right.
\end{equation*}
\end{itemize}
\noindent 

\subsection{Transitions between regions with different phase functions}
Now let us take a look at the values $\xi_j, j=1,2,3,4,$ which separate regions of different phase functions. 
These values depend on whether $\frac{c}{\omega}$ lies in the interval $(3,\infty),$ $(1,3)$, or $(0,1).$ 
The analysis of systems (\ref{dg1_crit_param_eq_sys}), (\ref{dg2_crit_param_eq_sys}) is presented in 
subsection \ref{SubSect_Exist_param}.
\\
\textbf{Case 1.}
 Transition from $\left\{\xi<\xi_1\right\}$ to $\left\{\xi_1<\xi<\xi_2\right\}$ in the case $c/\omega>3$. 
\\
This case is characterized by changing of the phase function $g_{\l}(k,\xi)$ with $g_1(k,\xi)$; it occurs when
$d_0=d_1=\mu_0.$
Equations (\ref{dg1_param_eq_sys}) give us $1-4d_0^2=4(1-4\c^2)$ and $\xi_1=\frac{-1}{4}\(\frac{c}{\omega}+1\)^{3/2}.$
\\
\textbf{Case 2.} Transition from $\left\{\xi_1<\xi<\xi_2\right\}$ to $\left\{\xi_2<\xi<\xi_3\right\}$
in the case $c/\omega>3$. 
\\
This case is characterized by changing of the phase function $g_{1}(k,\xi)$ with $g_2(k,\xi)$; it occurs when
 $d_1=0.$
Equations (\ref{dg2_param_eq_sys}) give us 
\begin{subequations}
\label{dg1_crit_param_eq_sys}
\begin{eqnarray}\textstyle
&&\dsfrac{-\xi_2\sqrt{1-4d_0^2}\,(1-4\mu_0^2)}{4\sqrt{1-4\c^2}}=1,
\\
&&1-4\mu_0^2=\dsfrac{2(1-4\c^2)(1-4d_0^2)}{4\c^2(1-4d_0^2)-(1-4\c^2)},
\\
&&\int\limits_{\i d_0}^{0}\dsfrac{k^2(k^2+\mu_0^2)\sqrt{k^2+d_0^2}}{\(k^2+\frac14\)^2\sqrt{k^2+\c^2}}=0.
\end{eqnarray}
\end{subequations}

\noindent
\textbf{Case 3.} Transition from $\left\{\xi_2<\xi<\xi_3\right\}$ to $\left\{\xi_3<\xi<\xi_4\right\}$ in the case $c/\omega>1.$
\\
This case is characterized by changing of the phase function $g_{2}(k,\xi)$ with $g_3(k,\xi)$; it occurs when
$k_1=\infty.$
Equations (\ref{dg2_param_eq_sys}) give us \begin{subequations}\label{dg2_crit_param_eq_sys}
\begin{eqnarray}\textstyle
&&\xi_3=0,
\\
&&1-4\mu_0^2=\dsfrac{2(1-4\c^2)(1-4d_0^2)}{\color{black}\(1+1-4\c^2\)(1-4d_0^2)-(1-4\c^2)\color{black}},\\
&&\int\limits_{0}^{\i d_0}\dsfrac{(k^2+\mu_0^2)\sqrt{k^2+d_0^2}}{\(k^2+\frac14\)^2\sqrt{k^2+\c^2}}=0.
\end{eqnarray}
\end{subequations}




\subsection{Unique solvability of the systems for the parameters of $g$-functions.}\label{SubSect_Exist_param}
In this subsection we discuss the solvability of the above systems (\ref{dg1_param_eq_sys}), (\ref{dg2_param_eq_sys}), (\ref{dg3_param_eq_sys}), (\ref{dg1_crit_param_eq_sys}), (\ref{dg2_crit_param_eq_sys}),  for the parameters of the phase functions.
%
%
\begin{lem}
Each of the systems (\ref{dg1_param_eq_sys}), (\ref{dg2_param_eq_sys}), (\ref{dg3_param_eq_sys}), (\ref{dg1_crit_param_eq_sys}), (\ref{dg2_crit_param_eq_sys}) has a unique solution that satisfies $0\leq d_1\leq\mu_0\leq d_0<\c<\frac12.$
\end{lem}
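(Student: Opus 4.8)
The plan is to treat all five systems by a single mechanism, which exploits that the last equation of each subsystem (the vanishing-integral condition) is a ``period'' condition over a segment of the imaginary axis, while the second equation is an explicit rational ``pole-cancellation'' identity and the first equation is affine in $\xi$. First I would substitute $k=\i s$ in every vanishing-integral condition. On the relevant segment one has $0\le d_1\le\mu_0\le d_0<\c$, so $s^2+d_0^2$, $s^2+\tfrac14$ and $\c^2-s^2$ keep a fixed sign, and each integral condition reduces to a real relation of the form $\int(\mu_0^2-s^2)\,W(s)\,\d s=0$ with $W>0$ an explicit elementary weight (for $g_1$, and on a subsegment $[\,\i d_1,\i d_0]$, $W$ additionally carries the factor $s\sqrt{(d_0^2-s^2)(s^2-d_1^2)}$; for $g_2,g_3$ and on $[\,0,\i d_0]$ it carries $\sqrt{d_0^2-s^2}$). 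Since $\mu_0^2-s^2$ is positive below $\mu_0$ and negative above, the condition forces $d_1<\mu_0<d_0$ by itself; moreover $\partial_{\mu_0^2}$ of the integral equals $\int W>0$, while the integral is negative at $\mu_0=d_1$ (or $\mu_0=0$) and positive at $\mu_0=d_0$. Hence the integral condition defines a unique $\mu_0$, depending real-analytically on the remaining parameters (analytic implicit function theorem).

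Next I would feed this $\mu_0$ into the pole-cancellation equation — the explicit rational identity of the form $1-4\mu_0^2=\tfrac{2(1-4\c^2)(1-4d_0^2)(\cdots)}{(\cdots)}$ — turning it into a single scalar equation in the gap parameters. For the two-parameter critical systems (\ref{dg1_crit_param_eq_sys}), (\ref{dg2_crit_param_eq_sys}) (and the $\xi=0$ case) this is one equation for $d_0\in(0,\c)$; I would show the resulting function of $d_0$ is strictly monotone and changes sign as $d_0\to0^+$ and $d_0\to\c^-$ (the latter using that the denominator of the rational relation degenerates there), so the intermediate value theorem produces a unique $d_0$, hence a unique $\mu_0$; the normalization equation, being affine in $\xi$, then determines the remaining unknown (for (\ref{dg1_crit_param_eq_sys}) this is $\xi_2$) uniquely.

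For the three-parameter systems (\ref{dg1_param_eq_sys}), (\ref{dg2_param_eq_sys}), (\ref{dg3_param_eq_sys}) (with $\xi$ prescribed and unknowns $\mu_0,d_0$ and one of $d_1,k_1,\mu_1$), the same two reductions cut out a curve $\Gamma$ in the plane of the two gap parameters, along which $\mu_0$ is an explicit function. I would show that the admissible arc of $\Gamma$ joins the two neighbouring degenerations ($d_1=\mu_0$ and $d_1=0$ for $g_1$; $k_1=0$ and $k_1=\infty$ for $g_2$; $\mu_1=\infty$ and $d_0=\c$ for $g_3$), which are precisely the critical systems already treated; that the normalization equation represents $\xi$ as a strictly monotone function along $\Gamma$, with range exactly the interval $(\xi_{j-1},\xi_j)$ listed after (\ref{xi_j_def_beg}); and hence that each admissible $\xi$ corresponds to a unique point of $\Gamma$. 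The interlacing inequalities $0\le d_1\le\mu_0\le d_0<\c<\tfrac12$ are then inherited: $d_1<\mu_0<d_0$ from the sign structure of the integrand (with equality only in the critical limits), and $d_0<\c$ from positivity of the denominator in the rational relation.

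The main obstacle is the monotonicity statements, because the root $\mu_0$ of the period integral depends on the gap endpoints only implicitly (it occupies the sign-changing factor of the integrand), and this must be propagated to strict monotonicity of the composite function $d_0\mapsto\xi$ along $\Gamma$. The efficient route is to differentiate the two defining relations with respect to $d_0$, eliminate $\d\mu_0/\d d_0$, and reduce the sign of $\d\xi/\d d_0$ to the sign-definiteness of $(\mu_0^2-s^2)W(s)$ together with a Cauchy--Schwarz-type estimate for the resulting quotient of period integrals — this is the same mechanism that underlies uniqueness of the modulation parameters in the $g$-function/nonlinear steepest-descent method, and I would follow it, checking throughout that $d_0$ stays below $\c$ so that all square roots and the rational relation remain well defined.
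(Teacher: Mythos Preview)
Your plan follows essentially the same route as the paper. The paper also treats (c) and (b) as two monotone relations between $\mu_0$ and $d_0$ --- (c) increasing (this is exactly your $\int(\mu_0^2-s^2)W(s)\,\d s=0$ observation, which the paper simply cites to \cite{Kotlyarov_Minakov_2010}) and (b) decreasing with a vertical asymptote --- and intersects their graphs in the $(1-4d_0^2,\,1-4\mu_0^2)$-plane; for the three-parameter system $g_1$ it fixes $d_1$ as a free parameter, runs this intersection argument to produce $(\mu_0(d_1),d_0(d_1))$, and then reads $d_1$ off equation (a), which is your curve $\Gamma$ parametrized by $d_1$.

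One ingredient you do not mention: for $g_1$ the two curves actually intersect in the admissible range only under the auxiliary inequality $4(1-4\c^2)<1-4d_1^2$, and the paper checks this separately by a short contradiction argument (assuming equality in this inequality, (b) forces $d_1=\mu_0=d_0$, which is the opening of the zone). You would need an analogous positivity check for the denominator of the rational relation, which your outline alludes to only in passing (``$d_0<\c$ from positivity of the denominator'').

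Both the paper and your proposal leave the last step --- why (a) determines a unique $d_1$, equivalently why $\xi$ is strictly monotone along $\Gamma$ --- essentially unargued; the paper just asserts ``we determine $d_1$'', whereas you correctly flag this as the main obstacle and sketch a plausible differentiate-and-estimate strategy.
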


\begin{proof}
Let us treat, for instance, systems (\ref{dg1_crit_param_eq_sys}) and (\ref{dg1_param_eq_sys}), 
the other ones can be treated in a similar way.
\\
$\bullet$ Consider first system (\ref{dg1_crit_param_eq_sys}). Equation (\ref{dg1_crit_param_eq_sys}b) 
determines $1-4\mu_0^2$ as a decreasing function of $1-4d_0^2$, which has a vertical asymptote at 

$1-4d_0^2=\dsfrac{1-4\c^2}{4\c^2}=\dsfrac{\omega}{c},\ \ 
\textrm{ (see Figure \ref{Graphics_existence_of_c_param}, left) and we have that}\\$ 
$1-4\mu_0^2=\dsfrac{2}{\frac{c}{\omega}-1}$ 
as $1-4d_0^2=1.$

On the other hand, in the same manner as in \cite{Kotlyarov_Minakov_2010}, we can show that equation 
(\ref{dg1_crit_param_eq_sys}c) determines $\mu_0$ as an increasing function of $d_0$, or $1-4\mu_0^2$ as 
an increasing function of $1-4 d_0^2$, with $$1-4\mu_0^2=1 \textrm{ as } 1-4d_0^2=1.$$
We have: $1-4\mu_0^2=\frac{2}{\frac{c}{\omega}-1}<1$ provided $\frac{c}{\omega}>3,$ hence, 
(\ref{dg1_crit_param_eq_sys}b)--(\ref{dg1_crit_param_eq_sys}c) determine a unique pair $\mu_0\leq d_0$, 
where the equality is attained only when $\mu_0=d_0=0.$
Further, substituting those $\mu_0$, $d_0$ in (\ref{dg1_crit_param_eq_sys}a), we determine $\xi_2<0.$


\begin{figure}
\begin{minipage}[ht!]{.4\linewidth}
\vskip-3.1cm\hskip-.0010mm
\includegraphics[scale=0.5]{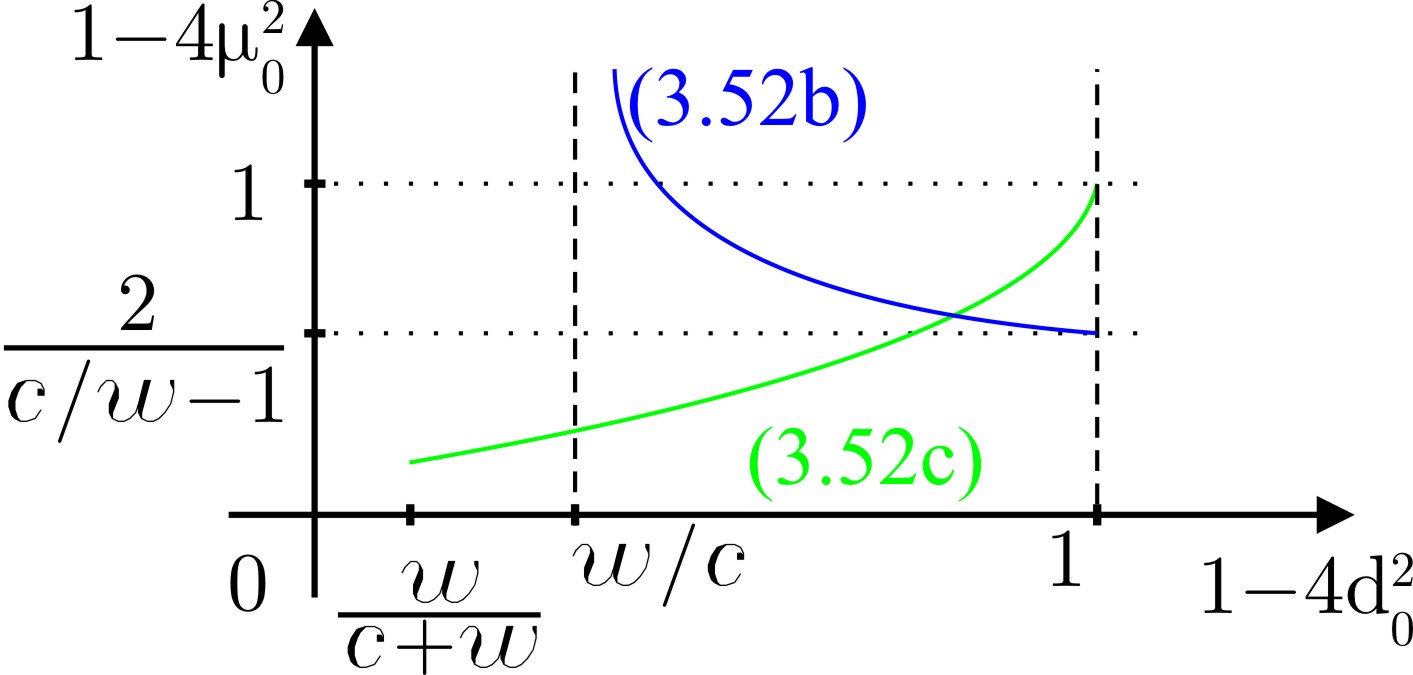}
\end{minipage}
\begin{minipage}[ht!]{.4\linewidth}
\vskip-3.1cm\hskip10mm
\includegraphics[scale=0.5]{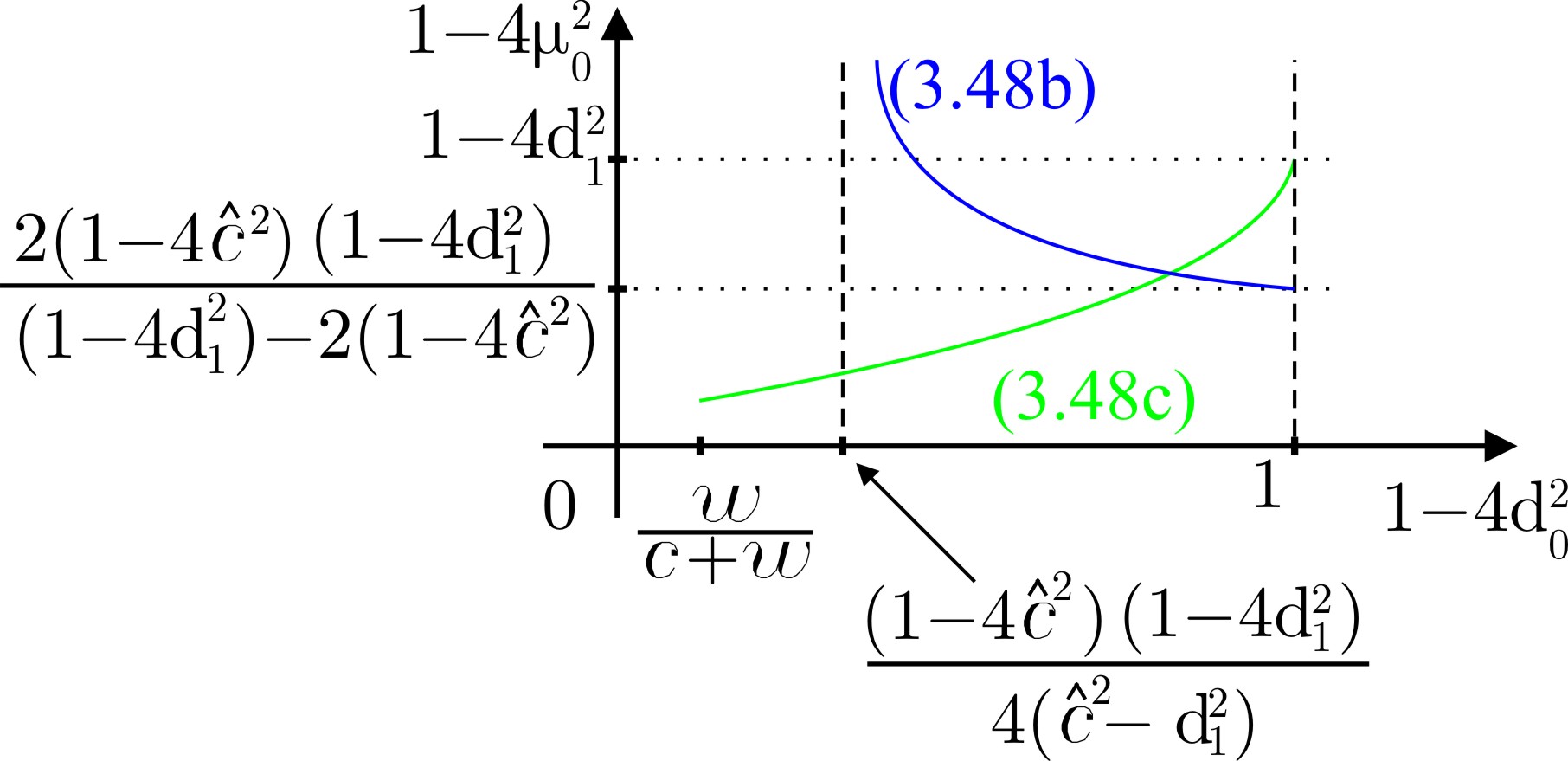}
\end{minipage}

\caption{Chart of system (\ref{dg1_crit_param_eq_sys}b)-(\ref{dg1_crit_param_eq_sys}c) (left) and 
(\ref{dg1_param_eq_sys}b)-(\ref{dg1_param_eq_sys}c) (right).}
\label{Graphics_existence_of_c_param}
\end{figure}

$\bullet$ Now consider system (\ref{dg1_param_eq_sys}). First we consider system 
(\ref{dg1_param_eq_sys}b)-(\ref{dg1_param_eq_sys}c) as a system of variables $d_0, \mu_0$ with a parameter $d_1.$ 
Equation (\ref{dg1_param_eq_sys}b) determines $1-4\mu_0^2$ as a decreasing function of $1-4d_0^2$, which 
has a vertical asymptote at 

$1-4d_0^2=\dsfrac{(1-4\c^2)(1-4 d_1^2)}{4(\c^2-d_1^2)},\ \ $ (see Figure \ref{Graphics_existence_of_c_param}, 
right) and we have that

$1-4\mu_0^2=\dsfrac{2(1-4\c^2)(1-4d_1^2)}{(1-4d_1^2)-2(1-4\c^2)}$ as $1-4d_0^2=1-4 d_1^2.$

\noindent
On the other hand, in the same manner as in \cite{Kotlyarov_Minakov_2010}, we can show that equation 
(\ref{dg1_param_eq_sys}c) determines $\mu_0$ as an increasing function of $d_0$, or $1-4\mu_0^2$ as an 
increasing function of $1-4 d_0^2$, with 

$1-4\mu_0^2=1-4d_1^2$ as $1-4d_0^2=1-4d_1^2.$ 

\noindent We have

\noindent  $$\dsfrac{(1-4\c^2)(1-4 d_1^2)}{4(\c^2-d_1^2)}<1-4d_1^2 \textrm{  and }
\dsfrac{2(1-4\c^2)(1-4d_1^2)}{(1-4d_1^2)-2(1-4\c^2)}<1-4d_1^2$$

\noindent provided 
$4(1-4\c^2)<1-4d_1^2.$

\noindent
If the latter inequality is satisfied, then (\ref{dg1_param_eq_sys}b)--(\ref{dg1_param_eq_sys}c) 
determine a unique pair $\mu_0\leq d_0$, where the equality is attained only when 
$$d_1=\mu_0=d_0 \ \ \ \ \ \ \textrm{     and     }\ \ \ \ \ \
4(1-4\c^2)=1-4d_1^2.$$
Further, substituting those $\mu_0$, $d_0$ in (\ref{dg1_param_eq_sys}a), we determine $d_1$, which 
satisfies $0\leq d_1<\mu_0$ by virtue of (\ref{dg1_param_eq_sys}c).

Hence, the only thing which remains to do is to establish the inequality $$4(1-4\c^2)<1-4d_1^2$$ inside of the zone. 
It is satisfied in the 
critical case when $d_1=0.$ Suppose the contrary, $4(1-4\c^2)=1-4d_1^2.$ In this case (\ref{dg1_param_eq_sys}b)
can be rewritten as 
$$
(1-4\mu_0^2)\left[(1-4d_0^2)-(1-4d_1^2)\right]+2(1-4d_0^2)\left[(1-4\mu_0^2)-(1-4d_1^2)\right]=0,
$$ 
and since $d_1\leq\mu_0\leq d_0,$ we obtain $d_1=\mu_0=d_0,$ and hence we have the opening of the hyperelliptic zone. This 
contradiction shows that inside of the zone we have $4(1-4\c^2)<1-4d_1^2$.

\end{proof}


\begin{figure}
\begin{minipage}[ht!]{1.\linewidth}
\vskip-2.7cm\hskip-10mm\includegraphics
[scale=0.8]
{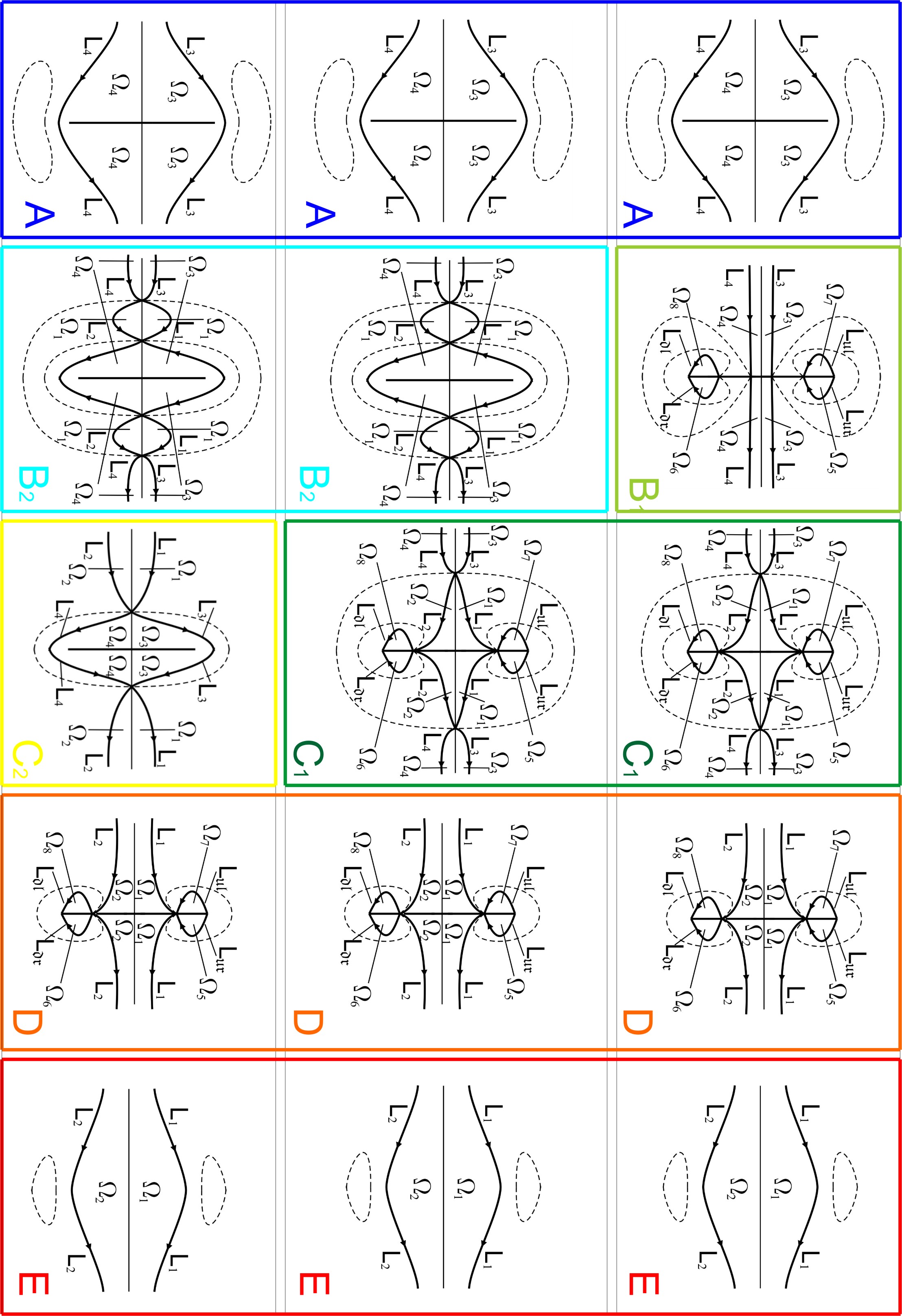}
\end{minipage}
\caption{Contour transformations for the RH problem in different regions of the parameter $\xi.$}
\label{Figure_cont_middle_function}
\end{figure}

\section{Riemann-Hilbert problem transformations}\label{sect_RH problem transformations}


In this section we present a series of transformations of the original RH problem from Lemma \ref{lem_RH}
in the domains $\xi_1<\xi<\xi_4$ in the spirit of the nonlinear steepest descent method \cite{DZ93}, which leads to either 
trivial 
model problems or model problems explicitly solvable in terms of elliptic (genus 1) or hyperelliptic (genus 2) functions. 

\noindent
\textbf{Step 1.}\\
First, we exchange the phase function $g_{\r}(y,t;k)$ with $g$-function $g(k,\xi)$ constructed in Section \ref{sect_phase function}:
\[V_{\r}^{(1)}(\xi,t,k)=V_{\r}(\xi,t,k)e^{i(t g(k,\xi)-
g_{\r}(y,t;k))\sigma_3}.\] 

For the brevity of exposition we keep the same letter $g$ for all the cases of variable $\frac{c}{\omega}$, $\xi$ shown in 
Figure \ref{Figure_phase_middle_function_plot}. The actual meaning of $g(k,\xi)$ is as follows:
\begin{itemize}
 \item in cases $A, B_2, C_2$ we take $g(k,\xi)=g_{\l}(k,\xi)$;
 \item in case $B_1$ we take $g(k,\xi)=g_{1}(k,\xi)$;
 \item in case $C_1$ we take $g(k,\xi)=g_{2}(k,\xi)$;
 \item in case $D$ we take $g(k,\xi)=g_{3}(k,\xi)$;
 \item in case $E$ we take $g(k,\xi)=g_{\r}(k,\xi)$.
\end{itemize}

Then the jump matrix and pole conditions for $V_{\r}^{(1)}$ have the same form as for $V_{\r}$ with the phase function 
being changed from $g_{\r}(y,t;k)$ to $tg(k,\xi)$ everywhere in formulas (\ref{J_R})-(\ref{M infinity}); the jump matrices on the intervals 
$\(\i\c,-\i\c\)$ are read now as follows:
\begin{equation}\nonumber
J_{\mathfrak{r}}(x,t;k)=\begin{pmatrix}\e^{\i t(g_-(k,\xi)-g_+(k,\xi))}&0\\ f(k)\,\e^{\i t(g_-(k,\xi)+g_+(k,\xi))}&\e^{-\i t(g_-(k,\xi)-g_+(k,\xi))}\end{pmatrix},\quad
k\in\(\i\c,0\);
\end{equation}
\begin{equation}\label{J1_0_-ic}
J_{\mathfrak{r}}(x,t;k)=\begin{pmatrix}\e^{\i t(g_-(k,\xi)-g_+(k,\xi))}&\ol{ f(\ol k)}\,\e^{-\i t(g_-(k,\xi)+g_+(k,\xi))}\\0&\e^{-\i t(g_-(k,\xi)-g_+(k,\xi))}\end{pmatrix},\quad
k\in\(0,-\i\c\);
\end{equation}
\noindent
\textbf{Step 2.}
\textbf{Regular Riemann--Hilbert problem.}\\ Following ideas from \cite{Kostenko_Shepelsky_Teschl_2009}, we
transform our meromorphic RH problem into a holomorphic one. In order to
achieve this, we define the function $V^{(2)}_{\mathfrak{r}}(x,t;k)$
as follows:
\begin{equation}\label{hat M}
V^{(2)}_{\mathfrak{r}}(x,t;k)=\left\{
\begin{array}{l}
V^{(1)}_{\mathfrak{r}}(x,t;k)\begin{pmatrix}1&0\\\dsfrac{-\i\gamma_{+,j}^2\e^{2\i
t g(\xi,\i\kappa_j)}}{k-\i\kappa_j}&1\end{pmatrix},\qquad
|k-\i\kappa_j|<\varepsilon,
\\
V^{(1)}_{\mathfrak{r}}(x,t;k)\begin{pmatrix}1&\dsfrac{\i\gamma_{+,j}^2\e^{2\i
t g(\xi,\i\kappa_j)}}{k+\i\kappa_j}\\0&1\end{pmatrix}, \qquad
|k+\i\kappa_j|<\varepsilon,\\
V^{(1)}_{\mathfrak{r}}(x,t;k),\qquad \textrm{ elsewhere},
\end{array}
\right.
\end{equation}
where $\varepsilon>0$ is a sufficiently small number such that
circles $|k-\i\kappa_j|=\varepsilon$ do not intersect and lie in the domain
$\Im z(k)>0.$

The vector-valued function $ V^{(2)}_{\mathfrak{r}}(x,t;k)$ 
solves the
following RH problem: find a sectionally holomorphic function
$V^{(2)}_{\mathfrak{r}}(x,t;k)$ which satisfies the following conditions: \begin{enumerate}
\item $V^{(2)}_{\mathfrak{r}}(x,t;.)$ is holomorphic away from the
contour $\Sigma_{\mathfrak{r}}$ and circles $C_j:=\left\{k:\
|k-\i\kappa_j|=\varepsilon\right\},$ $\overline C_j:=\left\{k:\
|k+\i\kappa_j|=\varepsilon\right\}.$ The orientation on $C_j$,
$\overline C_j$ is counterclockwise, so the positive side is
inside the circles.
It satisfies the same symmetry conditions and has the same asymptotics for large $k$ as $V^{(1)}(\xi,t;k);$ 
\item The jump condition $
V^{(2)}_{\mathfrak{r},-}(\xi,t;k)= V^{(2)}_{\mathfrak{r},+}(\xi,t;k)
J^{(2)}_{\mathfrak{r}}(\xi,t;k)$ is satisfied, where
\[J^{(2)}_{\mathfrak{r}}(\xi,t;k)\equiv J^{(1)}_{\mathfrak{r}}(\xi,t;k),\quad k\in\Sigma_{\mathfrak{r}},\]
\begin{equation}\label{J2 Cj}
J^{(2)}_{\mathfrak{r}}(\xi,t;k)=
\begin{pmatrix}1&0\\\dsfrac{\i\gamma_{+,j}^2\e^{2\i
tg(\i\kappa_j,\xi)}}{k-\i\kappa_j}&1\end{pmatrix}, \quad k\in C_j;
\end{equation}
\begin{equation}\label{J2 Cj overline}
J^{(2)}_{\mathfrak{r}}(\xi,t;k)=
\begin{pmatrix}1&\dsfrac{-\i\gamma_{+,j}^2\e^{2\i
tg(\i\kappa_j,\xi)}}{k+\i\kappa_j}\\0&1\end{pmatrix}, \quad k\in
\overline C_j.
\end{equation}
\end{enumerate}


\noindent \textbf{Step 3.} Following \cite{Kostenko_Shepelsky_Teschl_2009}, to deal with poles we
introduce the function 
$$\Lambda (k) = \prod\limits_{j=1}^N\frac{k+\i\kappa_j}{k-\i\kappa_j} $$ and 
the following transformation of the RH problem, which
acts inside the disks surrounding the poles:
$V^{(3)}_{\mathfrak{r}}(x,t;k) = V^{(2)}_{\mathfrak{r}}(x,t;k)
D(k),$ where
\[D(k) = \left\{\begin{array}{l}
\begin{pmatrix}
1 & \frac{k-\i\kappa_j}{-\i\gamma_{+,j}^2\ \e^{2\i tg(\i\kappa_j,\xi)}}\\
\frac{\i\gamma_{+,j}^2\ \e^{2\i tg(\i\kappa_j,\xi)}}{k-\i\kappa_j} & 0
\end{pmatrix}
\begin{pmatrix}\dsfrac{1}{\Lambda (k)}&0\\0&\Lambda (k)\end{pmatrix},\quad
|k-\i\kappa_j|<\varepsilon,
\\
\begin{pmatrix}
0 & \frac{-\i\gamma_{+,j}^2\ \e^{2\i tg(\i\kappa_j,\xi)}}
{k+\i\kappa_j}
\\
\frac{k+\i\kappa_j}{\i\gamma_{+,j}^2\ \e^{2\i tg(\i\kappa_j,\xi)}} & 0
\end{pmatrix}
\begin{pmatrix}\dsfrac{1}{\Lambda (k,\xi)}&0\\0&\Lambda (k,\xi)\end{pmatrix},\quad
|k+\i\kappa_j|<\varepsilon,
\\
\Lambda^{-\sigma_3}(k,\xi)
,\
\rm{ elsewhere}.
\end{array}\right.\]
It is easy to notice that $D(k)$ is a piecewise holomorphic
function and does not have poles at the points $\pm\i\kappa_j, \
j=1,...,N$.
\\
\noindent
Now the jump matrix
$J^{(3)}_{\mathfrak{r}}(x,t;k)=(D_+(k,\xi))^{-1}J^{(2)}_{\mathfrak{r}}(x,t;k)
D_-(k,\xi)$ 
\begin{eqnarray*}
J^{(3)}_{\mathfrak{r}}(x,t;k)&=& \begin{pmatrix}1 &
\frac{(k-\i\kappa_j)\Lambda ^2(k)}{\i\gamma_{+,\j}^2\e^{2\i
tg(\i\kappa_j,\xi)}}\\0&1\end{pmatrix}, |k-\i\kappa_j|=\varepsilon,
\\
& = & \begin{pmatrix}1 & 0\\
\frac{(k+\i\kappa_j)\Lambda ^{-2}(k)}{-\i\gamma_{+,\j}^2\e^{2\i
tg(\i\kappa_j,\xi)}}&1\end{pmatrix}, |k+\i\kappa_j|=\varepsilon,
\end{eqnarray*}

\noindent on the circles around the points $\i\kappa_j$ is exponentially 
small for large $t$, and therefore, they do not contribute in the leading term of the asymptotics. 

We recall (\ref{prop g 1})-(\ref{prop g 2}) that $$g_--g_+ = B \textrm{ for } k\in(\i d_0,\i d_1)\cup(-\i d_1,-\i d_0),\ \ \ \ \textrm{ and }$$
$$g_-+g_+=0 \textrm{ for } k\in(\i\c,\i d_0)\cup(-\i d_0,-\i\c)\cup(\i d_1,-\i d_1).$$ 
This holds for all the regions where the above 
quantities have sense. In the case when $d_1$ is not defined, we substitute it with $0$ in the above relations 
(see also Figure \ref{Figure_phase_middle_function_plot}).
%
%
%
\\\\        
\noindent \textbf{Step 4.}

\noindent On different parts of the real line we need a lower/upper or an upper/lower triangular factorization of 
the jump matrix.
Thee lower/ upper triangular factorization is 
$$
J^{(3)} = \begin{pmatrix}1&0\\-r(k)\Lambda^{-2}(k)\e^{2\i t g}\end{pmatrix}\begin{pmatrix}1&\ol{r(k)}\Lambda^2(k)\e^{-2\i t g}\\0&1\end{pmatrix}.$$
To provide an upper/lower triangular factorization of the jump matrix supported on the real line, 
we introduce the $\delta$- transformation:
a new function $$V^{(4)}(\xi,t;k)=V^{(3)}(\xi,t;k)\delta^{-\sigma_3}
$$
has a jump matrix
$$ \quad J^{(4)}=\delta_+^{\sigma_3}J^{(3)}\delta_-^{-\sigma_3}.$$
Then on the real axis 
\[J^{(4)}\hskip-.5mm(\hskip-.5mm\xi\hskip-.5mm,\hskip-.5mmt\hskip-.5mm;\hskip-.5mmk\hskip-.5mm)\hskip-1.mm=\hskip-1.5mm\begin{pmatrix}\hskip-1.5mm1& \frac{k}{z(k)}a\hskip-.5mm(\hskip-.5mmk\hskip-.5mm)\hskip-.5mm\overline{b(\overline{k})}\delta_+^2\hskip-.5mm(\hskip-.5mmk\hskip-.5mm)\hskip-.5mm\Lambda\hskip-.5mm^2\hskip-.5mm(\hskip-.5mmk\hskip-.5mm)\hskip-.5mm
\e^{-2\i tg}\\0&1\hskip-1.5mm\end{pmatrix}\hskip-2.5mm
\begin{pmatrix}1&0\\\frac{-k}{z(k)}b(k)\overline{a(\overline{k})}\Lambda\hskip-.5mm^{-2}\hskip-.5mm(\hskip-.5mmk\hskip-.5mm)\hskip-.mm
\delta_-^{-2}\hskip-.5mm(\hskip-.5mmk\hskip-.5mm)\e^{2\i tg}&1
\hskip-.5mm\end{pmatrix},
\]
providing \[\frac{\delta_+}{\delta_-}=1-|r(k)|^2, \quad k\in(-\infty,-k_1)\cup(-k_0,k_0)\cup(k_1,\infty),
\]
where (we refer to $A, B_j, C_j, D, E, j=1,2,$ as to subgraphics of Figure \ref{Figure_phase_middle_function_plot}) 
$$
k_1=\begin{cases}
0, \textrm{  for } A, B_1, 
\\
k_1(\xi), \textrm{  for } B_2, C_1, 
\\
+\infty, \textrm{  for } C_2, D, E,
\end{cases}
\textrm{ and }
k_0=\begin{cases}
0, \textrm{  for } A, B_1, C_1, D, E,
\\
k_0(\xi), \textrm{  for } B_2, C_2.
\end{cases}
$$

In the cases $D, E$, this $\delta$-transformation is trivial, $\delta(k)\equiv 1.$
\\
For $k_1\in (0,+\infty)$ we have 
\begin{equation}\label{delta g_2,3,r}\delta(k,\xi)=\exp\left(\frac{1}{2\pi\i}\left\{
\int\limits_{-\infty}^{-k_1(\xi)}+\int\limits_{-k_0(\xi)}^{+k_0(\xi)}+\int\limits_{k_1(\xi)}^{+\infty}\right\}
\dsfrac{\log\(1-|r(s)|^2\)\d s}{s-k}\right).\end{equation}




The jump matrix on the interval $(\i \c,-\i \c)$ is as follows:
\begin{eqnarray}\label{Jump3.5}
J^{(4)}(\xi,t;k)&=& \begin{pmatrix}\dsfrac{\delta_+}{\delta_-}\e^{\i t(g_--g_+)} & 0 \\
\dsfrac{ f(k)\e^{\i t(g_-+g_+)}}{\Lambda^2(k)\delta_+\delta_-} & \dsfrac{\delta_-}{\delta_+}\e^{-\i t(g_--g_+)}
\end{pmatrix},
k\in\(\i\c, 0\),
\\\nonumber
=&& \begin{pmatrix}\dsfrac{\delta_+}{\delta_-}\e^{\i t(g_--g_+)} & \ol{ f(\ol
k)}\Lambda^{2}(k)\delta_+\delta_-\e^{-\i t(g_-+g_+)} \\ 0 & \e^{-\i t(g_--g_+)}\dsfrac{\delta_-}{\delta_+}
\end{pmatrix},
k\in\(0,-\i\c\).
\end{eqnarray}

\noindent \textbf{Step 5.}\\
Here we deal with oscillation behavior of jump matrix on the real line. Using triangular factorizations provided in Step 4, 
we
remove oscillations from $\R$, moving exponentials from jump matrices to those domains, where they are small.
The function $g(k,\xi)$ and its imaginary part (see the signature
table of $\Im g(k,\xi)$ at Figure \ref{Figure_phase_middle_function_plot} on 
p.~\pageref{Figure_phase_middle_function_plot})
suggest
a choice of a new contour $\Sigma_2$ for RH problem (see Figure \ref{Figure_cont_middle_function} on p.~\pageref{Figure_cont_middle_function}).
We use the standard lower-upper and upper-lower factorization 
of the jump
matrix on the real axis (see for example \cite{DZ93}, formulas (0.11) and (0.23)) and apply the transformation

$V^{(5)}(\xi,t,k)=V^{(4)}(\xi,t,k)G^{(5)}(\xi,t,k),$ where
\begin{equation}
G^{(5)}(\xi,t,k)=\begin{cases}\left(\begin{array}{ccc}1&0\\-r(k)\Lambda^{-2}(k)\delta^{-2}(k)
e^{2itg(k,\xi)}&1
\end{array}\right),\quad k\in\Omega_1,\\
\left(\begin{array}{ccc}1&-\overline{r(\overline{k})}\Lambda^2(k)\delta^2(k)e^{-2itg(k)}\\0&1
\end{array}
\right),\quad k\in\Omega_2,\\
\begin{pmatrix}1& \dsfrac{k}{z(k)}a(k)\overline{b(\overline{k})}\delta^2(k)\Lambda^2(k)
\e^{-2\i tg}\\0&1\end{pmatrix},\quad k\in\Omega_3,\\
\begin{pmatrix}1&0\\\dsfrac{k}{z(k)}b(k)\overline{a(\overline{k})}\Lambda^{-2}(k)
\delta^{-2}(k)\e^{2\i tg}&1
\end{pmatrix},\quad k\in\Omega_4,\\
I,\quad \rm{elsewhere},
\end{cases}
\end{equation}
where the domains $\Omega_j, j=1,2,3,4$ are indicated at Figure \ref{Figure_cont_middle_function}. (In certain 
cases 
$A, B_1, B_2, C_1, C_2, D, E$ illustrated in Figure \ref{Figure_cont_middle_function}, some of the domains $\Omega_j, j=1,4$ 
are absent and the value of $G^{(5)}$ is not defined there, in other words, we skip the definition of $G^{(5)}$ 
for those domains).
 Thus, 
we get the new RH problem:
$$V^{(5)}_-(\xi,t,k)=V^{(5)}_+(\xi,t,k)J^{(5)}(\xi,t,k),
\qquad
V^{(5)}(\xi,t,k)\rightarrow (1,1),\quad k\rightarrow\infty
$$
with the following jump matrices:
$$
J^{(5)}=G^{(5)}|_{\Omega_1},  k\in L_1\cup L_3;\ \ \ \ 
J^{(5)} =\(G^{(5)}|_{\Omega_2}\)^{-1},   k\in L_2\cup L_4;\ \ \ \
J^{(5)}
=I, k\in\R.
$$
On the interval $(\i\c,-\i\c)$ the jump matrix is different for different situations:
\\
--for $C_1, D, E$ 
we have

$
J^{(5)}=
\e^{ \i t (g_--g_+)\sigma_3},
 k\in (\i d_0,-\i d_0);\
 J^{(5)}=J^{(4)}, k\in(\i\c,\i d_0)\cup(-\i d_0,-\i\c);
$
\\
--for the case $A$, $B_1$, $B_2$, $C_2$ 
we have
\begin{eqnarray}
\nonumber J^{(5)}(\xi,t,k)&=&
\begin{pmatrix}0&-\dsfrac{\delta_+\delta_-\Lambda^2\e^{-\i t(g_-+g_+)}}{ f(k)}\\\dsfrac{ f(k) \e^{\i t(g_-+g_+)}}{\delta_-\delta_+\Lambda^2}&0\end{pmatrix},
 k\in (\i\ *,0),\\\nonumber
&=&
\begin{pmatrix}0&\ol{ f(\ol{k})}\Lambda^2\delta_-\delta_+\e^{-\i t(g_-+g_+)}\\\dsfrac{-\e^{\i t(g_-+g_+)}}{\ol{ f(\ol{k})}\Lambda^2(k)\delta_-\delta_+}&0\end{pmatrix},
 k\in (0, -\i\ *),
\\\label{Jump4}  &=&J^{(4)}, k\in(\i\c,\i d_1)\cup(-\i d_1,-\i\c),\nonumber\end{eqnarray}
where instead of $*$ we substitute $d_1$ in case $B_1$, and substitute $\c$ in cases $A, B_2, C_2.$


\noindent These expressions for the jump matrix on the interval $(\i \c,-\i \c)$ are calculated using 
(\ref{prop g 1})-(\ref{prop g i/2}),
the jump relation $$f(k)=r_-(k)-r_+(k),\quad k\in(0,\i\c)$$ and
the symmetry relation $\quad r(k)=\overline{r(\overline{-k})}.$

\vskip0.5cm

\noindent\textbf{Step 6.} Opening lenses (concerns only the cases 
$B_1,$ $C_1$, $D$ (see Figure \ref{Figure_cont_middle_function})).
\\
Here we deal with oscillation jump matrices on the interval $\i\c,-\i\c,$ providing triangular factorization as we did for 
the real line in Step 5.
First we note that the function $f(k)=\frac{z(k+0)}{ka(k-0)a(k+0)}$ has the following analytic
continuation from the  interval $(-\i\c,\i\c)$:

$f(k)=\widehat{f}_+(k),\quad k\in(-\i\c,\i\c),\qquad\mathrm{where}\qquad\widehat{f}(k)=
\dsfrac{z(k)}{ka(k)\overline{b\(\overline{k}\)}}.
$
\vskip-14mm\begin{equation}
\end{equation}
Following ideas from \cite{Kotlyarov_Minakov_2010},  \cite{KM12}, we factorize the jump matrix $J^{(5)}$ on
$(i\c,\i d_0)\cup(-\i\c,-\i d_0)$ as follows:
\\
$J^{(5)}\hskip-1mm=\hskip-1mm
F_+^{-\sigma_3}\hskip-1mm\begin{pmatrix}1&\displaystyle\frac
{F_+^2\Lambda^2\delta_+^2e^{-2itg_+}}{\widehat{f}_+(k)}\\0&1
\end{pmatrix}
\hskip-1mm
\begin{pmatrix}0&-i\\-i&0 \end{pmatrix}
\hskip-1mm
\begin{pmatrix}1&\displaystyle\frac
{-F_-^2\Lambda^2\delta_-^2\e^{-2itg_-}}{\widehat{f}_-(k)}\\0&1
\end{pmatrix}\hskip-2mmF_-^{\sigma_3}$
\vskip-15mm\begin{equation}\label{factorization_openinig_lenses_up}
\end{equation}
\vskip3mm
\noindent
for $k\in (\i\c,\i d_0)$ and
\\
$J^{(5)}=
F_+^{-\sigma_3}\begin{pmatrix}1&0\\\displaystyle\frac
{e^{2itg_+(k,\xi)}}{
\ol{\widehat{f}_+(\overline{k})}\Lambda^2F_+^2\delta_+^2}&1
\end{pmatrix}\begin{pmatrix}0&i\\i&0\end{pmatrix}
\begin{pmatrix}1&0\\\displaystyle\frac
{-e^{2itg_-(k,\xi)}}{\ol{\widehat{f}_-(\overline{k})}\Lambda^2F_-^2\delta_-^2}&1
\end{pmatrix}F_-^{\sigma_3},
$
\vskip-15mm
\begin{equation}\label{factorization_openinig_lenses_down}\end{equation}
\vskip3mm\noindent
for $k\in (-\i\c,-\i d_0)$.
\\
Direct calculations show that it is
possible if $F(k,\xi)$ satisfies the relation
$  F_-(k,\xi)  F_+(k,\xi)=\begin{cases}\dsfrac{\i  f(k)}{\Lambda^{2}\delta_+\delta_-},\quad
k\in(\i\c,\i d_0),\\\displaystyle\frac{i}{\ol{ f(\ol{k})}\Lambda^2\delta_+\delta_-},\quad
k\in(-\i\c,-\i d_0).\end{cases}
$

In case of $B_1$ we will also require these relations to be hold in the intervals $(\i d_1,0)$ and $(0,-\i d_1).$


We define 
$F(k,\xi)=$
\begin{equation}\label{F}\hskip-1mm\exp\hskip-1.5mm\left\{\hskip-2mm\dsfrac{\mathrm{w}\hskip-.5mm(\hskip-.5mmk\hskip-.5mm)\hskip-.5mm}{2\hskip-.5mm\pi\hskip-.5mm\i}\hskip-2mm
\left[\hskip-1mm\int\limits_{\i \c}^{\i d_0}\hskip-1.5mm+\hskip-1.5mm\int\limits_{\i d_1}^{0}\right]\hskip-2mm \frac{\log\hskip-.5mm\frac{i f}{\delta^2\Lambda^{2}}}
{s-k}\frac{\d s}{\mathrm{w\hskip-.5mm_+}(\hskip-.5mms\hskip-.5mm)} 
\hskip-1mm+\hskip-1mm
\dsfrac{\mathrm{w}\hskip-.5mm(\hskip-.5mmk\hskip-.5mm)\hskip-.5mm}{2\hskip-.5mm\pi\hskip-.5mm\i\hskip-.5mm}\hskip-1.5mm
\left[\hskip-1.mm\int\limits_{0}^{-\i d_1}\hskip-1.5mm+\hskip-1.5mm\int\limits_{-\i d_0}^{-\i \c}\hskip-.mm\right]\hskip-2.mm \frac{\log\frac{\i}{\overline{f}\delta^2\Lambda^{2}}}{s-k}\frac{\d s}{\mathrm{w\hskip-.5mm_+}\hskip-.5mm(\hskip-.5mms\hskip-.5mm)\hskip-.5mm} 
\hskip-1mm+\hskip-1mm
\dsfrac{\mathrm{w}\hskip-.5mm(\hskip-.5mmk\hskip-.5mm)\hskip-.5mm}{2\hskip-.5mm\pi\hskip-.5mm\i\hskip-.5mm}
\hskip-2.mm\left[\int\limits_{\i d_0}^{\i d_1}\hskip-1.mm+\hskip-2.mm\int\limits_{-\i d_1}^{-\i d_0}\right] \hskip-1.5mm\frac{\i\Delta}{\hskip-.5mms\hskip-.5mm-\hskip-.5mmk\hskip-.5mm}\frac{\d s}{\mathrm{w}\hskip-.5mm(\hskip-.5mms\hskip-.5mm)\hskip-.5mm}\right\}\hskip-1mm,
\end{equation}
where we substitute $d_1$ with $0$ for the cases $C_1$ and $D$.
Here $$\w(k)=\sqrt{(k^2+\c^2)(k^2+d_0^2)(k^2+d_1^2)}\quad \mbox{ for }\quad B_1, \quad \mbox{and}$$
$$\w(k)=\sqrt{(k^2+\c^2)(k^2+d_0^2)}\quad \mbox{ for } \quad C_1, D.$$
To remove the essential singularity of $F(k,\xi)$ at infinity, we set 

$\Delta:=\begin{cases}\i\(\left[\int\limits_{\i \c}^{\i d_0}+\int\limits_{\i d_1}^{0}\right]\log\frac{i f}{\delta^2\Lambda^2}\frac{s\d s }{\w_+(s)}\) 
\(\int\limits_{\i d_0}^{\i d_1}\frac{s\d s }{\w(s)}\)^{-1} \mbox{ for } B_1,
\\
\i\(\int\limits_{\i \c}^{\i d_0}\log\frac{i f}{\delta^2\Lambda^2}\frac{\d s }{\w_+(s)}\) 
\(\int\limits_{\i d_0}^{0}\frac{\d s }{\w(s)}\)^{-1}  \mbox{ for } C_1, D.
\end{cases}
$
\vskip-20mm
\begin{equation}\label{Delta}
\end{equation}
\vskip8mm

\begin{com}\label{deltaF=1}
Studying conjugation properties of $\delta(k,\xi)F(k,\xi)$ on $\R\cup(\i\c,-\i\c)$, it is easy to see that in the case $A$ (see Figure \ref{Figure_phase_middle_function_plot})
we have 
\begin{equation}\label{delta g_l1}\delta(k)F(k)=\begin{cases}\frac{1}{a(k)\Lambda(k)}\sqrt{\frac{z(k)}{k}},\quad \Im\ z(k)>0,
\\\\
\frac{\ol{a(\ol{k})}}{\Lambda(k)}\sqrt{\frac{k}{z(k)}},\quad \Im \ z(k)<0.
\end{cases}
\end{equation}
\end{com}

\noindent
Function $F(k,\xi)$ has the following properties as a function of variable $k$:
\\
$\bullet$ $ F(k,\xi)$ is analytic outside the contour $[\i\c,-\i\c]$ (since by definition (\ref{F}) $F$ is a 
Cauchy type integral
over contour $[\i\c,-\i\c]$);
\\
$\bullet$ $ F(k,\xi)$ does not
vanish (since $F$ is an exponential function);
\\
$\bullet$
$ F(k,\xi)\rightarrow 1$ as $k\rightarrow\infty$ (this is straightforward by expanding $F$ as $k\to\infty$ and 
taking into account (\ref{Delta}));
\\
$\bullet$
$ \frac{F_+(k,\xi)}{F_-(k,\xi)} =  \e^{\i \Delta},\quad k\in(\i d_0,\i d_1)\cup(-\i d_1,-\i d_0),$
with $\Delta=\Delta(\xi)$ defined in (\ref{Delta}) (this is due to Sokhotski-Plemelj formula applied to (\ref{F})).


In the case $B_1,$
we have also to factorize the jump matrix on $(\i d_0,\i d_1)\cup(-\i d_1,-\i d_0):$
\begin{subequations}\label{fact_lens_g1}
\begin{eqnarray}
J^{(5)}\hskip-1mm=\hskip-1mm
\begin{pmatrix}
1&0\\\dsfrac{-r_+\e^{2\i tg_+}}{\Lambda^2\delta_+^2}&1
\end{pmatrix}
\hskip-1mm
\e^{(\i t B+\i\Delta)\sigma_3}
\hskip-1mm
\begin{pmatrix}
1&0\\\dsfrac{r_-\e^{2\i tg_-}}{\Lambda^2\delta_-^2}&1
\end{pmatrix}, k\in(\i d_0,\i d_1),
\end{eqnarray}
\begin{eqnarray}
J^{(5)}\hskip-1.5mm=\hskip-1.5mm
\begin{pmatrix}
\hskip-1mm
1 \hskip-1mm &\hskip-1mm -\ol{r_+}\Lambda^2\delta_+^2\e^{\hskip-.2mm-\hskip-.2mm2\hskip-.2mm\i\hskip-.2mm t\hskip-.2mmg_+}\hskip-3mm
\\
\hskip-1mm 0\hskip-1mm &\hskip-1mm 1
\hskip-3mm
\end{pmatrix}
\hskip-1.5mm\e^{\hskip-.5mm(\hskip-.5mm\i t B+\i\Delta \hskip-.5mm)\sigma_3}
\hskip-2mm
\begin{pmatrix}\hskip-1mm
1&\ol{r_-}\Lambda\hskip-.5mm^{2}\hskip-.5mm\delta_-^2\e^{2\i tg_-}\\\hskip-1mm 0&1
\hskip-2mm\end{pmatrix}\hskip-1mm, k\hskip-1mm\in\hskip-1mm(\hskip-1mm-\i d_1,-\i d_0).
\nonumber\\
\end{eqnarray}
\end{subequations}
\\
By using the factorizations (\ref{factorization_openinig_lenses_up}), (\ref{factorization_openinig_lenses_down}), we get
the following RH problem:
$$V^{(6)}=V^{(5)}G^{(6)}(\xi,t,k),\quad
V^{(6)}_-=V^{(6)}_+J^{(6)}(\xi,t,k),
\quad
V^{(6)}\rightarrow (1,1),\quad k\rightarrow\infty,
$$
where
\begin{eqnarray*}
G^{(6)}(\xi,t,k)&=&F^{-\sigma_3}(k,\xi)\left(\begin{array}{ccc}1&
\displaystyle
\frac{F^2(k,\xi)\Lambda^2\delta^2e^{-2itg(k,\xi)}}{\widehat{f}(k)}\\0&1
\end{array}
\right), 
\qquad k\in\Omega_5\bigcup\Omega_7,\\
&=&F^{-\sigma_3}(k,\xi)\left(\begin{array}{ccc}1&0\\\displaystyle\frac{e^{2itg(k,\xi)}}
{\overline{\widehat{f}(\overline{k})}\Lambda^2\delta^2F^2(k,\xi)}&1\end{array}
\right),
\qquad k\in\Omega_6\bigcup\Omega_8,\\
&=&F^{-\sigma_3}(k,\xi),
\qquad k\notin\Omega_5\bigcup \Omega_6\bigcup\Omega_7\bigcup\Omega_8,
\end{eqnarray*}
and
\begin{eqnarray*}
J^{(6)}(\xi,t,k)=&F^{\sigma_3}G^{(6)}|_{\Omega_7},\ k\in L_{\mathfrak{ur}}, \quad
=\(G^{(6)}|_{\Omega_5}\)^{-1}F^{-\sigma_3},\  k\in L_{\mathfrak{ul}},\\
=&F^{\sigma_3}G^{(6)}|_{\Omega_8},\ k\in L_{\mathfrak{dr}}, \quad
=\(G^{(6)}|_{\Omega_5}\)^{-1}F^{-\sigma_3},\  k\in L_{\mathfrak{dl}},\\
\end{eqnarray*}
$J^{(6)}\hskip-1mm=\hskip-1mm\left(\hskip-1mm\begin{array}{ccc}1&0\\\displaystyle\frac{-r(k)e^{2itg(k,\xi)}}{F^2(k,
\xi)\Lambda^2\delta^2}&1
\end{array}\hskip-1mm\right), k\in L_1;\ \ 
=\hskip-1mm\left(
\hskip-1mm
\begin{array}{ccc}1&\displaystyle\frac{\overline{r(\overline{k})}F^2(k,
\xi)\Lambda^2\delta^2}
{e^{2itg(k,\xi)}}\\0&1
\end{array}\right), k\in L_2,
$
\\
$J^{(6)}\hskip-1.mm=\hskip-1.5mm\begin{pmatrix}\hskip-1.5mm1& \dsfrac{k}{z(k)}\frac{a\hskip-.5mm(\hskip-.5mmk\hskip-.5mm)\hskip-.5mm\overline{b(\overline{k})}\delta_+^2\hskip-.5mm(\hskip-.5mmk\hskip-.5mm)\hskip-.5mm\Lambda\hskip-.5mm^2\hskip-.5mm(\hskip-.5mmk\hskip-.5mm)\hskip-.5mm}
{\e^{2\i tg}}\\0&1\hskip-1.5mm\end{pmatrix}\hskip-.5mm,\hskip-.5mm k\hskip-1mm\in \hskip-1mm L_3;
\hskip1.5mm=\hskip-1.5mm
\begin{pmatrix}1&0\\\dsfrac{k}{z(k)}\dsfrac{b(k)\overline{a(\overline{k})}\e^{2\i tg}}{\Lambda\hskip-.5mm^{2}\hskip-.5mm(\hskip-.5mmk\hskip-.5mm)\hskip-.mm
\delta_-^{2}\hskip-.5mm(\hskip-.5mmk\hskip-.5mm)}&1
\hskip-1.mm\end{pmatrix}\hskip-1.mm, k \hskip-1mm\in \hskip-1mm  L_4,
$
\\
$J^{(6)}=\e^{(itB(\xi)+i\Delta(\xi))\sigma_3},
k\hskip-1mm\in \hskip-1mm\pm(\i d_0,\i d_1),\
=\hskip-2mm\left(\hskip-2mm\begin{array}{ccc}0&\mp i\\\mp i&0\end{array}\hskip-1mm\right)\hskip-1mm,
k\in\(\pm \i\c,\pm \i d_0\)\cup(\pm\i d_1,0).
$

\noindent 
Exceptionally in the case $B_1$ we have also, in view of (\ref{fact_lens_g1}), jumps on the lenses $L_{2\mathfrak{ul}}$,
$L_{2\mathfrak{ur}}$, $L_{2\mathfrak{dl}}$, $L_{2\mathfrak{dr}}$ surrounding the intervals
$(\i d_0,\i d_1)$ and $(-\i d_1,-\i d_0)$ (we denote the corresponding domains as $\Omega_{2\mathfrak{ur}}$, 
$\Omega_{2\mathfrak{ul}}$, $\Omega_{2\mathfrak{dr}}$, $\Omega_{2\mathfrak{dl}}$): 
\[G^{(6)}=\begin{pmatrix}1&0\\-\dsfrac{r\e^{2\i tg}}{\Lambda^2\delta^2}&1\end{pmatrix},\ k\in\Omega_{2\mathfrak{ur}}
\cup\Omega_{2\mathfrak{ul}},
\quad
=\begin{pmatrix}1&\ol{r}\Lambda^2\delta^2\e^{-2\i tg}\\0&1\end{pmatrix},\ k\in\Omega_{2\mathfrak{dr}}\cup
\Omega_{2\mathfrak{dl}}.
\]
Then $J^{(6)}=\begin{cases}G^{(6)}|_{L_{2\mathfrak{ur}}},& k\in L_{2\mathfrak{ur}},
\\\(G^{(6)}\)^{-1}|_{L_{2\mathfrak{ul}}},& k\in L_{2\mathfrak{ul}},
\\G^{(6)}|_{L_{2\mathfrak{dr}}},& k\in L_{2\mathfrak{dr}},
\\\(G^{(6)}\)^{-1}|_{L_{2\mathfrak{dl}}},& k\in L_{2\mathfrak{dl}}.\end{cases}$

\noindent\textbf{Step 7.} (Hyper-) elliptic model problem. \\
The jump matrix $J^{(6)}$ is close to the identity matrix everywhere except for the interval $(\i\c,-\i\c).$

We preceded the estimation of the contribution of the contours $L_j, j=1,2,5,6,7,8,$ which will be done in the next step, 
 by solving of the 
model problem with the jump matrix supported only on the interval $(\i\c,-\i\c).$

\begin{com}
The contribution to the asymptotics of the points $\pm\i d_j$, $\pm k_j$, $j=0,1,$ can be estimated by constructing 
the appropriate parametrices in the vicinities of these points. Parametrices in the vicinities of the points $\pm k_j$, 
$j=0,1,$
can be constructed in terms of the parabolic cylinder functions (cf. \cite{DZ93}), while the parametrices in the 
vicinities of the points $\pm\i d_j$, $j=0,1,$ can be constructed in terms of the Airy functions (cf. \cite{Bleher11}, 
section 4.2.4.2). However, in the case of step-like initial data, the input of the parametrices is not of the leading 
order,
 and the leading order asymptotics is already quite cumbersome. Hence, we confine ourselves to a rough estimation 
of the input o
f the 
 points $\pm\i d_j$, $\pm k_j$, $j=0,1,$ without calculation of the explicit formulas.
\end{com}


Hence, we introduce a model problem 
$$V^{(mod)}_-=V^{(mod)}_+J^{(mod)},\quad V^{(mod)}\rightarrow I\quad\textrm{as}\quad k\rightarrow\infty, \ \ \ \textrm{ where}$$ 
\begin{equation}\label{model problem}
J^{(mod)}(\xi,t,k)=\begin{cases}
e^{(itB(\xi)+i\Delta(\xi))\sigma_3},\quad k\in(\i d_0,\i d_1)\cup(-\i d_1,-\i d_0)\\
\left(\begin{array}{ccc}0&\mp i\\\mp i&0\end{array}\right),\quad
k\in(\pm \i\c,\pm \i d_0)\cup(\pm\i d_1,0).
\end{cases}
\end{equation}
\color{black}\textbf{Remark.} In the cases of the phase functions $g_{\l}, g_1,$ after the $\delta$-transformation we get a 
singularity at the origin. This singularity is preserved up to the model problem. Nevertheless, going back in the 
train of transformations of the RH problem, which we did in Steps 1-6 of the present 
Section \ref{sect_RH problem transformations}, and noticing that the singularity at 0 of the model problem and the one of 
$V^{(6)}$ are 
of the same order, since there are any lenses opening at the origin, 
we conclude that we are allowed to deal with this singular RH problem because coming back to $V$ we get an ordinary solution.
\color{black}

\textbf{a. Elliptic situation.} \color{black} To solve the model problem we distinguish the cases $d_1=0, d_0>0$ 
(elliptic case), 
$d_0>d_1>0$ (hyperelliptic case of genus 2), and the trivial model 
problem which leads to the constant asymptotics $(d_1=d_0=0)$.

The solution of the vector elliptic model problem is very similar to one introduced in 
\cite{{Egorova_Gladka_Kotlyarov_Teschl}}. Consider the Riemann surface of the function 

$\hskip2cm\mathrm{w}^2(k)=(k^2+\c^2)(k^2+d_0^2),$
\\
 with cuts along
$(i\c,\i d_0)$ and $(-\i d_0,-i\c)$, which is fixed by the condition $\mathrm{w}(0)>0$ on the first sheet. The $a$-cycle and
 $b$-cycle are introduced as in Figure \ref{Graphics_Riemann_surfaces}a.


\begin{figure}
\begin{minipage}[ht!]{.4\linewidth}
\vskip-3.1cm\hskip-.0010mm
\includegraphics[scale=0.7]{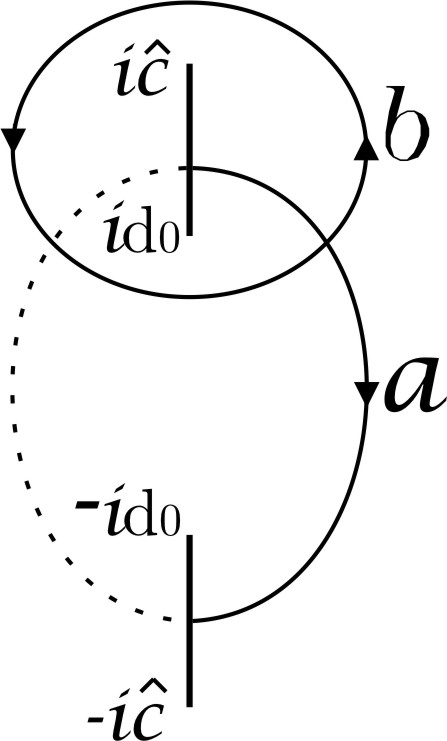}
\centerline{\hskip-3cm(a)}
\end{minipage}
\begin{minipage}[ht!]{.4\linewidth}
\vskip-3.1cm\hskip10mm
\includegraphics[scale=0.7]{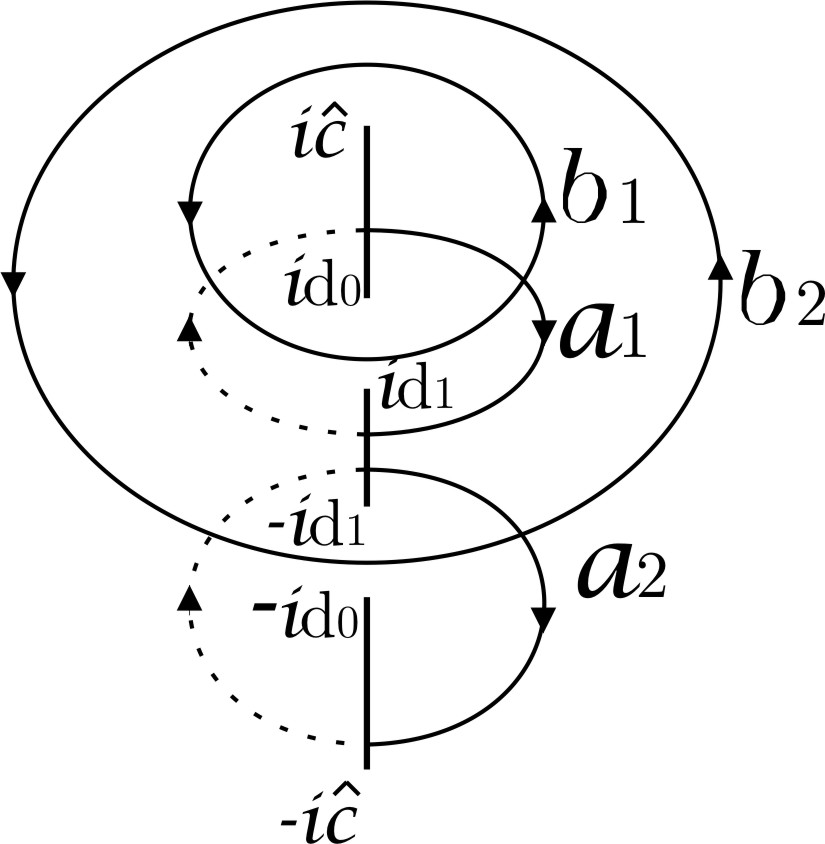}
\centerline{(b)}
\end{minipage}
\caption{Riemann surfaces of genus 1 (left) and genus 2 (right).}
\label{Graphics_Riemann_surfaces}
\end{figure}

 The basis of the holomorphic
differential forms, the Abel mapping and the B-period $\tau$ are given as follows:

$\omega=2\pi
i\displaystyle\(\displaystyle\frac{dk}{\mathrm{w}(k)}\)\(\displaystyle
\int\limits_{a}\displaystyle\frac{dk}{\mathrm{w}(k)}\)^{-1},
\quad A(P)=\int\limits_{i\c}^{P}\omega, \quad \tau=\tau(\xi):=\displaystyle
\int\limits_{b}\omega<0.$
\\
In terms of the theta function

$\Theta(z)=\Theta(z,\tau(\xi))=\sum \limits_{m=-\infty}^{\infty}
\exp\left\{\displaystyle\frac{1}{2}\tau(\xi)m^2+zm\right\},
$
\\
the solution $V^{(mod)}(\xi,t;k)=\(V^{(mod)}_{[1]}(\xi,t;k),V^{(mod)}_{[2]}(\xi,t;k)\)$ 
of the model problem is given as follows:
\begin{eqnarray}
\nonumber
V^{(mod)}_{[1]}=\sqrt[4]{\frac{k^2+d_0^2}{k^2+\c^2}}\frac{\Theta\(A(k)-\pi\i-\frac{\i tB+\i \Delta}{2}\)\Theta\(A(k)-\frac{\i tB+\i \Delta}{2}\)\Theta^2\(\frac{\pi\i}{2}\)}{\Theta\(A(k)-\pi\i\)\Theta\(A(k)\)\Theta\(\frac{\pi\i}{2}-\frac{\i tB+\i\Delta}{2}\)\Theta\(\frac{\pi\i}{2}+\frac{\i tB+\i\Delta}{2}\)},
\\
\label{Vmod_ell_1}
\\
\nonumber
V^{(mod)}_{[2]}=\sqrt[4]{\frac{k^2+d_0^2}{k^2+\c^2}}\frac{\Theta\(-A(k)-\pi\i-\frac{\i tB+\i \Delta}{2}\)\Theta\(-A(k)-\frac{\i tB+\i \Delta}{2}\)\Theta^2\(\frac{\pi\i}{2}\)}{\Theta\(A(k)-\pi\i\)\Theta\(A(k)\)\Theta\(\frac{\pi\i}{2}-\frac{\i tB+\i\Delta}{2}\)\Theta\(\frac{\pi\i}{2}+\frac{\i tB+\i\Delta}{2}\)}.
\\\label{Vmod_ell_2}
\end{eqnarray}

\begin{com}
 Further we will need also the matrix solution $M^{(mod)}(\xi,t;k)$ of the model problem, which is fixed by asymptotics
$M^{(mod)}(\xi,t;k)\to I$ as $k\to\infty$, and hence posseses the property $\det M^{(mod)}\equiv 1.$ It can be constructed 
in the same way as in \cite{Bleher11}, section 4.2.4.2.
Let us notice that the vector and matrix solutions are connected by the formula $V^{(mod)}=(1,1)M^{(mod)}.$ 
The model solution
posseses the following behavior in the vicinities of the points $\pm\i d_0:$ 
$$M_{ij}^{(mod)}=O\((k-\i d_0)^{-1/4}\),\quad i,j,=1,2,$$
and this estimate is uniform with respect to $\xi,t$. The exact form 
of the matrix solution is not important for us, hence we don't write it here.
\end{com}

\vskip.5cm
\textbf{b. Hyperelliptic model problem.}
To solve the model problem (\ref{model problem}) with $d_1>0$ we introduce the
Riemann surface $X$ of the function
\vskip2mm

$\hskip2cm\mathrm{w}^2(k)=(k^2+\c^2)(k^2+d_0^2)(k^2+d_1^2)$
\\ as in \cite{KM12}. The
upper and the lower  sheets of the surface are two complex planes which
are merged along the cuts $[\ii \c, \ii d_0]$, $[\i d_1,-\i d_1]$
and $[-\ii d_0, -\ii \c] $. The square root is fixed by the condition that on the upper sheet of this surface
$\mathrm{w}(1)>0$. The basis $\{\BS a_1, \BS b_1,\BS a_2, \BS
b_2\}$ of cycles of this Riemann surface is as follows (see Figure \ref{Graphics_Riemann_surfaces}b).  The $\BS
a_1$ -cycle starts on the upper sheet from the right side of the
cut $[\ii \c, \ii d_0]$, goes to the right side of the cut $[\ii
d_1, -\ii d_1]$, proceeds to the lower sheet, and then returns to
the starting point. The $\BS b_1$-cycle is a closed counterclockwise oriented simple loop around the  cut $[\ii \c, \ii
d_0]$. The $\BS a_2$ -cycle starts on the upper sheet from the right
side of the cut $[\ii d_1, -\ii d_1]$, goes to
the right side of the cut $[-\ii d_0, -\ii \c]$, proceeds to the
lower sheet, and then returns to the starting point. The $\BS
b_2$-cycle is a closed counterclockwise oriented simple loop
around the  segment $[\ii \c, -\ii d_1]$. 

\noindent
Introduce the basis
$
\d\omega=\begin{pmatrix}\d\omega_1\\
\d\omega_2
\end{pmatrix}
$
of the normalized holomorphic differentials:
\vskip2mm
\noindent
$d\omega_1=\displaystyle\frac{\(c_1 k+c_2\)\d k}{\mathrm{w}(k)},\ d\omega_2=\displaystyle\frac{\(c_1 k-c_2\)\d k}{\mathrm{w}(k)}, \textrm{ where }
$
\\
\vskip2mm\noindent  \vskip-1mm 
$c_1=\dsfrac{\pi\i}{\int\limits_{a_1}\dsfrac{k\d k}{\w(k,\xi)}}\in(-\infty,0),
\qquad c_2=\dsfrac{\pi\i}{\int\limits_{a_1}\dsfrac{\d k}{\w(k,\xi)}}\in(0,-\i\infty).$
\\
In terms of the $B$-period $B=B(\xi)$ 
$$\(\(B_{jl}=\ds\int_{b_j}\d\omega_l\)\)=\begin{pmatrix}B_1&B_2\\B_2&B_1\end{pmatrix},
\ B_j:= \int\limits_{\BS b_1} d\omega_j\ , j=1,2,\ ,\textrm{ }B_1<B_2<0\ ,
$$
the corresponding theta function is given as follows
\vskip2mm
$\Theta(z)=\Theta(z\vert B(\xi))=\sum \limits_{m\in\mathbb{Z}^2}
\exp\left\{\displaystyle\frac{1}{2}\(B(\xi)m,m\)+(z,m)\right\}\
,\quad z\in\mathbb{C}^2.
$
\\
\noindent Since $B$ is a symmetric matrix, we have 
$\Theta\begin{pmatrix}z_1\\z_2\end{pmatrix}=\Theta\begin{pmatrix}z_2\\z_1\end{pmatrix}.$
\vskip-14mm
\begin{equation}\label{Theta: z1_z2 symmetry}
\hskip7cm
\end{equation} 
Introduce the Abel map 
$\quad
A:X\rightarrow\mathbb{C}^2/\left(2\pi
i\mathbb{Z}^2+B(\xi)\mathbb{Z}^2 \right)\ ,\
A(P)=\int\limits_{\ii \c}^{P}\d\omega 
$
\\
and the
functions $\widehat\varphi(k,\xi),\widehat\psi(k,\xi):\{\textrm{the first sheet of
$X$}\}\rightarrow \mathbb{C}$
\\
$
\begin{array}{l}\widehat\varphi_j(k,\xi)=
\displaystyle \frac{\Theta(A(k)-A(D_j)-K-
\frac12\(itB_g(\xi)+\ii\Delta(\xi)\)(1,1)^{T})}
{\Theta(A(k)-A(D_j)-K)},\\
\widehat\psi_j(k,\xi)=\displaystyle\frac{\Theta(-A(k)-A(\widehat D_j)-K-\frac12\(itB_g(\xi)+\ii\Delta(\xi)\)(1,1)^T)}
{\Theta(-A(k)-A(\widehat D_j)-K)},\end{array}\quad j= {1,2},
$
with the vector $K=\(K_1, K_2\)^T$ of Riemannian constants and nonspecial divisors 
$$K\equiv \begin{pmatrix}\frac{B_1}{2}+\frac{B_2}{2}\\\pi\i+\frac{B_1}{2}+\frac{B_2}{2}\end{pmatrix},
\quad
\widehat D_1=\left\{\i d_0+0,0^++0\right\}, \ \widehat D_2=\left\{-\i d_0+0,0^++0\right\},$$ consisting 
of the branch points $\pm\i d_0$ and the point on the upper sheet $0^+.$ Since we consider a cut complex plane, 
it does matter at which side of the bank 
the point is taken, hence we write $\pm0$ to specify this.


\noindent Introduce also the function 

$\hskip2cm\gamma(k)=\sqrt[4]{\dsfrac{k^2+d_0^2}{k^2+\c^2}\cdot\dsfrac{k^2}{k^2+\d_1^2}}\ ,$
\\
 which is
analytic outside of the union of segments $[\ii \c,\ii d_0]\cup[\ii
d_1,-\ii d_1]\cup[-\i d_0,-\i \c].$ 

\noindent It is easy to see that 
\begin{subequations}\label{divisors_particular}\begin{eqnarray}
&A(\i d_0+0)=\begin{pmatrix}-B_1/2\\-B_2/2\end{pmatrix},& A(-\i d_0+0)=\begin{pmatrix}-B_2/2+\pi\i,\\-B_1/2+\pi\i\end{pmatrix}\\\nonumber\\
&A(\infty)=\begin{pmatrix}\i\varphi\\i(\pi-\varphi)\end{pmatrix}&\textrm{ 
with }\varphi=\frac{\pi}{2}+\displaystyle\int\limits_{\i\c}^{+\i\infty}\frac{c_2\d k}{\w(k)}>\frac{\pi}{2} \textrm{, }\\ 
&A(0^+\hskip-1mm+\hskip-1mm0)\hskip-1mm=\hskip-1mm\begin{pmatrix}x+\pi\i\\x\end{pmatrix}
&\textrm{ with }x\hskip-1mm=\hskip-1mm\frac{-B_1-B_2}{4}\hskip-1mm+\hskip-4mm\displaystyle\int\limits_{\i d_1+0}^{0^++0}\hskip-1mm\dsfrac{c_1k\d k}{\w(k)}\hskip-1mm\in\hskip-1mm\mathbb{R}.
\end{eqnarray}
\end{subequations}

\noindent Concerning asymptotics as $k\to\infty$, we notice that properties (\ref{Theta: z1_z2 symmetry}),  (\ref{divisors_particular}) imply 
$\widehat\varphi_1(\infty)\widehat\varphi_2(\infty)=\widehat\psi_1(\infty)\widehat\psi_2(\infty),$
\\
and hence,
\begin{equation}\widehat V(\xi,t;k):=\(\gamma(k)\dsfrac{\widehat\varphi_1(k)\widehat\varphi_2(k)}
{\widehat\varphi_1(\infty)\widehat\varphi_2(\infty)},\ \ 
\gamma(k)\dsfrac{\widehat\psi_1(k)\widehat\psi_2(k)}{\widehat\psi_1(\infty)\widehat\psi_2(\infty)}\)
 \label{Vmod_hyper}
\end{equation}
solves the hyperelliptic model problem.


\begin{com}
 Further we will need also the matrix solution $M^{(mod)}(\xi,t;k)$ of the model problem, which is fixed by asymptotics
$M^{(mod)}(\xi,t;k)\to I$ as $k\to\infty$, and hence posseses the property $\det M^{(mod)}\equiv 1.$
It can be constructed in the same way as in \cite{Bleher11}, section 4.2.4.2.
Let us notice that the vector and matrix solutions are connected by the formula $V^{(mod)}=(1,1)M^{(mod)}.$ The model solution
posseses the following behavior in the vicinities of the points $\pm\i d_l$, $l=1,2:$ 
$$M_{ij}^{(mod)}=O\((k-\i d_l)^{-1/4}\),\quad l,i,j,=1,2,$$
and this estimate is uniform with respect to $\xi,t$. The exact form 
of the matrix solution is not important for us, hence we don't write it here.
\end{com}

\textbf{c. Constant model problem}. Finally, in the case when $d_0=d_1=0$ (see cases $A$, $B_2$, $C_2$ of 
Figure \ref{Figure_cont_middle_function}), which is a trivial case,
the solution of the model problem 
(\ref{model problem}) is given by 

$\hskip2cm V^{(mod)}(\xi,t;k)=\(\sqrt[4]{\frac{k^2}{k^2+\c^2}}, \sqrt[4]{\frac{k^2}{k^2+\c^2}}\).$


\textbf{Step 8.} Estimate of the contribution of the jumps on contours $L_j,$ $j=1,...,8$ and final asymptotics.

We start with the following lemma:
\begin{lem}\label{Prop_apr_estimates} Let $V^{(6)}(\xi,t;k)$ be the (vector-valued) solution of the 
RH problem from Step 6 of Section \ref{sect_RH problem transformations} (with jump contour as in Figure 
\ref{Figure_cont_middle_function}); $V^{(mod)}(\xi,t;k)$ and $M^{(mod)}(\xi,t;k)$ be the vector and 
matrix-valued solutions of the corresponding model problem from Step 7 
(with jump matrix $[\i\c,-\i\c]$). 
Define the correction function $V_{err}=V^{(6)}\(M^{(mod)}\)^{-1}.$ Denote 
$$V_{err}(\xi,t;k) =: V_{err}^{(0)}(\xi,t)+V_{err}^{(1)}(\xi,t)(k-\i/2)+V_{err}^{(2)}(\xi,t;k)(k-\i/2)^2.$$
Then $$|V_{err,j}^{(0)}(\xi,t)-1|=\mathrm{O}\(t^{-1/2}\),\ 
|V_{err,j}^{(l)}(\xi,t;k)|=\mathrm{O}\(t^{-1/2}\), \ j=1,2,\ l=1,2,$$ 
uniformly for $k\in\mathbb{C}$, $\xi_m+\varepsilon<\xi<\xi_{m+1}-\varepsilon$ as $t\to\infty,\quad m=1,2,3,$ 
where $\varepsilon>0$ is a sufficiently small positive number.
\end{lem}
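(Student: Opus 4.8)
The plan is to recognize that $V_{err}=V^{(6)}\(M^{(mod)}\)^{-1}$ solves a \emph{small-norm} Riemann--Hilbert problem and then to read the three quantities off the Cauchy-integral representation of $V_{err}$ near $k=\i/2$. First I would identify the jump contour and jump of $V_{err}$. Since $M^{(mod)}$ is analytic and invertible off $(\i\c,-\i\c)$ with $\det M^{(mod)}\equiv1$, and since $J^{(6)}=J^{(mod)}$ on $(\i\c,-\i\c)$ (compare Step~6 with the definition (\ref{model problem}) of $J^{(mod)}$), the relations $V^{(6)}_-=V^{(6)}_+J^{(6)}$ and $M^{(mod)}_-=M^{(mod)}_+J^{(mod)}$ give $V_{err,-}=V_{err,+}$ across $(\i\c,-\i\c)$. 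Hence $V_{err}$ is analytic except on the lens contours $L_1,\dots,L_8$ (and $L_{2\mathfrak{ur}},\dots,L_{2\mathfrak{dl}}$ in case $B_1$) and, whenever the relevant points are present (Figure~\ref{Figure_cont_middle_function}), on fixed small circles $C_\varepsilon$ about $\pm\i d_0,\pm\i d_1,\pm k_0,\pm k_1$; inside the disks bounded by these circles one inserts the standard local parametrices $P$ --- the Airy parametrix at the branch points $\pm\i d_j$ (cf.\ \cite{Bleher11}, section~4.2.4.2) and the parabolic-cylinder parametrix at the stationary points $\pm k_j$ (cf.\ \cite{DZ93}), which reproduce the local singularity structure of $M^{(mod)}$, so that $V_{err}:=V^{(6)}P^{-1}$ is bounded on each boundary circle; the mild $\delta$-related singularity at the origin is harmless by the Remark in Step~7. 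This gives a Riemann--Hilbert problem for $V_{err}$ on $\Gamma_{err}:=\bigcup_jL_j\cup\bigcup C_\varepsilon$, normalized to $(1,1)$ at infinity, with jump $J_{err}=M^{(mod)}J^{(6)}\(M^{(mod)}\)^{-1}$ on the $L_j$ and $J_{err}=M^{(mod)}P^{\pm1}\(M^{(mod)}\)^{-1}$ on the $C_\varepsilon$.

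Next I would estimate $J_{err}-I$ in $L^1\cap L^2\cap L^\infty(\Gamma_{err})$. On the $L_j$ the signature tables for $\Im g(k,\xi)$ (Figures~\ref{Signature table left 1234}--\ref{Signature table left(3) 5678} and Figure~\ref{Figure_phase_middle_function_plot}) show that $\Im g$ is bounded below away from $0$; together with the analyticity and boundedness of $r$, $\widehat f$, $a$, $b$ in a neighbourhood of the contour (Lemma~\ref{Lemma_properties_a,b,r} under Assumption~3), the boundedness and non-vanishing of $\Lambda$, $\delta$, $F$, and the boundedness of $M^{(mod)}$, $\(M^{(mod)}\)^{-1}$ away from the disks, this yields $\|J_{err}-I\|\le Ce^{-\delta t}$ on each $L_j$ for some $\delta,C>0$ that are uniform as long as $\xi$ stays in a compact subinterval $[\xi_m+\varepsilon,\xi_{m+1}-\varepsilon]$, $m=1,2,3$ --- this is where the separation from the transition values enters, since at $\xi_m$ the parameters $d_0$, $d_1$ collapse and the model Riemann surface degenerates. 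On the circles $C_\varepsilon$, where there is no exponential gain, the large-argument asymptotics of the Airy and parabolic-cylinder parametrices on circles of fixed radius give $\|P^{\pm1}-M^{(mod)}\|_{L^\infty(C_\varepsilon)}=\mathrm{O}\(t^{-1/2}\)$, the parabolic-cylinder points $\pm k_j$ producing the dominant $t^{-1/2}$ and the Airy points $\pm\i d_j$ contributing at a faster rate, again uniformly in $\xi$ on the compact subinterval. Hence $\|J_{err}-I\|_{L^1\cap L^2\cap L^\infty(\Gamma_{err})}=\mathrm{O}\(t^{-1/2}\)$.

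By the standard (vector) small-norm theory for Riemann--Hilbert problems (Beals--Coifman; \cite{DZ93}), for $t$ large $V_{err}$ exists, is unique, and
\[V_{err}(\xi,t;k)=(1,1)+\frac{1}{2\pi\i}\int_{\Gamma_{err}}\frac{\mu(s)\(J_{err}(s)-I\)}{s-k}\,\d s,\qquad \|\mu-(1,1)\|_{L^2(\Gamma_{err})}=\mathrm{O}\(t^{-1/2}\).\]
Because $\c<\frac12$ and the circles $C_\varepsilon$ are fixed and small, $k=\i/2$ lies at a fixed positive distance from $\Gamma_{err}$, so $s\mapsto(s-k)^{-1}$ and its $k$-derivatives are bounded on $\Gamma_{err}$ uniformly for $k$ in a fixed neighbourhood of $\i/2$; differentiating the representation under the integral sign and expanding at $k=\i/2$ yields $|V_{err,j}^{(0)}-1|=\mathrm{O}\(t^{-1/2}\)$ and $|V_{err,j}^{(1)}|=\mathrm{O}\(t^{-1/2}\)$, $j=1,2$, uniformly in $\xi$. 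For $V_{err,j}^{(2)}(\xi,t;k)=(k-\i/2)^{-2}\(V_{err,j}(\xi,t;k)-V_{err,j}^{(0)}-V_{err,j}^{(1)}(k-\i/2)\)$ one uses in addition that $V_{err}-(1,1)=\mathrm{O}\(t^{-1/2}\)$ uniformly on $\mathbb{C}$ (boundary values on $\Gamma_{err}$ included), which controls it for $|k-\i/2|$ bounded below, and analyticity together with Cauchy estimates near $k=\i/2$. The main obstacle is the control on the circles $C_\varepsilon$ together with the uniformity: one must verify that the local parametrices are matched so that the mismatch with $M^{(mod)}$ is genuinely $\mathrm{O}\(t^{-1/2}\)$ and no larger, that the $(k-\i d_j)^{\pm1/4}$ behaviour at the band edges neither obstructs solvability nor spoils the bounds, and that all constants remain bounded as $\xi\to\xi_m$ (but never reaches $\xi_m$), where the genus drops.
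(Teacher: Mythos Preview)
Your approach is correct, but it is genuinely different from the paper's. The paper \emph{does not construct local parametrices} at all. Instead, it argues directly on the Cauchy-integral representation
\[
V_{err}(k)=(1,1)+\frac{1}{2\pi\i}\int_{\Sigma_{err}}\frac{V^{(6)}_+(s)\(M^{(mod)}(s)\)^{-1}\(I-J_{err}(s)\)}{s-k}\,\d s,
\]
where $\Sigma_{err}=\Sigma^{(6)}\setminus[\i\c,-\i\c]$ is taken to pass \emph{through} the critical points rather than around them. The paper then asserts (without a detailed justification) that $V^{(6)}$ is uniformly bounded because the jump matrices are, and estimates the integral piece by piece using only the local expansions of $g$: near $\i d_0$ one has $g(k)=g_\pm+g^{(3/2)}\(\frac{k-\i d_0}{-\i}\)^{3/2}(1+\mathrm{O}(k-\i d_0))$ and $M^{(mod)}_{ij}=\mathrm{O}\((k-\i d_0)^{-1/4}\)$, which after a change of variable gives $\int_0^\infty u^{-1/4}\e^{-ctu^{3/2}}\d u=\mathrm{O}(t^{-1/2})$; near $k_0$ one has $g(k)=g^{(0)}+g^{(2)}(k-k_0)^2(1+\mathrm{O}(k-k_0))$, giving $\int_0^\infty\e^{-ctv^2}\d v=\mathrm{O}(t^{-1/2})$. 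The endpoints $\pm\i\c$ are disposed of by nudging the lenses slightly outward.

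Your route via Airy and parabolic-cylinder parametrices is the textbook Deift--Zhou machinery and is arguably more rigorous: it produces a genuine small-norm problem for $V_{err}$ in $L^1\cap L^2\cap L^\infty$ without needing the a~priori bound on $V^{(6)}$ that the paper simply asserts. The price is the parametrix construction itself, which the paper explicitly declines to carry out (see the Comment just before the lemma) on the grounds that only the leading order is wanted. Conversely, the paper's direct estimate is lighter on machinery and shows transparently where the $t^{-1/2}$ comes from in each neighbourhood, but leans on the unproved uniform boundedness of $V^{(6)}$; this is also why it is presented only as a sketch.
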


\begin{SketchProof} 

For definitness let us consider the case $C_1$, when both types of the points $\pm\i d_0$ and $\pm k_0$ are presented. The 
other cases can be treated similarly.

The jump contour $\Sigma_{err}:=\Sigma^{(6)}\setminus[\i\c,-\i\c]$ for the correction function $V_{err}$ contains
the following points of nonuniformity of the jump matrix (see Figures \ref{Figure_cont_middle_function},
\ref{Figure_phase_middle_function_plot}):\\
$\bullet$ the real points {$\pm k_0\in\R$;}\\
$\bullet$ the imaginary points {$\pm \i\mu_0\in[\i\c,-\i\c]$;}\\
$\bullet$ the end points {$\pm \i\c.$}

The jump matrix $J^{(6)}$ is not close to the identity matrix in $L_{\infty}$ sense. However,
the jump matrices in the vicinities of the points $\pm k_0$, $\pm\i d_0$ are highly oscillatory, but uniformly bounded.
This provides that the solution of the RH problem is also uniformly bounded, 
$$|V_{i}^{(6)}|\le C,\quad |M_{ij}^{(6)}|\le C,\quad i,j=1,2,$$
where $C$ is a constant that does not depend on $\xi,t,k.$

The jump matrix $J_{err}$ for the correction function $V_{err}$ is equal to 
$$J_{err}=M^{(mod)}_{+}J^{(6)}\(J^{(mod)}\)^{-1}\(M^{(mod)}_{+}\)^{-1}=
\begin{cases}
I,\ k\in(\i\c,-\i\c),
\\
M^{(mod)}J^{(6)}\(M^{(mod)}\)^{-1},\\\qquad k\in\Sigma^{(6)}\backslash[\i\c,-\i\c],
\end{cases}$$
where, as usual, by $+/-$ we denote the limit from the positive/ negative side of the contour.
By Sokhotski-Plemelj formula, we have the standard representation of $V_{err}:$
\[V_{err}(k)=(1,1)+\frac{1}{2\pi\i}\int\limits_{\Sigma_{err}}\frac{V_{err,+}(s)(I-J_{err})\d s}{s-k},\]
and hence (we denote here $E:=(1,1)$),
\[V_{err}(k)=E+\int\limits_{\Sigma_{err}}\frac{F(s)\d s}{2\pi\i\(s-\frac{\i}{2}\)} + 
\int\limits_{\Sigma_{err}}\frac{F(s)\d s \(k-\frac{\i}{2}\)}{2\pi\i(s-\frac{\i}{2})^2}\ +
\int\limits_{\Sigma_{err}}\frac{F(s)\d s \(k-\frac{\i}{2}\)^2}{2\pi\i(s-\frac{\i}{2})^2(s-k)}\ =\]
\[=V_{err}^{(0)}+ V_{err}^{(1)}\,(k-\i/2)+V_{err}^{(2)}\,(k-\i/2).\]
where $F(\xi,t;s):=V^{(6)}_{+}(\xi,t;s)M_{mod}^{-1}(\xi,t;s)(I-J_{err}(\xi,t;s)).$
Let estimate, for example, $V^{(1)}_{err}.$ Quantities $V^{(0)}_{err}-(1,1)$ and $V^{(2)}_{err}$ can 
be estimated similarly.

Let us denote the parts of the contour $\Sigma_{err}$ in the vicinities of the points $\pm\i d_0$, $\pm k_0,$ as follows:
\begin{itemize}
 \item $[\i d_0]:\ $  $L_{1,\pm},$ depending on whether 
$\Re k>0$ or $\Re k<0;$
 \item $[-\i d_0]:\ $  $L_{2,\pm},$ depending on whether 
$\Re k>0$ or $\Re k<0;$
\item $[\i k_0]:\ $ $L_{j,\r}, j=1,2,3,4; $
\item $[-\i k_0]:\ $ $L_{j,\l}, j=1,2,3,4.$
\end{itemize}
The rest of the contour $\Sigma_{err}$ we denote by $\Sigma_{err}^C.$ Let us notice, that the points $\pm\i\c$ can be 
excluded from analysis. Indeed, by moving up slightly the lenses $L_{u\l},$ $ L_{u,\r}$ towards a point $\i(\c+\delta), $ 
$\delta>0,$ we transform 
the oscillatory behavior of jump matrices into exponentially decaying. 
The same is for the point $-\i\c.$

\begin{enumerate}
 \item 

To estimate $F(s)$ on $L_{1,\pm}$, let us expand the 
function $g(k,\xi)$ in a vicinity of the point
$\i d_0$ :
$$g(k)=g_{\pm}+g^{(3/2)}\(\frac{k-\i d_0}{-\i}\)^{3/2}(1+O(k-\i d_0)), k\to\i d_0\pm0, 
$$
 with $$g_{\pm}\in\mathbb{R},\ g^{(3/2)}>0,$$
where the sign $+$ or $-$ is taken for $L_{1,\pm},$ respectively. 
Let us make a change of variable on $L_{1,+}:$
$$g(k)=g_{\pm}+g^{(3/2)} \(\e^{\i\alpha}u\)^{3/2},\quad \frac{k-\i d_0}{-\i}=\e^{\i\alpha} u(1+O(u)),
\quad u\to 0, u>0,$$
where $0<\alpha<\pi/3$ is the angle which $L_{1,+}$ makes with the ray $(\i d_0,0).$
Let us notice, that we can choose $L_{1,\r}$ in such a way that $u>0$ is real.
Hence, 
we can estimate $F(s)$ on $L_{1,\pm}$ as
$$|F_j(s)|\leq\frac{C}{\sqrt[4]{|k-\i d_0|}}\left|\e^{-2t g^{(3/2)} \sin\frac{3\alpha}{2}\, u^{3/2}}\right|,\quad j=1,2,$$
where $C>0$ is a generic constant that does not depend on $\xi, t, k.$ 
Hence, $$\int\limits_{L_{1,+}}\frac{F(s)\d s}{\(s-\frac{\i}{2}\)^2}\le \frac{C}{\mathrm{dist}^2 (\Sigma_{err},\i/2)}
\int\limits_{0}^{+\infty}\frac{1}{\sqrt[4]u}\e^{-2t g^{(3/2)} \sin\frac{3\alpha}{2}\, u^{3/2}}\d u\le Ct^{-1/2}.$$

Integrals over the contours $L_{1,-}$, $L_{2,\pm}$ can be estimated similarly. 

\item To estimate $F(s)$ on $L_{3,\r}, $ let us expand the function $g(k,\xi)$ in a vicinity of the point $k_0:$
$$g(k)=g^{(0)}+g^{(2)}(k-k_0)^2(1+O(k-k_0)),$$
with $$g^{(0)}\in\R,\ g^{(2)}<0.$$

Let us make a change of variable on $L_{1,+}:$
$$g(k)=g^{(0)}+g^{(2)} \(\e^{\i\beta}v\)^{2},\quad k-\i k_0 = \e^{\i\beta} v(1+O(v)),
\quad v\to 0, v>0,$$
where $0<\beta<\pi/2$ is the angle which $L_{3,\r}$ makes with the ray $(\i k_0,+\infty).$
Let us notice, that we can choose $L_{3,\r}$ in such a way that $v>0$ is real.
Hence, 
we can estimate $F(s)$ on $L_{3,\r}$ as
$$|F_j(s)|\leq \left|\e^{2t g^{(2)} \sin{2\beta}\,\, v^{2}}\right|,\quad j=1,2,$$
where $C>0$ is a generic constant that does not depend on $\xi, t, k.$ 
Hence, $$\int\limits_{L_{3,\r}}\frac{F(s)\d s}{\(s-\frac{\i}{2}\)^2}\le \frac{C}{\mathrm{dist}^2 (\Sigma_{err},\i/2)}
\int\limits_{0}^{+\infty}\e^{2t g^{(2)} \sin{2\beta}\,\, v^{2}}\d v\le Ct^{-1/2}.$$

Integrals over the contours $L_{1,-}$, $L_{2,\pm}$ can be estimated similarly. 
\end{enumerate}

Since the integral over $\Sigma_{err}^C$ is exponentially small, we have that $$V_{err,j}^{(1)}=\int\limits_{\Sigma_{err}}
\frac{F_j(s)\d s}{2\pi\i\(s-\frac{\i}{2}\)^2}=\mathrm{O}\(t^{-1/2}\),\quad j=1,2.$$

\end{SketchProof}

\begin{com}The estimate of the integrals in the vicinities of the points $\pm k_0,\pm k_1$ is sharp,
while the estimate of integrals in the vicinities of the points $\pm\i\mu_0$, $\pm\i\mu_1$
 is a rough one: the actual behavior of $V_{err}-(1,1)$ is of the order $t^{-1}.$
\end{com}

\begin{lem}\label{lemma: u_u^mod}
 Let $u^{(mod)}(x,t),$ $m^{(mod)}(x,t)$ be defined as in $(\ref{M1*M2}\mathrm{a,b})$, where we substitute $V$ instead
of $V^{(mod)}.$
Then for sufficiently small $\varepsilon>0$, $$u(x,t) = u^{(mod)}(x-\tilde x(x,t),t)(1+O(t^{-1/2})),\ m(x,t) = m^{(mod)}(x,t)(1+O(t^{-1/2})),$$
uniformly for $(\zeta_l+\varepsilon) \omega t\leq x\leq (\zeta_{l+1}-\varepsilon) \omega t, \quad l=1,2,3, $
where

$\tilde x(x,t) = \log\left[F^2\hskip-1mm\(\hskip-1mm\frac{\i}{2}\)
\delta^{2}\hskip-1mm\(\frac{\i}{2},\xi\)
\Omega^2\hskip-1mm\(\frac{\i}{2}\)\e^{-2\i t\(g-g_{\r}\)(\i/2)}\right]$

\noindent
is a slowly changing function. Here $g,g_r, F,\delta, \Omega$ are defined in Section \ref{sect_RH problem transformations},
and the parameter $\xi \hskip-.5mm = \hskip-.5mm \frac{y}{\omega t}$ is the same as in parametric definition of $u^{(mod)}$, $m^{(mod)}.$ 

\end{lem}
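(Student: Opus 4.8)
The plan is to transport the estimate of Lemma~\ref{Prop_apr_estimates} back through the transformations of Section~\ref{sect_RH problem transformations} to the original vector $V_{\mathfrak{r}}$ and then to substitute the result into the reconstruction formulas (\ref{M1*M2}a), (\ref{M1*M2}b). The starting observation is that the spectral value $k=\i/2$ used in those formulas sits away from every contour and lens introduced in Steps~1--6: these all lie near $\R$, near $[\i\c,-\i\c]$, or on the circles $C_j,\overline C_j$, and since $\c<1/2$ and $\kappa_j<1/2$ we can keep all of them at a fixed positive distance from $\i/2$. Hence, in a fixed neighbourhood of $k=\i/2$, each transformation $V_{\mathfrak{r}}\mapsto V^{(1)}\mapsto\cdots\mapsto V^{(6)}$ amounts to multiplication by a diagonal $\sigma_3$-factor: Step~1 contributes $\e^{\i t(g-g_{\mathfrak{r}})(k,\xi)\sigma_3}$ (recall $g_{\mathfrak{r}}(y,t;k)=t\,g_{\mathfrak{r}}(k,\xi)$, and $g-g_{\mathfrak{r}}$ is holomorphic at $\i/2$ by (\ref{prop g i/2})), Steps~2--3 contribute $\Lambda^{-\sigma_3}(k)$, Step~4 contributes $\delta^{-\sigma_3}(k,\xi)$ and Step~6 contributes $F^{-\sigma_3}(k,\xi)$, while Step~5 and all the lens openings are trivial there. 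Writing $\Phi(k,\xi):=F(k,\xi)\,\delta(k,\xi)\,\Lambda(k)\,\e^{-\i t(g-g_{\mathfrak{r}})(k,\xi)}$, we get, for $k$ near $\i/2$,
\[
V_{\mathfrak{r},\,j}(x,t;k)=V^{(6)}_{[j]}(x,t;k)\,\Phi(k,\xi)^{(-1)^{j-1}},\qquad j=1,2.
\]

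Next I substitute this into the reconstruction formulas. In the \emph{product} $V_{\mathfrak{r},1}V_{\mathfrak{r},2}$ the factor $\Phi$ cancels with its inverse, so $V_{\mathfrak{r},1}V_{\mathfrak{r},2}=V^{(6)}_{[1]}V^{(6)}_{[2]}$ near $\i/2$. Writing $V^{(6)}=V_{err}M^{(mod)}$, with $M^{(mod)}$ and its $k$-derivative holomorphic and bounded, and $M^{(mod)}$ invertible, in a fixed neighbourhood of $\i/2$ (that point being away from the branch points $\pm\i d_j,\pm\i\c$), and inserting the expansion of $V_{err}$ supplied by Lemma~\ref{Prop_apr_estimates} (its value and its first two Taylor coefficients at $\i/2$ equal $(1,1)+\mathrm{O}(t^{-1/2})$, uniformly in the cones), one finds that the coefficients of $(k-\i/2)^0$ and $(k-\i/2)^1$ in $V_{\mathfrak{r},1}V_{\mathfrak{r},2}$ coincide with those in $V^{(mod)}_{[1]}V^{(mod)}_{[2]}$ up to a relative error $\mathrm{O}(t^{-1/2})$. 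By the very definition of $m^{(mod)},u^{(mod)}$ through (\ref{M1*M2}b) this yields $m=m^{(mod)}(1+\mathrm{O}(t^{-1/2}))$ and $u=u^{(mod)}+\mathrm{O}(t^{-1/2})$ as functions of $(\xi,t)$. In the \emph{ratio} $V_{\mathfrak{r},1}/V_{\mathfrak{r},2}$ evaluated at $k=\i/2$, on the contrary, $\Phi$ squares up: $\Phi(\i/2,\xi)^2=\bigl(F^2\delta^2\Lambda^2\bigr)(\i/2,\xi)\,\e^{-2\i t(g-g_{\mathfrak{r}})(\i/2,\xi)}=\e^{\tilde x}$; combined with $V^{(6)}_{[1]}(\i/2)/V^{(6)}_{[2]}(\i/2)=\bigl(V^{(mod)}_{[1]}/V^{(mod)}_{[2]}\bigr)(\i/2)\bigl(1+\mathrm{O}(t^{-1/2})\bigr)$ and (\ref{M1*M2}a), this says that the spatial coordinate which the true solution attaches to a given value of $\xi$ differs, up to $\mathrm{O}(t^{-1/2})$, from the one attached by the model by exactly the shift $\tilde x$. (The singularity at $k=0$ of $V^{(mod)}$, $\delta$ and $F$ occurring for $g_{\mathfrak{l}},g_1$ plays no role at $k=\i/2$, as noted after (\ref{model problem}).)

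It remains to convert these $(\xi,t)$-identities into the $(x,t)$-form claimed: for $u$ one composes with the monotone map $\xi\mapsto x(\xi,t)$ after applying the shift $\tilde x$ (so as to land on the model's own spatial variable), while for $m$ no shift is needed. I expect the only genuine obstacle to lie here, in controlling this change of variables in the presence of the fast oscillation. Indeed, since $u^{(mod)}(\xi,t)$ and $m^{(mod)}(\xi,t)$ carry the rapidly rotating factors $\e^{\i tB(\xi)}$, their $\xi$-derivatives are of size $\mathrm{O}(t)$, so an $\mathrm{O}(t^{-1/2})$ error in the $x\leftrightarrow\xi$ correspondence could a priori be amplified. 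This is avoided because $\partial x/\partial\xi$ is of order $t$ and non-degenerate (bounded below by a positive multiple of $t$) on each closed subinterval $\xi_m+\varepsilon\le\xi\le\xi_{m+1}-\varepsilon$, so the error it induces in $\xi$ is in fact $\mathrm{O}(t^{-3/2})$ and is brought back to $\mathrm{O}(t^{-1/2})$ after composition with $u^{(mod)},m^{(mod)}$; uniformity over the cones is inherited from Lemma~\ref{Prop_apr_estimates}, and the slowly varying character of $\tilde x$ in $x$ ($\partial_x\tilde x=\mathrm{O}(t^{-1})$) is precisely what lets the spatial operations commute with the shift to leading order.
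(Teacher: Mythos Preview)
Your proof follows essentially the same approach as the paper: you unwind the chain of diagonal transformations from Steps~1--6 at $k=\i/2$, combine this with the estimate from Lemma~\ref{Prop_apr_estimates} via $V^{(6)}=V_{err}M^{(mod)}$, and feed the result into the reconstruction formulas (\ref{M1*M2}a,b), exactly as the paper does. Your identification of $\Omega$ with the Blaschke product $\Lambda$ from Step~3 is correct (the paper's notation is inconsistent here), and your additional paragraph on controlling the $\xi\leftrightarrow x$ change of variables in the presence of the fast phase $\e^{\i tB(\xi)}$ is a point the paper leaves implicit.
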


\begin{Proof}
 Following the chain of transformations of the RH problem made in Steps 1--6 of the present 
Section \ref{sect_RH problem transformations}, we conclude that in a neighborhood of the point $k=\i/2$
\begin{eqnarray}\nonumber V&=&V^{(1)}\e^{-i t(g-g_{\r})\sigma_3}=V^{(2)}\e^{-i t(g-g_{\r})\sigma_3}=
\mbox{because}=V^{(3)}\Omega^{\sigma_3}(k)\e^{-i t(g-g_{\r})\sigma_3}=\\\nonumber
&=&V^{(4)}\delta^{\sigma_3}(k,\xi)\Omega^{\sigma_3}(k)\e^{-i t(g-g_{\r})\sigma_3}
=V^{(5)}(\xi,t;k)\delta^{\sigma}(k,\xi)\Omega^{\sigma_3}(k)\e^{-i t(g-g_{\r})\sigma_3}=
\\\nonumber&=&V^{(6)}F^{\sigma_3}(k,\xi)\delta^{\sigma_3}(k,\xi)\Omega^{\sigma_3}(k)\e^{-i t(g-g_{\r})\sigma_3}
,
\quad t\rightarrow\infty.\label{RH problem transformations}\end{eqnarray}

Taking into account Lemma \ref{Prop_apr_estimates} and the property
$V^{(mod)}=(1,1)M^{(mod)}$, we see, that $V^{(6)}=V_{err}\(M^{(mod)}\)^{-1}$, and since all the elements of 
the vector and matrix functions $V^{(mod)}$, $M^{(mod)}$ are separated from $0$, we can write
$$V^{(6)} = V^{(err)}M^{(mod)} = V^{(mod)}
\(I+\mathrm{O}\(t^{-\frac12}\)\).
$$

\noindent
Hence, from (\ref{M1*M2}a), (\ref{M1*M2}b), we obtain 

$\hskip-5mm\e^{x-y}\hskip-1mm=\hskip-1mm\left.\dsfrac{V^{(mod)}_{[1]}}{V^{(mod)}_{[2]}}\right|_{k=\frac{\i}{2}} 
\hskip-3mm F^2\hskip-1mm\(\hskip-1mm\frac{\i}{2}\)
\delta^{2}\hskip-1mm\(\frac{\i}{2},\xi\)
\Omega^2\hskip-1mm\(\frac{\i}{2}\)\e^{-2\i t\(g-g_{\r}\)(\i/2)}\(1+\mathrm{O}\(t^{-\frac12}\)\),
$

$\hskip-5mm\sqrt{\frac{m+\omega}{\omega}}\hskip-0.5mm\(\hskip-0.5mm 1\hskip-1mm+
\hskip-1mm\frac{2\i}{\omega}u(x,t)\hskip-0.5mm\(
\hskip-0.5mm k \hskip-0.mm - \hskip-1.mm \frac{\i}{2}\hskip-0.5mm\)
\hskip-0.5mm+\hskip-0.mm\mathrm{O}\hskip-0.5mm\(\hskip-0.5mmk\hskip-0.mm-\hskip-1.mm\frac{\i}{2}\hskip-.5mm\)^{
\hskip-1.mm 2}\)
\hskip-0.5mm=\hskip-0.5mm
V^{(mod)}_{[1]}\hskip-.5mm(x,\hskip-.5mm t; \hskip-.5mm k)V^{(mod)}_{[2]}\hskip-.5mm(\hskip-.5mmx\hskip-.5mm,
t\hskip-.5mm;\hskip-.5mmk\hskip-.5mm)
\hskip-1.5mm\(\hskip-1.5mm 1 \hskip-1.5mm + \hskip-1.5mm \mathrm{O} \hskip-.5mm \( \hskip-.5mm t^{-\frac12} \hskip-.5mm \) 
\hskip-1.5mm\)\hskip-.5mm.$

\begin{subequations}
\label{xyum_parametrix repre}
\vskip-27mm\begin{eqnarray}
&&
\\
\nonumber
&&
\nonumber\\&&
\nonumber\\&&
\end{eqnarray}
\end{subequations}
The corresponding borders between the different asymptotic sectors in the $x,t$-half-plane are expressed in terms 
of the 
corresponding borders in the $y,t$-half-plane as follows:

$\zeta\equiv\frac{x}{\omega t}=\xi-\frac{2\i}{\omega}\(g-g_{\r}\)\(\frac{\i}{2},\xi\)+\mathrm{O}\(t^{-1}\),\quad t\to\infty,
$
\vskip-14mm
\begin{equation}{\label{borders_xt_yt_elliptic}}
\end{equation}

\end{Proof}

Finally, Lemma \ref{lemma: u_u^mod} and formulas (\ref{Vmod_ell_1}), (\ref{Vmod_ell_2}), (\ref{Vmod_hyper})
with straight, but cumbersome calculations lead to 
\begin{teor}\label{Teor_Elliptic} \textbf{Elliptic asymptotics (genus 1).} Under Assumptions 1-4 of Section \ref{sect: Introduction}, the asymptotics of the solution of the Cauchy problem  (\ref{CH})-(\ref{init_cond}) can be described as follows.
Let $\varepsilon>0$ be a sufficiently small positive number. Let $\zeta_j, j=2,3,4,$ be as in 
(\ref{xi_j_def_beg})--(\ref{xi_j_def_end}).
\\
Then for $t\to\infty,$ in both of the following two cases (see Figure \ref{Figure_phase_middle_function_plot}):
\begin{enumerate}
\item (case $C_1$) $\frac{c}{\omega}>3$ or $1<\frac{c}{\omega}<3,$ and 
$\(\zeta_2+\varepsilon\)\omega t<x<\(\zeta_3-\varepsilon\)\omega t$,

\item (case $D$) $\frac{c}{\omega}>3,$ $1<\frac{c}{\omega}<3$, or $0<\frac{c}{\omega}<1,$ and 
$\zeta\equiv \frac{x}{\omega t}\in\(\zeta_3+\varepsilon\)\omega t<x<\(\zeta_4-\varepsilon\)\omega t$,

\end{enumerate}

\noindent 
the asymptotics of the solution of the Cauchy problem (\ref{CH})-(\ref{init_cond}) is given parametrically 
by the following formulas:
\\
$
x\hskip-1mm=\hskip-1mmy\hskip-.5mm-\hskip-.5mm2\i t\hskip-.5mm\(\hskip-.5mmg\hskip-1mm-\hskip-1mmg_{\r}\hskip-.5mm\)
\hskip-1mm(\hskip-.5mm\i/2\hskip-.5mm)\hskip-.5mm+\hskip-.5mm\log \hskip-1mm F^2\hskip-1mm\(\hskip-.5mm\i/2,\xi\hskip-.5mm\)
\hskip-.5mm+\log \delta^{2}
\hskip-.5mm(\hskip-.5mm\i/2,\xi\hskip-.5mm)\hskip-.5mm +\hskip-.5mm\log\Lambda^2\(\i/2\)\hskip-1mm+\hskip-1mmE
\hskip-1mm\(\hskip-1mm\frac{tB(\xi)+\Delta(\xi)}{2}\hskip-1mm\)\hskip-.5mm+
$
\begin{equation}\label{elliptic_asymp_x}
\hfill\hskip-.mm+\textrm{O}(t^{-\frac12}),
\end{equation}
\begin{equation}\label{elliptic_asymp_u}u(x,t)\hskip-.5mm=\hskip-.5mm\dsfrac{c(1-d_0^2(\xi)\hat c^{-2})}{(1-4d_0^2(\xi))}
\hskip-.5mm+\hskip-.5mm\Gamma(\xi)\hskip-.5mm\(E'\(\frac{tB(\xi)+\Delta(\xi)}{2}\)-E'(0)\)+\textrm{O}(t^{-\frac12}),\ \xi=\frac{y}{t},
\end{equation}
$\sqrt{\dsfrac{m(x,t)+\omega}{\omega}}
=\sqrt{\dsfrac{1-4d_0^2}{1-4\c^2}}
\dsfrac{G\(\pi+\frac{ tB+\Delta}{2}\)G\(\frac{ tB+\Delta}{2}\)}{G\(\pi\)G(0)}\frac{H^2(0)}{H^2\(\frac{ tB+\Delta}{2}\)}
$
\\
\vskip1mm\noindent
with $g\equiv g_2$ in the first case ($C_1$) and $g\equiv g_3$ in the second case ($D$). Here,
\vskip2mm
$
E(U)=\log\dsfrac{\Theta(\i U-A(i/2)+\pi\i)\Theta(\i U - A(i/2))}{\Theta(\i U + A(i/2)+\pi\i)\Theta(\i U + A(i/2))},
$
\\
\vskip1mm\noindent
$G(U)=\Theta\(A\(\frac{\i}{2}\)+\i U\)\Theta\(A\(\frac{\i}{2}\)-\i U\),\quad H(u)=\Theta\(\frac{\pi\i}{2}+\i U\)\Theta\(\frac{\pi\i}{2}-\i U\),$
\[\Gamma(\xi) = \frac{\pi\omega}{4 \w\(\frac\i 2\) \displaystyle\int_0^{d_0}\frac{\d s}{\sqrt{(\c^2-s^2)(d_0^2-s^2)}} },\quad \hat c=\sqrt{\frac{c}{4(c+\omega)}}.\]

\end{teor}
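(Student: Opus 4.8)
The plan is to combine Lemma~\ref{lemma: u_u^mod} with the explicit formulas (\ref{Vmod_ell_1}), (\ref{Vmod_ell_2}) for the elliptic model solution. First I would recall that Lemma~\ref{lemma: u_u^mod} reduces the asymptotics of $u(x,t)$ and $m(x,t)$ to the corresponding quantities computed from the model RH solution, up to the shift $\tilde x(x,t)$ and an error $\mathrm{O}(t^{-1/2})$. The key mechanism is the reconstruction formula (\ref{M1*M2}): the value of the product $V^{(mod)}_{[1]}(x,t;k)V^{(mod)}_{[2]}(x,t;k)$ at $k=\i/2$, together with its expansion to order $(k-\i/2)^2$, encodes $\sqrt{(m+\omega)/\omega}$ and $u(x,t)$. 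So the first task is to substitute $k=\i/2$ and its neighborhood into (\ref{Vmod_ell_1})--(\ref{Vmod_ell_2}) and read off the coefficients.

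Concretely, I would carry out the following steps. (1) Evaluate the prefactor $\sqrt[4]{(k^2+d_0^2)/(k^2+\c^2)}$ and the theta-function ratios at $k=\i/2$; the combination $V^{(mod)}_{[1]}V^{(mod)}_{[2]}$ at $k=\i/2$ gives $\sqrt{(1-4d_0^2)/(1-4\c^2)}$ times a ratio of theta functions, which after introducing the abbreviations $G(U)$ and $H(U)$ becomes the stated formula for $\sqrt{(m+\omega)/\omega}$. (2) Expand $V^{(mod)}_{[1]}/V^{(mod)}_{[2]}$ near $k=\i/2$; the logarithmic derivative of this ratio, together with the derivatives of $A(k)$ at $\i/2$, produces the function $E(U)$ and its derivative $E'(U)$. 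Matching with the expansion in (\ref{M1*M2}b), the coefficient of $(k-\i/2)$ yields $u(x,t)$ as the constant $\frac{c(1-d_0^2\hat c^{-2})}{1-4d_0^2}$ (coming from the $\xi$-independent part, i.e. evaluating at $tB+\Delta=0$, which reproduces the background level) plus $\Gamma(\xi)\bigl(E'(\tfrac{tB+\Delta}{2})-E'(0)\bigr)$. The constant $\Gamma(\xi)$ is identified by computing $\frac{dA}{dk}$ at $k=\i/2$, i.e. $2\pi\i\bigl(\mathrm{w}(\i/2)\bigr)^{-1}\bigl(\int_a \frac{dk}{\mathrm{w}(k)}\bigr)^{-1}$, and rewriting the $a$-period as $4\int_0^{d_0}\frac{ds}{\sqrt{(\c^2-s^2)(d_0^2-s^2)}}$ via the standard elliptic substitution. (3) For the $x$-equation, I would use (\ref{M1*M2}a) together with the chain of transformations recorded in the proof of Lemma~\ref{lemma: u_u^mod}: $x-y$ equals $\log$ of $V^{(mod)}_{[1]}/V^{(mod)}_{[2]}$ at $\i/2$ plus $\log F^2(\i/2,\xi)+\log\delta^2(\i/2,\xi)+\log\Lambda^2(\i/2)-2\i t(g-g_{\r})(\i/2)$, and the theta-function part of $V^{(mod)}_{[1]}/V^{(mod)}_{[2]}$ at $\i/2$ is exactly $E\bigl(\tfrac{tB+\Delta}{2}\bigr)$, giving (\ref{elliptic_asymp_x}). (4) Finally I would note that the two cases $C_1$ and $D$ differ only in which phase function $g$ (hence which parameters $d_0,d_1=0,\mu_0,k_1$ or $\mu_1$) is used, so the same formulas hold verbatim with $g\equiv g_2$ or $g\equiv g_3$ and the corresponding $d_0(\xi)$, $B(\xi)$, $\Delta(\xi)$; the validity ranges $(\zeta_2+\varepsilon)\omega t<x<(\zeta_3-\varepsilon)\omega t$ and $(\zeta_3+\varepsilon)\omega t<x<(\zeta_4-\varepsilon)\omega t$ come from (\ref{borders_xt_yt_elliptic}) translating the $\xi$-intervals into $\zeta$-intervals, using that $g-g_{\r}$ is bounded and slowly varying there.

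The main obstacle I anticipate is bookkeeping rather than conceptual: correctly tracking the half-periods, the Riemann constant shifts (the $\pi\i$ terms and the $\frac{\pi\i}{2}$ arguments in $H$), and the precise identification of $E'(0)$ so that $u$ has the correct constant background — one must check that at $tB+\Delta=0$ the elliptic contribution vanishes and only the finite-gap "mean" level $\frac{c(1-d_0^2\hat c^{-2})}{1-4d_0^2}$ survives, which requires using the specific location of $A(\i/2)$ relative to the theta divisor and the symmetry $\Theta(-z)=\Theta(z)$. A secondary technical point is justifying that the error terms from Lemma~\ref{Prop_apr_estimates}, which are $\mathrm{O}(t^{-1/2})$ at the level of $V_{err}$, propagate to $\mathrm{O}(t^{-1/2})$ in $u$ and in the parametric variable $x$; this follows because all entries of $V^{(mod)}$, $M^{(mod)}$ and the scalar factors $F,\delta,\Lambda$ are bounded away from $0$ and $\infty$ near $k=\i/2$ uniformly in the stated $\xi$-range, so the expansion coefficients depend continuously (and with controlled derivatives in $k$) on the RH data. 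Once these are in place, the formulas (\ref{elliptic_asymp_x})--(\ref{elliptic_asymp_u}) are obtained by direct substitution.
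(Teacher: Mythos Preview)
Your proposal is correct and follows essentially the same route as the paper: the paper states that the theorem follows from Lemma~\ref{lemma: u_u^mod} together with formulas (\ref{Vmod_ell_1}), (\ref{Vmod_ell_2}) ``with straight, but cumbersome calculations,'' and what you have written is precisely an outline of those calculations (evaluating and expanding $V^{(mod)}_{[1]}V^{(mod)}_{[2]}$ and $V^{(mod)}_{[1]}/V^{(mod)}_{[2]}$ at $k=\i/2$, identifying the theta-function pieces as $E$, $G$, $H$, and reading off $\Gamma(\xi)$ from $A'(\i/2)$). Your anticipated obstacles are exactly the bookkeeping the paper suppresses under the word ``cumbersome.''
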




\begin{teor}\label{Teor_HyperElliptic} \textbf{Hyperelliptic asymptotics (genus 2).}
Under Assumptions 1-4 of Section \ref{sect: Introduction} the asymptotics of the solution of the Cauchy problem  (\ref{CH})-(\ref{init_cond}) can be described as follows.
\\
Let $\frac{c}{\omega}>3$ and $\varepsilon>0$ be a sufficiently small positive number.
Let $\zeta_2$ be as in (\ref{xi_j_def_beg})--(\ref{xi_j_def_end}).
\noindent
Then for $t\to\infty$ and  
 $\(\zeta_1+\varepsilon\)\omega t<x<\(\zeta_2-\varepsilon\)\omega t$ (case $B_1$ in Figure 
\ref{Figure_phase_middle_function_plot}),
the asymptotics of the solution of the Cauchy problem (\ref{CH})-(\ref{init_cond}) is given 
parametrically in terms of the modulated hyperelliptic functions:
\noindent 
\begin{eqnarray}\nonumber x\hskip-1mm&=&\hskip-1mmy\hskip-.5mm-\hskip-.5mm2\i t\hskip-1mm\(g\hskip-1mm-\hskip-1mmg_{\r}\)
\hskip-1mm\(\frac\i2\)+\log F^2\hskip-1mm\(\frac\i2,\xi\)+\log \delta^{2}\(\frac\i2,\xi\) +
\log\Lambda^2\(\frac\i2\)\hskip-1mm
\\
&+&\hskip-1mmE\hskip-1mm\(\hskip-1mm\frac{tB(\xi)+\Delta(\xi)}{2}\hskip-1mm\)+o(1),
\label{hyperelliptic_asymp_x}\end{eqnarray}
\begin{eqnarray}\nonumber
\frac{2\i u}{\omega}&=&\frac{8\i(\c^2-d_0^2+d_1^2-8\c^2d_1^2+16\c^2d_0^2d_1^2)}{(1-4\c^2)(1-4d_0^2)(1-4d_1^2)}+
\\\nonumber
&+&H\(A(\i/2),A(\widehat D_1),\frac{tB+\Delta}{2}\)-H\(-A(\i/2),A(\widehat D_1),\frac{tB+\Delta}{2}\)+
\\\nonumber
&+& H\(A(\i/2),A(\widehat D_2),\frac{tB+\Delta}{2}\)-H\(-A(\i/2),A(\widehat D_2),\frac{tB+\Delta}{2}\)
\\\nonumber
&-&H\(A(\i/2),A(\widehat D_1),0\)+H\(-A(\i/2),A(\widehat D_1),0\)\\
&-&H\(A(\i/2),A(\widehat D_2),0\)+H\(-A(\i/2),A(\widehat D_2),0\),\label{hyperelliptic_asymp_u}\end{eqnarray}

$\sqrt{\dsfrac{m+\omega}{\omega}}=\sqrt{\dsfrac{1-4d_0^2}{1-4\c^2}\cdot\dsfrac{1}{1-4d_1^2}}
\cdot\dsfrac{G\(\frac{tB+\Delta}{2}\)}{G(0)}\dsfrac{H^2(0)}{H^2(\frac{tB+\Delta}{2})}.$

Here, \begin{eqnarray*}\textstyle E(U)=\log\left[\frac{\Theta\(\i U\begin{pmatrix}1\\1\end{pmatrix}-A(\i/2)+A(\widehat D_1)+K\)}{\Theta\(\i U\begin{pmatrix}1\\1\end{pmatrix}+A(\i/2)+A(\widehat D_1)+K\)}\right.\times
\\
\times\frac{\Theta\(\i U\begin{pmatrix}1\\1\end{pmatrix}-A(\i/2)+A(\widehat D_2)+K\)}{\Theta\(\i U\begin{pmatrix}1\\1\end{pmatrix}+A(\i/2)+A(\widehat D_2)+K\)}\times
\end{eqnarray*}
$\times\dsfrac{\Theta\(A(\i/2)+A(\widehat D_1)+K\)}
{\Theta\(-A(\i/2)+A(\widehat D_1)+K\)}
\times\left.\dsfrac{\Theta\(A(\i/2)+A(\widehat D_2)+K\)}
{\Theta\(-A(\i/2)+A(\widehat D_2)+K\)}\right]
$
\noindent
\begin{eqnarray}\textstyle \nonumber H(U_1,U_2,U_3)&=&\frac{\partial_1\Theta\(U_1-U_2-K-\i U_3\begin{pmatrix}1\\1\end{pmatrix}\)\dsfrac{c_1\i/2+c_2}{\w(\i/2)}}
{\Theta\(U_1-U_2-K-\i U_3\begin{pmatrix}1\\1\end{pmatrix}\)}+
\\
&+&
\frac{\partial_2\Theta\(U_1-U_2-K-\i U_3\begin{pmatrix}1\\1\end{pmatrix}\)\dsfrac{c_1\i/2-c_2}{\w(\i/2)}}
{\Theta\(U_1-U_2-K-\i U_3\begin{pmatrix}1\\1\end{pmatrix}\)}
\end{eqnarray}

\noindent and 
\begin{eqnarray*}\textstyle G(U)=\Theta\(A(\frac\i 2)-A(\widehat D_1)-K-\i U\begin{pmatrix}1\\1\end{pmatrix}\)\Theta\(-A(\frac\i 2)-A(\widehat D_1)-K-\i U\begin{pmatrix}1\\1\end{pmatrix}\)\\\times
\Theta\(A(\frac\i 2)-A(\widehat D_2)-K-\i U\begin{pmatrix}1\\1\end{pmatrix}\)\Theta\(-A(\frac\i 2)-A(\widehat D_2)-K-\i U\begin{pmatrix}1\\1\end{pmatrix}\),
\end{eqnarray*}
\begin{eqnarray*}\textstyle
H(U)=\Theta\(A(\infty)-A(\widehat D_1)-K-\i U\begin{pmatrix}1\\1\end{pmatrix}\)
\Theta\(A(\infty)-A(\widehat D_2)-K-\i U\begin{pmatrix}1\\1\end{pmatrix}\).
\end{eqnarray*}

\end{teor}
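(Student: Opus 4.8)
The plan is to follow the same route as in the proof of Lemma \ref{lemma: u_u^mod} and Theorem \ref{Teor_Elliptic}, but now using the genus-2 model solution $\widehat V(\xi,t;k)$ from (\ref{Vmod_hyper}) in place of the genus-1 one. First I would invoke Lemma \ref{Prop_apr_estimates} in case $B_1$: the correction function $V_{err}=V^{(6)}(M^{(mod)})^{-1}$ satisfies $V_{err}=(1,1)+o(1)$ together with the analogous estimates for the coefficients of $(k-\i/2)$ and $(k-\i/2)^2$, uniformly for $(\zeta_1+\varepsilon)\omega t<x<(\zeta_2-\varepsilon)\omega t$; note that here the estimate is only $o(1)$ rather than $\mathrm{O}(t^{-1/2})$ because of the contribution of the endpoints $\pm\i d_1$ lying on the imaginary axis, which is why (\ref{hyperelliptic_asymp_x}) carries an $o(1)$ error. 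Then, retracing Steps 1--6 of Section \ref{sect_RH problem transformations} exactly as in the beginning of the proof of Lemma \ref{lemma: u_u^mod}, I obtain in a neighbourhood of $k=\i/2$ the identity $V=V^{(6)}F^{\sigma_3}\delta^{\sigma_3}\Lambda^{\sigma_3}\e^{-\i t(g-g_{\r})\sigma_3}$ with $g\equiv g_1$, and hence $V^{(6)}=V^{(mod)}(I+o(1))$ with $V^{(mod)}=\widehat V$ the genus-2 solution.

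Next I would substitute this into the reconstruction formulas (\ref{M1*M2}a,b). The first formula gives $\e^{x-y}$ as the ratio $\widehat V_{[1]}/\widehat V_{[2]}$ evaluated at $k=\i/2$, multiplied by $F^2(\i/2,\xi)\delta^2(\i/2,\xi)\Lambda^2(\i/2)\e^{-2\i t(g-g_{\r})(\i/2)}(1+o(1))$; taking logarithms and using the explicit theta-quotient form of $\widehat V$ produces (\ref{hyperelliptic_asymp_x}) with the function $E(U)$ as displayed, after identifying $\widehat V_{[1]}/\widehat V_{[2]}|_{\i/2}$ with the product of the four theta quotients built from $A(\i/2)$, $A(\widehat D_1)$, $A(\widehat D_2)$, $K$ and the modulation vector $\tfrac12(\i tB+\i\Delta)(1,1)^T$. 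The second formula (\ref{M1*M2}b) gives, by reading off the coefficient of $(k-\i/2)^0$, the quantity $\sqrt{(m+\omega)/\omega}$ as the stated product of $G$ and $H$ theta factors times the constant $\sqrt{\tfrac{1-4d_0^2}{(1-4\c^2)(1-4d_1^2)}}$ coming from $\gamma(k)$; reading off the coefficient of $(k-\i/2)^1$ gives $\tfrac{2\i}{\omega}u$ as the logarithmic derivative of $\widehat V_{[1]}\widehat V_{[2]}$ at $k=\i/2$. The latter derivative splits into a contribution from $\gamma(k)$ — yielding the explicit rational constant $\tfrac{8\i(\c^2-d_0^2+d_1^2-8\c^2d_1^2+16\c^2d_0^2d_1^2)}{(1-4\c^2)(1-4d_0^2)(1-4d_1^2)}$ — plus contributions from $\log\widehat\varphi_1+\log\widehat\varphi_2$ and $\log\widehat\psi_1+\log\widehat\psi_2$, each of which, differentiated via the chain rule through the Abel map (with $\d A/\d k = (c_1k\pm c_2)/\w(k)$), gives the function $H(U_1,U_2,U_3)$ as displayed, evaluated at the four combinations of $\pm A(\i/2)$ with $A(\widehat D_1)$, $A(\widehat D_2)$ and at modulation arguments $\tfrac12(tB+\Delta)$ and $0$; subtracting the $t$-independent value (needed because $u\to$ background as $t\to\infty$) yields (\ref{hyperelliptic_asymp_u}). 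Finally, the $x,t$-border translates $\zeta_j=\xi_j-\tfrac{2\i}{\omega}(g-g_{\r})(\i/2,\xi_j)+o(1)$ exactly as in (\ref{borders_xt_yt_elliptic}).

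The main obstacle, as the authors themselves flag, is not any single estimate but the bookkeeping of the theta-functional calculus on the genus-2 surface: correctly evaluating the Abel images (\ref{divisors_particular}) of the special points $\i/2$, $\infty$, $0^+$ and $\pm\i d_0$, keeping track of the vector of Riemann constants $K$ and the symmetry $\Theta(z_1,z_2)=\Theta(z_2,z_1)$, and verifying that the various theta-quotients combine so that spurious poles cancel and the reconstructed $\sqrt{(m+\omega)/\omega}$ and $u$ come out real and with the stated normalization. One must also check that the $o(1)$ error from the $\pm\i d_1$ neighbourhoods does not interfere with the modulation (which varies on the slow scale $\xi=y/(\omega t)$), so that the parametric representation is self-consistent; this is routine given Lemma \ref{Prop_apr_estimates}, but the algebraic simplification leading to the closed forms of $E$, $G$, $H$, $H(U_1,U_2,U_3)$ is long and is where essentially all the work lies.
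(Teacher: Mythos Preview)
Your proposal is correct and follows essentially the same route as the paper, which does not give a separate proof for this theorem but simply states that Lemma~\ref{lemma: u_u^mod} together with the genus-2 model solution (\ref{Vmod_hyper}) yields the result by ``straight, but cumbersome calculations''; you have correctly identified all the ingredients (the chain of transformations back to $V$, the reconstruction formulas (\ref{M1*M2}a,b), the splitting of the logarithmic derivative at $k=\i/2$ into the $\gamma(k)$-part and the theta-quotient part via the Abel map) and flagged the bookkeeping as the only real labour. One small correction: Lemma~\ref{Prop_apr_estimates} in the paper asserts the $\mathrm{O}(t^{-1/2})$ bound uniformly for \emph{all} $m=1,2,3$, hence also in the hyperelliptic sector $B_1$; the endpoints $\pm\i d_1$ are of the same Airy type as $\pm\i d_0$ and contribute at worst $\mathrm{O}(t^{-1/2})$ (cf.\ the comment following that lemma), so the $o(1)$ in (\ref{hyperelliptic_asymp_x}) is just a weaker statement rather than a genuinely worse error as you suggest.
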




Using the formulas (\ref{RH problem transformations}), (\ref{xyum_parametrix repre}) and (\ref{borders_xt_yt_elliptic}),
we get 
\begin{teor}\label{Teor_constant}
Under Assumptions 1-4 of Section \ref{sect: Introduction} the asymptotics of the solution of the Cauchy problem  (\ref{CH})-(\ref{init_cond}) can be described as follows.
\\
Let $\varepsilon>0$ be a sufficiently small positive number.

Then for $t\to\infty,$ in all of the following three cases (see Figure \ref{Figure_phase_middle_function_plot}):

\begin{enumerate}
\item (case $A$) $\frac{c}{\omega}\in(0,1)\cup(1,3)\cup(3,\infty)$ and $
-\infty<x<\(\frac{3c-\omega}{4\omega}
-\varepsilon\)\omega t,$

\item (case $B_2$) $\frac{c}{\omega}\in(1,3)$ and $
\(\frac{3c-\omega}{4\omega}+\varepsilon\)\omega t< x<\(
\frac{2\omega^2-c\omega+c^2}{\omega(c+\omega)}
-\varepsilon\)\omega t,$

or

$\frac{c}{\omega}\in(0,1)$ and $
\(\frac{3c-\omega}{4\omega}+\varepsilon\)\omega t <x<\(
\frac{c}{\omega}
-\varepsilon\)\omega t,$

\item (case $C_2$) $\frac{c}{\omega}\in(0,1)$ and $
\(\frac{c}{\omega}+\varepsilon\)\omega t<x<\(
\frac{2\omega^2-c\omega+c^2}{\omega(c+\omega)}
-\varepsilon\)\omega t,$

\end{enumerate}

\noindent 
the main term of asymptotics of the solution of the initial value problem (\ref{CH}), (\ref{init_cond}), (\ref{conditions u(x,t)}), (\ref{cond_m<c})  is
equal to the initial constant:
\[
u(x,t)=c+\mathrm{O}\(t^{-1/2}\).
\]
\[
m(x,t)=c+\mathrm{O}\(t^{-1/2}\).
\]
In the $1^{st}$ case ($A$), the estimate can be refined:
\[
u(x,t)=c+\mathrm{O}\(\e^{-Ct}\).
\]
\[
m(x,t)=c+\mathrm{O}\(\e^{-Ct}\),
\]
where $C=C(\varepsilon)>0$ is a positive number.

\end{teor}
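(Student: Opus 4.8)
The plan is to run the chain of Riemann--Hilbert transformations of Steps 1--7 of Section \ref{sect_RH problem transformations} specialized to the three sectors $A$, $B_2$, $C_2$ of Figure \ref{Figure_phase_middle_function_plot}. In all three the relevant phase function is $g=g_{\l}(k,\xi)$ and the parameters of the $g$-function have degenerated to $d_0=d_1=0$, so that after Steps 1--6 the jump of $J^{(6)}$ on $[\i\c,-\i\c]$ is brought (by the $\delta$- and $F$-transformations; cf. Comment \ref{deltaF=1} for the explicit form of $\delta F$ in case $A$) to the constant anti-diagonal jump of the trivial model problem of Step 7c, whose solution is $V^{(mod)}(\xi,t;k)=\(\sqrt[4]{k^2/(k^2+\c^2)},\ \sqrt[4]{k^2/(k^2+\c^2)}\)$; in particular $u^{(mod)}\equiv c$ and $m^{(mod)}\equiv c$. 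The reconstruction at $k=\i/2$ is then a one-line check: writing $h(k):=\sqrt{k^2/(k^2+\c^2)}=V^{(mod)}_{[1]}(k)V^{(mod)}_{[2]}(k)$ and using $\c^2=\frac{c}{4(c+\omega)}$, hence $1-4\c^2=\omega/(c+\omega)$ (see (\ref{chat})), one gets
\[h(k)=\sqrt{\dsfrac{c+\omega}{\omega}}\(1+\dsfrac{2\i c}{\omega}\(k-\dsfrac{\i}{2}\)+\mathrm{O}((k-\i/2)^2)\),\qquad k\to\dsfrac{\i}{2},\]
which, matched against the reconstruction formula (\ref{M1*M2}b), yields exactly $m=c$ and $u=c$; and since the two components of $V^{(mod)}$ coincide, the ratio in (\ref{M1*M2}a) equals $1$, so $x$ and $y$ are tied by the (slowly varying, harmless) parametric relation $\e^{x-y}=F^2(\i/2)\,\delta^2(\i/2,\xi)\,\Lambda^2(\i/2)\,\e^{-2\i t(g-g_{\r})(\i/2)}(1+o(1))$.

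Next I would supply the a priori estimate by the same argument as in the sketch of proof of Lemma \ref{Prop_apr_estimates} (here even simpler, the model jump being the constant anti-diagonal one). One checks that $J^{(6)}$ differs from the model jump only on the lens contours $L_j$ and near the nonuniformity points --- the endpoints $\pm\i\c$, and, in cases $B_2$, $C_2$, the real stationary points $\pm k_0$ --- where it is highly oscillatory but uniformly bounded; the standard estimates then give $V_{err}:=V^{(6)}\(M^{(mod)}\)^{-1}=(1,1)+\mathrm{O}(t^{-1/2})$, hence $V^{(6)}=V^{(mod)}\(I+\mathrm{O}(t^{-1/2})\)$, uniformly in $k$ and in $\xi$ bounded away from the transition values $\xi_j$ of (\ref{xi_j_def_beg})--(\ref{xi_j_def_end}). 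Feeding this into the transformation chain at $k=\i/2$ as in the proof of Lemma \ref{lemma: u_u^mod}, and converting the borders $\xi_j$ into the $\zeta_j$ by (\ref{borders_xt_yt_elliptic}) (which reproduces precisely the $\zeta$-intervals listed in the statement), I obtain $u(x,t)=c+\mathrm{O}(t^{-1/2})$ and $m(x,t)=c+\mathrm{O}(t^{-1/2})$ in all three sectors.

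For the refinement in case $A$ I would use that there the signature table of $\Im g_{\l}(k,\xi)$ is the ``clean'' one: the curve $\Im g_{\l}=0$ does not meet $[\i\c,-\i\c]$ and there are no real stationary phase points, so after opening lenses the jump $J^{(6)}$ is not merely bounded but exponentially close to the identity away from $[\i\c,-\i\c]$. Then $V_{err}$ solves a small-norm Riemann--Hilbert problem with jump $I+\mathrm{O}(\e^{-Ct})$, whence $V_{err}-(1,1)=\mathrm{O}(\e^{-Ct})$ uniformly, and repeating the reconstruction step gives $u(x,t)=c+\mathrm{O}(\e^{-Ct})$ and $m(x,t)=c+\mathrm{O}(\e^{-Ct})$ with $C=C(\varepsilon)>0$.

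The hard part is the a priori estimate at the nonuniformity points. At the endpoints $\pm\i\c$ one avoids local parametrices altogether by pushing the lens contours slightly away from them, so that the oscillatory jumps there become exponentially small (as in the sketch of proof of Lemma \ref{Prop_apr_estimates}), and at the real stationary points $\pm k_0$ (present only in cases $B_2$, $C_2$) one controls the parabolic-cylinder behaviour. In every sector only the crude $\mathrm{O}(t^{-1/2})$ bound is needed, which is precisely why no explicit parametrices have to be written down and the remaining content of the proof is the elementary evaluation of the constant model solution at $k=\i/2$ carried out above.
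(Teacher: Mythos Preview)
Your proposal is correct and follows the same route as the paper: specialize the transformation chain of Steps 1--7 to the cases $A,B_2,C_2$ with $g=g_{\l}$ and $d_0=d_1=0$, read off $u^{(mod)}=m^{(mod)}=c$ from the trivial model solution of Step 7c, invoke the error bound of Lemma \ref{Prop_apr_estimates} (and its exponential sharpening when no real stationary points are present), and convert the $\xi$-borders to the $\zeta$-borders via (\ref{borders_xt_yt_elliptic}). Your write-up is in fact more explicit than the paper's own proof, which simply cites (\ref{RH problem transformations}), (\ref{xyum_parametrix repre}), (\ref{borders_xt_yt_elliptic}) and then records the values of $\zeta_j$; one minor slip is that in case $B_2$ both pairs $\pm k_0$ and $\pm k_1$ of real stationary points are present (see the table in Step 4), but the same estimate handles them.
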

\begin{proof}
Indeed, the corresponding borders in the $x,t$-half-plane $\zeta=\frac{x}{\omega t}$ in terms of the borders in the $y,t$-half-plane 
$\xi=\frac{y}{\omega t}$ are calculated via (\ref{borders_xt_yt_elliptic}):
\begin{enumerate}
\item Let $\frac c\omega\in(0,1)\cup(1,3)\cup(3,\infty).$ Then $\xi_1=-\frac{1}{4}\(\frac{c+\omega}{\omega}\)^{3/2}$ and 
$\zeta_1=\frac{3c-\omega}{4\omega}.$

\item Let $\frac c\omega\in(1,3).$ Then $\xi_2=\frac{-2(c-\omega)}{\sqrt{\omega(c+\omega)}}
$ and 
$\zeta_2=\frac{2\omega^2-c\omega+c^2}{\omega(c+\omega)}.$

\item Let $\frac c\omega\in(0,1).$ Then $\xi_2=0$, $\xi_3=\frac{-2(c-\omega)}{\sqrt{\omega(c+\omega)}}
$ and 
$\zeta_2=\frac{c}{\omega}$, $\zeta_3=\frac{2\omega^2-c\omega+c^2}{\omega(c+\omega)}.$
\end{enumerate}

\end{proof}


\section{Asymptotics in the domains $\frac{x}{\omega t}>2\(\frac{c}{\omega} + 1\)$ and  $\frac{x}{\omega
t}<\frac{3c-\omega}{4\omega}.$}\label{sect: Asymptotics} 
For the full description of asymptotic behavior of solution to the Cauchy problem (\ref{CH}), (\ref{init_cond}), 
we also include the 
following theorem, which describes the asymptotics in the soliton region. 
These asymptotics have been studied in (\cite{M15}). The short sketch for the solitonic region is as 
follows: we follow all Steps 1-4 in RH problem transformations Section \ref{sect: Asymptotics} with initial function 
$g_{\r}$ with the 
following modifications for the partial transmission coefficient $\Lambda(k):$
$$\Lambda (k,\xi) = \prod\limits_{\kappa_0(\xi)+\sigma<\kappa_j}\frac{k+\i\kappa_j}{k-\i\kappa_j}\ ,
\quad \xi\geq 2\(\frac{c}{\omega} + 1\),\quad\textrm{ where
}\kappa_0:=\frac{1}{2}\sqrt{1-2/\xi}\ .$$
Here $\sigma>0$ is a sufficiently small number.
Skipping the $\delta$-transformation of Step 4, Step 5 leads to a problem with jump matrix close to the identity matrix everywhere
 except for the parts of the contour close to poles $\i\kappa_j$ of the transmission coefficient.
Turning back to the explicitly solvable meromorphic problem with two poles and identity jump matrix, we come to
\begin{teor}\label{Teor: soliton_ asympt}Under Assumptions 1-4 of Section \ref{sect: Introduction}, the asymptotics of the solution of the Cauchy problem  (\ref{CH})-(\ref{init_cond}) can be described as follows.
\\
Let $v_j = \dsfrac{2}{1-4\kappa_j^2}, j=1,...,N$ and let $\varepsilon>0$ be
sufficiently small such that $|\kappa_j-\kappa_l|>4\varepsilon$
for any $j\neq l.$ Then in the soliton region $x\geq \(2\(\frac{c}{\omega} + 1\)+\delta\)\omega
t$ one has as $t\to\infty$
\\
\vskip1mm\noindent
\textbf{1.} if $\left|\dsfrac{x}{\omega t}-v_j\right|<\varepsilon$ for
some $j$, then
\\
$$u(x,t)=u_{\kappa_j,\delta,v_j,x_j}^{(sol)}(x,t)+\mathrm{O}(\e^{-C t}),\quad
m(x,t)=m_{\kappa_j,\delta,v_j,x_j}^{(sol)}(x,t)+\mathrm{O}(\e^{-C t}),$$
where $u_j^{(sol)}(x,t)$, $m_j^{(sol)}(x,t)$ are given below by
(\ref{soliton: y}) - (\ref{soliton: m}) and $C>0$ is a positive
constant; 
\\\textbf{2.} if $\left|\dsfrac{x}{\omega
t}-\dsfrac{2}{1-4\kappa_j^2}\right|\geq\varepsilon$ for all $j$,
then

$\hskip2cm u(x,t)=\mathrm{O}(\e^{-C t}),$
$m(x,t)=\mathrm{O}(\e^{-C t}).$
\\
Here $C = C(\varepsilon)$ is a positive constant
and $u_{\kappa,\delta,v,x_j}$, $u_{\kappa,\delta,v,x_j}$ are given parametrically by the formulas
\begin{eqnarray}\label{soliton: y}
&&x =  y + \log \dsfrac{1+\delta\e^{-2\kappa\(y-\omega v t\)}
\frac{1+2\kappa}{1-2\kappa}}{1+\delta\e^{-2\kappa\(y-\omega v t\)}
\frac{1-2\kappa}{1+2\kappa}} +x_j
\\&&
\dsfrac{m_{\kappa,\delta,v,x_j}^{(sol)}(x,t)+\omega}{\omega} = \(1
+ \dsfrac{16\kappa^2}{1 - 4 \kappa^2}\dsfrac{\delta\e^{-2\kappa(y
- \omega v t)}}{\(1+\delta\e^{-2\kappa\(y-\omega v t\)}\)^2}\)^2,
\\&&
\dsfrac{u_{\kappa,\delta,v,x_j}^{(sol)}(x,t)}{\omega} =
\dsfrac{\dsfrac{32\kappa^2}{(1-4\kappa^2)^2}\delta\e^{-2\kappa\(y-\omega
v t\)}}{\(1 + \delta\e^{-2\kappa\(y-\omega v t\)}\)^2 +
\dsfrac{16\kappa^2}{1-4\kappa^2} \delta\e^{-2\kappa\(y-\omega v
t\)} }, \label{soliton: m}\end{eqnarray}
with
\begin{eqnarray*}
x_j = 2\sum\limits_{\kappa_l\geq\kappa_0+\varepsilon}\log\frac{1 +
2\kappa_l}{1 - 2 \kappa_l}, \quad \delta =
\frac{\gamma^2}{2\kappa_j \Lambda^2(\i\kappa_j, \xi)},\quad
\kappa=\kappa_j,\quad v_j = \dsfrac{2}{1-4\kappa_j^2}.
\end{eqnarray*}

\end{teor}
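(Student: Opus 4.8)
The plan is to obtain the soliton-region asymptotics by specializing the chain of Riemann--Hilbert transformations from Section \ref{sect_RH problem transformations} to the case $\xi\ge 2\(\frac{c}{\omega}+1\)$, where the initial phase function $g_{\r}(k,\xi)$ is already the correct one (case $E$) and the signature table of $\Im g_{\r}$ separates the imaginary poles $\i\kappa_j$ according to their position relative to the stationary point. First I would carry out Steps 1--4 exactly as in the general scheme, but with the modified partial transmission coefficient $\Lambda(k,\xi)=\prod_{\kappa_0(\xi)+\sigma<\kappa_j}\frac{k+\i\kappa_j}{k-\i\kappa_j}$, which only conjugates away those poles lying to the right of the stationary phase point; the poles with $\kappa_j<\kappa_0(\xi)$ are in the region where the corresponding residue exponentials $\e^{2\i t g_{\r}(\i\kappa_j,\xi)}$ are exponentially small and hence can be left alone, while the poles with $\kappa_j>\kappa_0(\xi)+\sigma$ have exponentially large residue factors and must be traded, via the $D(k)$-transformation of Step 3, into triangular jumps on small circles that then become exponentially small.

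Next I would skip the $\delta$-transformation (Step 4), since in the region $\Re k\neq 0$ the reflection coefficient jump on $\R$ is already removable by the lens opening of Step 5 with $\Im g_{\r}$ controlling the signs; after Step 5 the jump matrix is $I+\mathrm{O}(\e^{-Ct})$ on all of $\Sigma^{(5)}$ except on the arcs of the circles $C_j$, $\overline C_j$ surrounding the poles $\i\kappa_j$ with $\kappa_j>\kappa_0(\xi)+\sigma$. When $\left|\frac{x}{\omega t}-v_j\right|\ge\varepsilon$ for all such $j$ the residue constant $\delta=\frac{\gamma_j^2}{2\kappa_j\Lambda^2(\i\kappa_j,\xi)}$ is itself exponentially small or large in a way that makes the contribution exponentially small after an additional local conjugation, so the model problem is trivial and $u,m=\mathrm{O}(\e^{-Ct})$. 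When $\left|\frac{x}{\omega t}-v_j\right|<\varepsilon$ for exactly one $j$, the model problem reduces to the explicitly solvable meromorphic RH problem with a single pair of poles $\pm\i\kappa_j$, identity jump, and normalization $(1,1)$ at infinity; this is the standard one-soliton RH problem whose solution I would write down explicitly and then feed into the reconstruction formulas (\ref{M1*M2}a), (\ref{M1*M2}b) to get $x=y+\cdots$, $\frac{m+\omega}{\omega}$, and $\frac{u}{\omega}$ in the parametric form (\ref{soliton: y})--(\ref{soliton: m}). The shift $x_j=2\sum_{\kappa_l\ge\kappa_0+\varepsilon}\log\frac{1+2\kappa_l}{1-2\kappa_l}$ is precisely the contribution of $\Lambda^2(\i/2,\xi)$ evaluated at the reconstruction point, and $\delta$ absorbs the residue constant together with $\Lambda^2(\i\kappa_j,\xi)$.

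To make the error estimate rigorous I would set $V_{err}=V^{(5)}\(M^{(sol)}\)^{-1}$, where $M^{(sol)}$ is the matrix one-soliton parametrix, observe that $J_{err}-I$ is supported on $\Sigma^{(5)}\setminus\{$poles$\}$ and is $\mathrm{O}(\e^{-Ct})$ there in $L^1\cap L^2\cap L^\infty$, and invoke the small-norm theorem for the associated singular integral equation to get $V_{err}=I+\mathrm{O}(\e^{-Ct})$ uniformly; then expand near $k=\i/2$ as in the proof of Lemma \ref{lemma: u_u^mod}. Most of this is routine once the bookkeeping of which poles are conjugated is fixed. The main obstacle, and the one point requiring care, is the uniformity of the exponential decay near the ``transition'' pole $\i\kappa_j$: one must check that choosing $\sigma$ small (so that $\kappa_0(\xi)+\sigma<\kappa_j$ throughout the $\varepsilon$-band around $v_j$) is compatible with keeping the circles $C_j$ inside $\Im z(k)>0$ and with the residue exponential $\e^{2\i t g_{\r}(\i\kappa_j,\xi)}$ staying bounded (neither exponentially large nor small) precisely on that band; this is exactly the content of the relation $v_j=\frac{2}{1-4\kappa_j^2}$ being the velocity at which $g_{\r}(\i\kappa_j,\xi)$ has vanishing real exponent, and I would verify it by a direct computation with the explicit formula for $g_{\r}(k,\xi)$. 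Since this analysis has essentially been carried out in \cite{M15}, I would reference it for the details and only indicate the specialization here.
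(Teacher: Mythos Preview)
Your proposal is correct and follows essentially the same approach as the paper: use $g_{\r}$ as the phase function (case $E$), modify the partial transmission coefficient to $\Lambda(k,\xi)=\prod_{\kappa_0(\xi)+\sigma<\kappa_j}\frac{k+\i\kappa_j}{k-\i\kappa_j}$, skip the $\delta$-transformation, open lenses so that all jumps become exponentially close to the identity except near the surviving pole pair, and reduce to the explicitly solvable two-pole meromorphic model problem, with the details deferred to \cite{M15}. Your additional remarks on the small-norm error estimate and on the uniformity near the transition pole via the relation $v_j=\frac{2}{1-4\kappa_j^2}$ are the natural elaborations of what the paper only sketches.
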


\textbf{Acknowledgements.} 
We thank Iryna Egorova, Dmitry Shepelsky, Robert Buckingham and Svetlana Roudenko 
for useful discussions.

The research has been supported by the project "Support of inter-sectoral mobility 
and quality enhancement of research teams at Czech Technical University in Prague", CZ.1.07/2.3.00/30.0034, sponsored by 
European Social Fund in the Czech Republic."

\end{document}